\renewcommand\onecolumngrid{
\do@columngrid{one}{\@ne}
\def\set@footnotewidth{\onecolumngrid}
\def\footnoterule{\kern-6pt\hrule width 1.5in\kern6pt}
}
\renewcommand\twocolumngrid{
        \def\footnoterule{
        \dimen@\skip\footins\divide\dimen@\thr@@
        \kern-\dimen@\hrule width.5in\kern\dimen@}
        \do@columngrid{mlt}{\tw@}
}
\newcommand{\Var}{{\rm Var}}
\newcommand{\Cov}{{\rm Cov}}
\newcommand{\Ebb}{\mathbb{E}}
\newtheorem{theorem}{Theorem}
\newtheorem{lemma}[theorem]{Lemma}
\newtheorem{definition}{Definition}
\newcommand{\be}{\begin{equation}}
		\newcommand{\ee}{\end{equation}}
\newcommand{\1}{\mathbbm{1}}
\newcommand{\eqq}[1]{Eq. \eqref{eq:#1}}
\newcommand{\uni}{\boldsymbol{\mathcal{D}}} 
\newcommand{\vol}{\boldsymbol{\mathcal{V}}} 
\newcommand{\mf}{\mathcal{L}} 
\newcommand{\f}{\mathcal{F}}
\newcommand{\vtheta}{\bm{\theta}}
\newcommand{\valpha}{\bm{\alpha}}
\newcommand{\vsigma}{\sigma}
\def\HC{\mathcal{H}}
\def\LC{\mathcal{L}}
\tikzset{every picture/.style=remember picture}
\newcommand{\poly}{\operatorname{poly}}
\renewcommand{\Ebb}{\mathbb{E}}
\newcommand{\AC}{\mathcal{A}}
\newcommand{\FC}{\mathcal{F}}
\newcommand{\OC}{\mathcal{O}}
\newcommand{\SC}{\mathcal{S}}
\renewcommand{\Var}{{\rm Var}}
\renewcommand{\Cov}{{\rm Cov}}
\renewcommand{\geq}{\geqslant}
\renewcommand{\leq}{\leqslant}
\renewcommand{\vec}[1]{\boldsymbol{#1}}  
\newcommand{\bs}{\textsf{BS}}
\newcommand{\thv}{\vec{\theta}}
\def\be{\begin{equation}}
		\def\ee{\end{equation}}
\def\bs{\begin{split}}
		\def\e{\end{split}}
\def\ba{\begin{eqnarray}}
		\def\bea{\begin{eqnarray}}
				\def\tea{\end{eqnarray}}
		\def\ea{\end{eqnarray}}
\def\eea{\end{eqnarray}}
\newtheorem{proposition}{Proposition}
\newcounter{precounter}
\newtheorem{propositionb}{Proposition}[precounter]
\newtheorem*{proposition*}{Proposition}
\def\be{\begin{equation}}
		\def\te{\end{equation}}
\def\ee{\end{equation}}
\def\ba{\begin{eqnarray}}
		\def\bea{\begin{eqnarray}}
				\def\tea{\end{eqnarray}}
		\def\ea{\end{eqnarray}}
\def\eea{\end{eqnarray}}
\newcommand{\revadd}[1]{{\color{black}{{#1}}}}
\begin{document}

\preprint{APS/123-QED}

\title{{Variational quantum simulation: a case study for understanding warm starts} }

\author{Ricard Puig}
\thanks{The first two authors contributed equally to this work.}
\affiliation{Institute of Physics, Ecole Polytechnique F\'{e}d\'{e}rale de Lausanne (EPFL), CH-1015 Lausanne, Switzerland}

\author{Marc Drudis}
\thanks{The first two authors contributed equally to this work.}
\affiliation{Institute of Physics, Ecole Polytechnique F\'{e}d\'{e}rale de Lausanne (EPFL), CH-1015 Lausanne, Switzerland}
\affiliation{IBM Quantum, IBM Research – Zurich, 8803 R\"uschlikon, Switzerland}

\author{Supanut Thanasilp}
\affiliation{Institute of Physics, Ecole Polytechnique F\'{e}d\'{e}rale de Lausanne (EPFL), CH-1015 Lausanne, Switzerland}
\affiliation{Chula Intelligent and Complex Systems, Department of Physics, Faculty of Science, Chulalongkorn University, Bangkok, Thailand, 10330}

\author{Zo\"{e} Holmes}
\affiliation{Institute of Physics, Ecole Polytechnique F\'{e}d\'{e}rale de Lausanne (EPFL), CH-1015 Lausanne, Switzerland}

\date{\today}

\begin{abstract}
The barren plateau phenomenon, characterized by loss gradients that vanish exponentially with system size, poses a challenge to scaling variational quantum algorithms. Here we explore the potential of warm starts, whereby one initializes closer to a solution in the hope of enjoying larger loss variances. Focusing on an iterative variational method for learning shorter-depth circuits for quantum \textit{real time evolution} we conduct a case study to elucidate the potential and limitations of warm starts. We start by proving that the iterative variational algorithm will exhibit substantial (at worst vanishing polynomially in system size) gradients in a small region around the initializations at each time-step. Convexity guarantees for these regions are then established, suggesting trainability for polynomial size time-steps. However, our study highlights scenarios where a good minimum shifts outside the region with trainability guarantees. Our analysis leaves open the question whether such minima jumps necessitate optimization across barren plateau landscapes or whether there exist gradient flows, i.e., fertile valleys away from the plateau with substantial gradients, that allow for training. \revadd{While our main focus is on this case study of variational quantum simulation, we end by discussing how our results work in other iterative settings.}
\end{abstract}

\maketitle

\section{Introduction}

Variational quantum algorithms are a flexible family of quantum algorithms, whereby a problem-specific cost function is efficiently evaluated on a quantum computer, and a classical optimizer aims to minimize this cost by training a parametrized quantum circuit~\cite{cerezo2020variationalreview,bharti2021noisy,endo2021hybrid}. While a popular paradigm the potential of scaling these algorithms to interesting system sizes attracts much debate~\cite{anschuetz2022beyond, cerezo2023does}, in part due to the barren plateau phenomenon~\cite{marrero2020entanglement,sharma2020trainability,patti2020entanglement,wang2020noise,arrasmith2021equivalence,larocca2021diagnosing, holmes2021connecting, cerezo2020cost,rudolph2023trainability,kieferova2021quantum, tangpanitanon2020expressibility,thanaslip2021subtleties,holmes2021barren,martin2022barren,fontana2023theadjoint,ragone2023unified, thanasilp2022exponential, letcher2023tight,xiong2023fundamental, chang2024latent}. Barren plateaus are loss landscapes that concentrate exponentially in system size towards their mean value and thus, with high probability, exhibit exponentially small gradients~\cite{mcclean2018barren, arrasmith2021equivalence}. As quantum losses are computed via measurement shots, on a barren plateau landscape the resources required for training typically scale exponentially, quickly becoming prohibitive.

However, barren plateaus are fundamentally a statement about the landscape on average. They do not preclude the existence of regions of the landscape with significant gradients and indeed, the region immediately around a good minimum, must have such gradients. This has motivated the study of \textit{warm starts} whereby the algorithm is cleverly initialized closer to a minimum. Numerical studies indicate that these methods may be promising~\cite{grimsley2022adapt,dborin2022matrix,rudolph2022synergy,PhysRevA.106.L060401}. In parallel, analytic studies have proven that small angle initializations, whereby the parameterized quantum circuit is initialized in a small region typically around identity, can exhibit non-exponentially vanishing gradients~\cite{zhang2022escaping,park2023hamiltonian, wang2023trainability, park2024hardware, shi2024avoiding}. However, a good solution may be far from this region and thus these methods can (in full generality) only work on a vanishing fraction of problem instances~\cite{nietner2023unifying}.

Here we will consider a family of variational quantum algorithms that inherently use warm starts and take this as a case study to better understand their potential and limitations. In particular, we study an approach for learning shorter depth circuits for simulating quantum systems by iteratively compressing real time evolution circuits~\cite{otten2019noise, benedetti2020hardware, barison2021efficient, lin2021real, berthusen2022quantum, haug2021optimal, gentinetta2023overhead}. 
These approaches effectively use warm starts in virtue of their iterative constructions. At each iteration, the previous solution is used to initialize the parameters to learn a new compressed circuit to implement a slightly longer evolution. 
\revadd{While we focus on these iterative compression algorithms, we will end by arguing that our results can also be applied to any iterative variational quantum algorithm with a fidelity-like loss such as preparing a ground state via imaginary time evolution or learning an unknown target unitary via variational or quantum machine learning methods.}

\begin{figure*}[]
	\center
	\includegraphics[width = 1\textwidth]{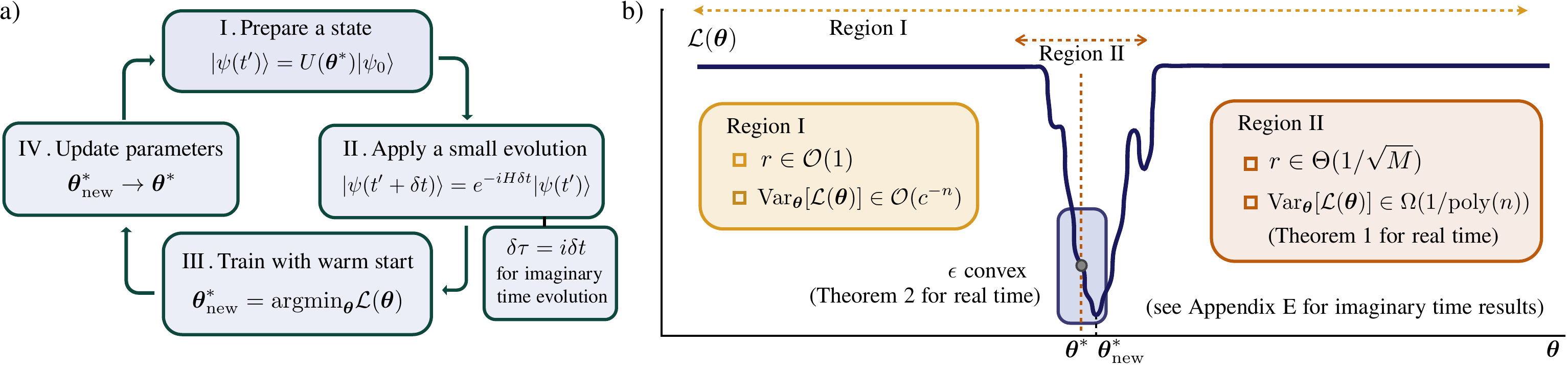}
 \caption{a) Each iteration of the variational compression scheme consists of four steps. Starting from the top: (i) apply the circuit with the last set of parameters $\vtheta^{*}$ to the initial state, (ii) apply $e^{-iH\delta t}$ for a small time-step $\delta t$, (iii) train the circuit initialising your parameters around the previous ones, (iv) update the parameters. b) We sketch a typical representation of a loss function $\mf(\thv)$ with a barren plateau across the full landscape (Region I). In Theorem~\ref{thm:variance-lower-bound} we prove that in a hypercube of width $2r$ with $r \in \Theta\left(\frac{1}{\sqrt{M}}\right)$ (sketched as Region II) the variance of the loss is only polynomially vanishing in system size $n$. In Theorem~\ref{thm:convex} we prove that in a smaller hypercube (highlighted as the blue region) the landscape is approximately convex. Similar results for \revadd{other iterative approaches with a fidelity-based loss} are discussed in Section~\ref{sec:extension-to-other-iterative} with technical details in \revadd{Appendix~\ref{app:extension}}. \revadd{Illustrative examples include preparing a ground state via imaginary time evolution as shown in Appendix~\ref{app:imaginary}, as well as learning an entire unknown unitary via a variational approach (as in Appendix~\ref{app:extension-variational-u}) or a machine learning approach (as in Appendix~\ref{app:extension-qml-u})}
 }
\label{fig:overview}
\end{figure*}  

This case study is an ideal playground for studying warm starts because its inbuilt structure allows one to analytically compute bounds on its trainability. We start by proving that as long as the training in one time-step is successful then the algorithm will exhibit substantial (at worst polynomially vanishing with problem size $n$) gradients in a small (a hypercube of radius $2r$ with $r \in \Omega( \frac{1}{\text{poly}(n)})$) region around the next initialization. We then establish guarantees on the approximate convexity of the gradients in this region and further argue that for polynomially large time-steps the new optimum will typically remain in this convex region. These results are summarised in Fig.~\ref{fig:overview}. 

However, these positive findings are partially tempered by the observation there is no guarantee that a good minimum remains in this region. Namely, there exist cases where a good minimum jumps from the region with trainability guarantees. Our analysis leaves open the question of whether such minima jumps necessitate optimization across barren plateau landscapes or whether there exist valleys away from the barren plateau that allow for training. Such fertile valleys, i.e. small regions away from the plateau with substantial gradients, that allow for successful training are theoretically possible but to what extent they arise in practise is an open question.

\section{Preliminaries}

\subsection{Iterative Variational \revadd{Time-evolution} Compression}\label{sec:alg}

We consider simulating the evolution of some initial state $\ket{\psi_0}$ under a Hamiltonian $H$ up to time $t$. That is, our aim is to implement a quantum circuit that approximates $e^{-i H t} \ket{\psi_0}$. One standard approach \revadd{is} to break the total evolution $t$ into a sequence of $N$ short $\delta t$ evolutions with $t = N \delta t$ such that
\begin{equation}\label{eq:real-time-evolution}
	\revadd{e^{-i H t} \ket{\psi_0} = \prod_{k=1}^{N } e^{-i H \delta t} \ket{\psi_0}}  \, .
\end{equation}
\revadd{For each small time-step evolution $e^{-iH\delta t}$, one could further decompose it by using either Trotter approximation for real-time dynamics~\cite{lloyd1996universal}. For example, for the real-time evolution case, if the Hamiltonian is expressed in Pauli basis as $H = \sum_{i=1}^{N_P} \alpha_i P_i$ where $\alpha_i$ and $P_i$ are associated coefficients and Pauli operators, the first-order Trotter approximation of the entire dynamics can be written as $e^{-i H t} \ket{\psi_0} \approx \prod_{k=1}^{N }\prod_{i=1}^{N_P} e^{-i \alpha_iP_i \delta t} \ket{\psi_0}$ with the approximate error $\mathcal{O}(N \delta t)$.} However, \revadd{all} these approaches are fundamentally limited by the linear growth of circuit depths with time simulated. This has prompted ongoing efforts to find alternative approaches that avoid this linear growth~\cite{trout2018simulating, endo2020variational,yao2020adaptive, otten2019noise, lin2021real, benedetti2020hardware, barison2021efficient, heya2019subspace, cirstoiu2020variational, gibbs2021long, gibbs2022dynamical, eassa2023high, bharti2020quantum,lau2021quantum,haug2020generalized, kokcu2022Fixed, steckmann2021simulating, jamet2021krylov}.

Here we focus on the proposal to use a variational quantum algorithm to compress the depth of the circuit at each iteration of the algorithm~\cite{otten2019noise, benedetti2020hardware, barison2021efficient, lin2021real, berthusen2022quantum, haug2021optimal, gentinetta2023overhead} as shown in Fig.~\ref{fig:overview}. More concretely, at any iteration of the algorithm, one variationally minimizes the following loss:
\begin{equation}\label{eq:loss}
\mf\left(\vtheta\right) = 1 - |\langle \psi_0| U(\vtheta)^\dagger e^{-i H \delta t} U(\vtheta^{*}) |\psi_0 \rangle |^2
\end{equation}
where $U(\thv)$ is a parameterized quantum circuit and $\thv^{*}$ denotes the optimized parameters found at the previous iteration step~\cite{otten2019noise, barison2021efficient, berthusen2022quantum}. At iteration $k$ the loss $\mf\left(\vtheta\right)$ is optimized using a hybrid quantum classical optimization loop to find the next set of optimal parameters $\thv^{*}_{\rm new}$. We note that while we focus on a fidelity loss here other cost functions are possible, e.g. in Ref.~\cite{benedetti2020hardware} they considered the real part of the state overlap and in Ref.~\cite{barison2021efficient} a local fidelity measure, and a similar analysis could be performed in those cases. It is also possible to use this approach to learn circuits to prepare approximate ground states and thermal states by replacing $it$ with $\tau$ in Eq.~\eqref{eq:loss} and with an appropriate choice in initial state $\ket{\psi_0}$~\cite{benedetti2020hardware}.

In practice, for each small evolution, one could further decompose it by using either Trotter approximation for real-time dynamics~\cite{lloyd1996universal}, or a similar procedure for imaginary-time dynamics~\cite{benedetti2020hardware}. For example, for the real-time evolution case, if the Hamiltonian is expressed in Pauli basis as $H = \sum \alpha_i P_i$ where $\alpha_i$ and $P_i$ are associated coefficients and Pauli operators, the first-order Trotter approximation of the entire dynamics can be written as $e^{-i H t} \ket{\psi_0} \approx \prod_{k=1}^{N } e^{-i \alpha_iP_i \delta t} \ket{\psi_0}$ with the approximate error $\mathcal{O}(N \delta t)$. \revadd{However our results are agnostic to the method used to implement the evolution step.}

The success of this protocol depends on a variety of factors including the choice of \textit{ansatz} for the parameterized circuit. Here we focus on a general ansatz of the form
\begin{align}\label{eq:circuit}
	U\left( \vtheta \right) = \prod_{i=1}^M V_i U_i(\theta_i)
\end{align}
where $\left\{ V_i \right\}_{i=1}^M$ are a set of fixed unitary matrices, $\left\{ U_i(\theta_i) = e^{-i\theta_i\vsigma_i} \right\}_{i=1}^M$ are parameter-dependent rotations, $M$ is the number of parameters in the circuit, and $\{ \vsigma_i \}_{i=1}^M$ is a set of generators on $n$ qubits such that $\vsigma_i = \vsigma_i^\dagger$ and $\vsigma_i^2 = \1$. In this work, we assume that all parameters $\theta_j$ are uncorrelated.

\subsection{Gradient magnitudes and barren plateaus}

In recent years there has been concerted effort to understand when quantum losses are trainable or untrainable. Several factors can lead to untrainable losses including the presence of sub-optimal local minima~\cite{bittel2021training,anschuetz2022beyond,anschuetz2021critical}, expressivity limitations~\cite{tikku2022circuit} and abrupt transitions in layerwise learning~\cite{campos2021abrupt}. However, much of this research has focused on loss gradients. \revadd{We also refer the readers to the recent review~\cite{larocca2024review} for a broader overview.}

To train a variational quantum algorithm successfully, the loss landscape must exhibit sufficiently large loss gradients (or more generally, loss differences). Chebyshev’s inequality bounds the probability that the
cost value deviates from its average as
\begin{equation}\label{eq:Cheb}
    {\rm Pr}_{\thv}( | \LC(\thv) -  \Ebb_{\thv}[\LC(\thv)] | \geq \delta ) \leq \frac{\text{Var}_{\thv}[\mathcal{L}(\thv)]}{\delta^2} \, ,
\end{equation}
for some $\delta>0$ and the variance of the loss defined as
\begin{equation}\label{eq:vardef}
    \Var_{\vec{\theta}}[\mathcal{L}(\thv)] = \Ebb_{\thv}\left[\LC^2(\thv)\right] -  \left(\Ebb_{\thv}\left[\LC(\thv)\right]\right)^2  \,, 
\end{equation}
where the expectation value is taken over the parameters. Hence if the variance is small then the probability of observing non-negligible loss differences for any randomly chosen parameter setting is negligible.

For a wide class of problems~\cite{marrero2020entanglement,sharma2020trainability,patti2020entanglement,wang2020noise,arrasmith2021equivalence,larocca2021diagnosing, holmes2021connecting, cerezo2020cost,rudolph2023trainability,kieferova2021quantum, tangpanitanon2020expressibility,thanaslip2021subtleties,holmes2021barren,martin2022barren,fontana2023theadjoint,ragone2023unified, thanasilp2022exponential, letcher2023tight,xiong2023fundamental, chang2024latent}, one can show that the loss variance vanishes exponentially with problem sizes, i.e. $  \Var_{\vec{\theta}}[\mathcal{L}(\thv)] \in \mathcal{O}(c^{-n})$ with $c > 1$. On such \textit{barren plateau} landscapes exponentially precise loss evaluations are required to navigate the towards the global minimum and hence the resources (shots) required for training also scales exponentially. This has prompted the search for architectures where loss variances vanish at worst polynomially with system size,  $  \Var_{\thv}[\mathcal{L}(\thv)] \in \Omega\left(\frac{1}{\text{poly}(n)}\right)$, such that resource requirements may scale polynomially.

\section{Main Results}

In this paper we analyse the trainability of the variational \revadd{time-evolution} compression scheme and use this to illustrate the complex interplay between the barren plateau phenomena, local minima and expressivity limitations. We start by presenting an overview of the factors we will consider, and the context provided by prior work, before proceeding to present our main analytic and numerical findings.

\subsection{Overview of analysis}

The variance of the loss, Eq.~\eqref{eq:vardef}, necessarily depends on the parameter region it is computed over. The majority of analyses of quantum loss landscapes have assumed the angles are initialized according to some distribution in the region $[0, 2\pi]$ and hence considered the variance over the entire loss landscape~\cite{marrero2020entanglement,sharma2020trainability,patti2020entanglement,wang2020noise,arrasmith2021equivalence,larocca2021diagnosing, holmes2021connecting, cerezo2020cost,rudolph2023trainability,kieferova2021quantum,thanaslip2021subtleties,holmes2021barren,martin2022barren,fontana2023theadjoint,ragone2023unified, thanasilp2022exponential, letcher2023tight}. However, in practise one is interested in the loss landscape in the region explored during the optimisation process (i.e., in the region around the initialization, the region around the sufficiently `good' minima and ideally the landscape that connects these regions). An analytic study of these different regions in the general case seems daunting. However, the structure provided by the variational \revadd{time-evolution} compression scheme allows us to take steps in this direction.

Prior work has established that small angle initializations, whereby the parameterized quantum circuit is initialized in a small region around identity, provide a means of provably avoiding barren plateaus~\cite{zhang2022escaping,park2023hamiltonian, wang2023trainability, park2024hardware, shi2024avoiding}. More concretely, let us define
\begin{equation}\label{eq:hypercube}
	\vol(\vec{\phi}, r) := \{ \vec{\theta} \} \; \; \text{such that} \; \; \theta_i \in [\phi_i -r , \phi_i + r ]  \, \, \forall \, \, i ,
\end{equation}
as the hypercube of parameter space centered around the point $\vec{\phi}$, and define $\uni(\vec{\phi}, r)$ as a uniform distribution over the hypercube $\vol(\vec{\phi}, r)$. It was shown in Ref.~\cite{wang2023trainability}, that if the parameters are uniformly sampled in a small hypercube with $r \in \mathcal{O}\left(\frac{1}{\sqrt{L}}\right)$ around $\vec{\phi} = \vec{0}$ for some hardware efficient architecture with $L$ being the number of layers, then the variance $\Var_{\thv\sim\uni(\vec{0}, r)}[\mathcal{L}(\thv)] \in \Omega\left(\frac{1}{\text{poly}(L)}\right)$ decays only polynomially with the depth of the circuit. Similar conclusions were reached for the Hamiltonian Variational Ansatz in Refs.~\cite{park2023hamiltonian, park2024hardware} and for Gaussian initializations in Ref.~\cite{zhang2022escaping,shi2024avoiding}. Typically in these cases the small angle initialization corresponds to initializing close to identity. 

These guarantees can broadly be used to argue that the first iteration of variational \revadd{time-evolution} compression scheme will exhibit non-vanishing variances for certain circuits. Moreover, assuming $\delta t$ is small such that $ e^{-i H \delta t}$ is close to identity, and assuming that the ansatz is sufficient expressive to be able to capture a good approximation of $ e^{-i H \delta t}$, it is reasonable to expect that the good approximate solution circuit is contained within the region $\vol\left(\vec{0},\frac{1}{\sqrt{M}}\right)$ that enjoys polynomial loss variances. However, at later time-steps, when $U(\thv)$ is far from identity, the guarantees provided for these small angle initializations are of debatable relevance. 

Here we address the task of providing guarantees for all iterations of the algorithm for a very general family of parameterized quantum circuits. 
Our guarantees are based on the observation that assuming the previous step was sufficiently well optimised, and the time-step $\delta t$ is small enough, then one can initialize close enough to the new global minimum (or, more modestly, a good new minimum) such that the landscape exhibits substantial gradients as sketched in Fig.~\ref{fig:overview}b). In Section~\ref{sec:variance} we use this observation to derive such analytical variance lower bounds. We note that Ref.~\cite{haug2021optimal} provides an approximate lower bound on the variance in the loss for an iterative compression scheme; however, to do so it makes a number of approximations and in effect assumes the convexity of the problem from the outset. We go beyond this by providing exact bounds without prior assumptions.

Non-vanishing gradients in the region around the initialisation are a necessary condition to have any hope of successfully training a variational quantum algorithm but they are far from sufficient. Of particular importance is the potential to become trapped in spurious local minima. One way to provide guarantees against this concern is to prove that this region is convex. We tackle this issue in Section~\ref{sec:convexity} by proving convexity guarantees in the region around the initialization provided by the previous iteration.

We then introduce the notion of the adiabatic minima as the minima that would be reached by increasing $\delta t$ infinitely slowly and minimizing $\mf(\vtheta)$ by gradient descent with a very small learning rate in Section~\ref{sec:adiabaticminimum}. We argue that as long as the time-step is $\delta t$ is not too large (i.e. decreases polynomially with the number of parameters $M$) we can ensure that an adiabatic minima is within the convex region with non-vanishing gradients, and thus it should be possible to train to the adiabatic minimum.  

Finally, in Section~\ref{sec:minimum-jump} we explore the limitations of our analytic bounds. Firstly, we highlight that our analysis can not provide convergence guarantees to a good minimum because \textit{minima jumps} are possible. Namely, from one time-step to the other, the adiabatic minimum can become a relatively poor local minimum and a superior minimum can emerge elsewhere in the landscape. We are then faced with the question of whether such minima jumps necessitate optimisation across barren plateau landscapes or whether there exist gradient flows between these minima. We provide numerical evidence for a 10-qubit contrived example of a gradient flow from an initialization to a seemingly jumped minima which suggests such gradient flows can exist. 

\subsection{Lower-bound on the variance}\label{sec:variance}

\begin{figure*}
    \centering
    \includegraphics[width=0.825\linewidth]{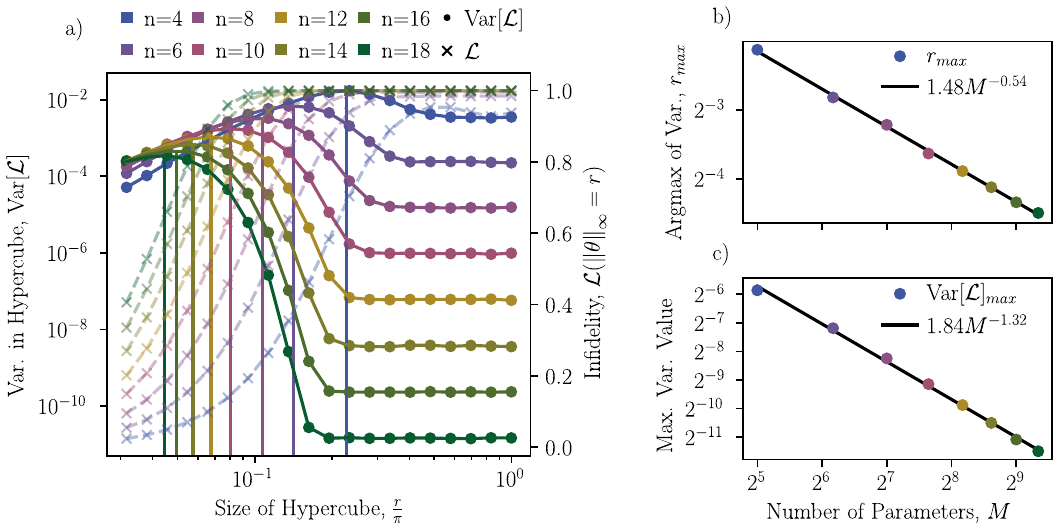}
    \caption{
    \textbf{Variance of landscape and width of narrow gorge.} 
    Here we study the landscape of $\mathcal{L}(\thv)$, for the first time-step of the variational \revadd{time-evolution} compression algorithm, for different system sizes $n$ as a function of the width of the hypercube $r$. We consider a hardware efficient 
    ansatz with $n$ layers and random initial parameters within the hypercube.
    a) We plot $\mathcal{L}(\thv)$ and its variance $\Var_{\thv\sim\uni(\vec{0}, r)}[\mathcal{L}(\thv)]$ as function of $r/\pi$. Since the shape of the landscape depends on the direction of the parameter update, to plot $\mathcal{L}(\thv)$ we have taken the average over 500 different directions. For $\Var_{\thv\sim\uni(\vec{0}, r)}[\mathcal{L}(\thv)]$, we
    keep track of its maximum value (marked with a vertical line) for each system size.
    b) The value $r_{\rm max}$ for which the variance peaks as function of the number of parameters in the ansatz.
    c) Maximum value of the variance for different system sizes. While the results shown here are for the first iteration of the variational compression scheme very similar results are observed at later iterations (in line with Theorem~\ref{thm:variance-lower-bound}).
    }
    \label{fig:variance}
\end{figure*}

The variance of the loss function at any iteration around the parameters $\vtheta^*$ obtained for the previous iteration will depend on the length of the time-step $\delta t$ as well as the volume of the region of the parameter space explored. Here we study the variance of the loss in a uniformly sampled hypercube  
of sides $2 r$ around $\vtheta^*$ as defined in Eq.~\eqref{eq:hypercube}. As proven in Appendix~\ref{app:variance}, we obtain the following bound.

\begin{theorem}[Lower-bound on the loss variance, Informal]\label{thm:variance-lower-bound} 
Consider the general ansatz in Eq.~\eqref{eq:circuit} and assume that in the first iteration the system is prepared in \revadd{an} initial state $\rho_0$ and let us choose $\vsigma_1$ such that $\Tr[\rho_0 \vsigma_1 \rho_0 \vsigma_1] = 0$. Given that the time-step is bounded as
\begin{align}
    \delta t \in \OC\left(\frac{1}{\lambda_{\rm max}}\right)
\end{align}
where $\lambda_{\rm max}$ is the largest eigenvalue of $H$ 
and we consider uniformly sampling parameters in a hypercube of width $2r$ around the solution from the previous iteration $\thv^*$, i.e. $\vol(\thv^*,r)$, such that 
\begin{align}
 r \in \Theta\left(\frac{1}{\sqrt{M}} \right) \;. 
\end{align}
Then the variance at any iteration of the algorithm is lower bounded as 
\begin{align}
    \Var_{\vtheta \sim\uni(\vtheta^*, r)} \left[ \LC (\vtheta)\right] \in \Omega\left( \frac{1}{M}\right) \;.
\end{align}
Thus, for $M \in \OC(\poly(n))$ we have 
\begin{align}
    \Var_{\vtheta \sim\uni(\vtheta^*, r)} \left[ \LC (\vtheta)\right] \in \Omega\left( \frac{1}{\poly(n)}\right) \;.
\end{align}
\end{theorem}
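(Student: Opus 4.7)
The plan is to reduce the full multi-parameter variance to the squared magnitude of the $\theta_1$-partial derivative at $\vtheta^*$, and to use the hypothesis on $\sigma_1$ to pin down this derivative to leading order in $\delta t$. Writing $\vtheta = \vtheta^* + \vec{\delta}$ with $\delta_i$ iid uniform on $[-r,r]$, so that $\Ebb[\delta_i] = 0$ and $\Ebb[\delta_i \delta_j] = (r^2/3)\,\mathbf{1}_{i=j}$, a Taylor expansion of $\mathcal{L}$ around $\vtheta^*$ gives
\begin{equation*}
\Var_{\vtheta \sim \uni(\vtheta^*, r)}[\mathcal{L}] = \frac{r^2}{3}\left\|\nabla \mathcal{L}(\vtheta^*)\right\|^2 + O(r^4) \;\geq\; \frac{r^2}{3}\bigl(\partial_1 \mathcal{L}(\vtheta^*)\bigr)^2 + O(r^4),
\end{equation*}
so it suffices to produce an $M$-independent lower bound on $\bigl(\partial_1 \mathcal{L}(\vtheta^*)\bigr)^2$.

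To evaluate $\partial_1 \mathcal{L}(\vtheta^*)$ I would split the ansatz as $U(\vtheta) = U_{>1}(\vtheta_{-1})\,V_1\,U_1(\theta_1)$ with $U_1(\theta_1) = \cos\theta_1\,\1 - i\sin\theta_1\,\sigma_1$, differentiate, and Taylor-expand $e^{-iH\delta t} = \1 - iH\delta t + O(\delta t^2)$. The $\delta t = 0$ piece of the derivative is proportional to $\langle\psi_0|\sigma_1|\psi_0\rangle$; the hypothesis $\Tr[\rho_0\sigma_1\rho_0\sigma_1]=0$ (equivalent to $\langle\psi_0|\sigma_1|\psi_0\rangle = 0$ for pure $\rho_0$) together with the fact that $\vtheta^*$ is a perfect minimum of the $\delta t = 0$ loss forces this piece to vanish. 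The first nonzero contribution is linear in $\delta t$ and takes the form
\begin{equation*}
\partial_1 \mathcal{L}(\vtheta^*) = -\,\delta t\,\bigl\langle\psi_0\bigr|\bigl\{\sigma_1,\,U(\vtheta^*)^\dagger H\,U(\vtheta^*)\bigr\}\bigr|\psi_0\bigr\rangle + O(\delta t^2),
\end{equation*}
i.e.\ the symmetrised correlation of $\sigma_1$ with the Heisenberg-conjugated Hamiltonian, taken in the fixed initial state $|\psi_0\rangle$. Because this expectation is taken in $|\psi_0\rangle$ and not in the iteration-dependent evolved state, the argument proceeds identically at every iteration, which is exactly what yields the \textit{at any iteration} clause. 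Squaring and using $r\in\Theta(1/\sqrt M)$, $\delta t\in\mathcal{O}(1/\lambda_{\max})$ then gives $\Var \in \Omega(r^2\delta t^2) = \Omega(1/M)$.

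The step I expect to be the main obstacle is controlling the Hessian and higher-derivative contributions that I have absorbed into $O(r^4)$ and making sure they do not swamp the leading $r^2\delta t^2$ term. Because the Hessian of a fidelity loss at its global minimum is essentially the quantum Fisher information matrix, its Frobenius norm generically scales like $\sqrt M$, so the Hessian contribution to $\Var$ is itself of order $r^4 M \sim 1/M$ --- the same order as the leading term --- and a clean separation of scales requires a sharp analysis of the constants implicit in the $\Theta(1/\sqrt M)$ scaling of $r$. A secondary concern is ruling out degenerate configurations where the symmetrised correlation $\langle\psi_0|\{\sigma_1, U(\vtheta^*)^\dagger H U(\vtheta^*)\}|\psi_0\rangle$ vanishes accidentally for some $\vtheta^*$; the freedom in the statement to \textit{choose} $\sigma_1$ is precisely what one would use to exclude such pathologies and obtain the uniform-in-$\vtheta^*$ bound needed at every iteration.
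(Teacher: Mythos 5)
Your route is genuinely different from the paper's, and it has two gaps that I think are fatal as written. The paper never Taylor-expands the loss around $\thv^*$ at all: it integrates out the parameters one at a time (using $U_i(\alpha_i)=\cos\alpha_i\,\1-i\sin\alpha_i\,\vsigma_i$), recursively lower-bounding the variance until only $\alpha_1$ remains, and arrives at an \emph{exact} bound $\Var\geq (c_+-k_+^2)\left(k_+^{M-1}\Delta_{\thv^*}+(1-k_+^{M-1})\tilde\xi\right)^2$ with $\Delta_{\thv^*}=\Tr[(\rho_0-\vsigma_1\rho_0\vsigma_1)U^\dagger(\thv^*)\rho_{(\thv^*,\delta t)}U(\thv^*)]$, which is then bounded below by $2F-1\geq 1-4\lambda_{\rm max}^2\delta t^2$ using the orthogonality hypothesis and a fidelity lemma. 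Note what this buys: the resulting bound $\sim (c_+-k_+^2)\sim r^4$ is largest at $\delta t=0$, i.e.\ the variance is attributed to the curvature of the narrow gorge around the (near-)minimum, not to a residual gradient at its center. Your bound is proportional to $r^2\,\delta t^2$ and \emph{vanishes} as $\delta t\to 0$, which already signals that you are capturing a different, and weaker, mechanism.

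The concrete gaps. First, the quantity you need to be nonvanishing, $\langle\psi_0|\{\vsigma_1,\,U(\thv^*)^\dagger H U(\thv^*)\}|\psi_0\rangle$, is not controlled by any hypothesis of the theorem: the assumption $\Tr[\rho_0\vsigma_1\rho_0\vsigma_1]=0$ only gives $\langle\vsigma_1\rangle_{\psi_0}=0$ and says nothing about this anticommutator, which can vanish identically (e.g.\ $\rho_0=\ketbra{0\cdots0}{0\cdots0}$, $\vsigma_1=X_1$, $U(\thv^*)^\dagger HU(\thv^*)=Y_1$: the state evolves nontrivially but $\{\vsigma_1,\tilde H\}=0$, so $\partial_1\mf(\thv^*)=\OC(\delta t^2)$). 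The freedom to ``choose $\vsigma_1$'' in the statement is only the freedom to pick which gate sits first in a fixed ansatz, and the paper's proof does not need this anticommutator at all. Second, the error terms you absorb into $O(r^4)$ have coefficients that grow with $M$: the covariance between the linear term $\nabla\mf\cdot\vec{\delta}$ and the cubic term of the expansion contributes at order $M^2 r^4\,\delta t\cdot\max|T_{ijk}|\sim\delta t$ (with $r^2\sim 1/M$), which dwarfs your claimed leading term $r^2\delta t^2\sim\delta t^2/M$ and has no definite sign; you flag this as ``the main obstacle'' but it is not a matter of sharpening constants --- the naive expansion simply does not isolate a dominant positive term. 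The paper's recursive-integration argument sidesteps both problems by never truncating anything: all trigonometric moments are computed exactly and the only quantity requiring a physical bound is the fidelity $F(\rho_{(\thv^*,0)},\rho_{(\thv^*,\delta t)})\geq 1-2\lambda_{\rm max}^2\delta t^2$, which holds at every iteration regardless of how well the previous step was optimized.
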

\revadd{We discuss later in Section~\ref{sec:extension-to-other-iterative} how to abstractify the theorem for other iterative settings with the fidelity-type loss e.g., for imaginary time evolution or unitary learning problems.}

\medskip

Theorem~\ref{thm:variance-lower-bound} establishes that within a small, but non-exponentially vanishing ($r \propto 1/\sqrt{M}$), region around the previous optimal solution, the loss landscape will exhibit non-exponentially vanishing gradients so long as $\delta t \in \OC(1/\lambda_{\rm max})$. The constraint on the time-step is to ensure that a state corresponding to a previous solution has a large overlap with a new target state. If the \revadd{time-}step is too large, then the initialization no longer contains enough information about the target state and it is equivalent to initializing on the barren plateau region. 

\revadd{We now discuss three key subtleties regarding the constraint on $\delta t$. (i).~The $\delta t \in \OC(1/\lambda_{\rm max})$ scaling comes from a loose bound on the overlap between the old optimized state and the new target state (i.e., it is a sufficient condition but may not be necessary) and thus a larger $\delta t$ is likely viable in practise. (ii).~This constraint on $\delta t$ does not affect the size of the attraction region, but rather ensures that step is initialized within the region of attraction. That is, as long as $\delta t \in \OC(1/\lambda_{\rm max})$, the region of attraction on that iteration is theoretically guaranteed to scale as $\Theta(1/\sqrt{M})$. (iii) As a consequence of the previous point, the overhead due to this $\delta t$ constraint only results in increasing the total number of steps i.e., $N \sim \lambda_{\rm max} t$. Particularly, when $\lambda_{\rm max}$ scales linearly with $n$, the overhead is linear with the system's size.}

On another related topic, our bound is presented here for simulating the evolution of \revadd{an initial state that satisfies the orthogonality condition with the first gate i.e., $\Tr[\rho_0 \vsigma_1 \rho_0 \vsigma_1] = 0$}. This assumption is made for ease of presentation. However, as highlighted in the appendices, this assumption is not strictly necessary. Rather one just needs to ensure that a gate in the first layer has a non-trivial effect on the loss. 

It is important to stress that Theorem~\ref{thm:variance-lower-bound} provides a sufficient, not a necessary, condition for observing polynomially vanishing gradients. For a necessary condition one would need to derive an upper bound as a function of $r$ and $\delta t$. In general, this seems challenging and is likely to be highly ansatz dependent~\cite{braccia2024computing}. 
Instead we address this question numerically. 

In Fig.~\ref{fig:variance} we study the landscape of $\mathcal{L}(\thv)$ for an initial time-step as a function of the width of the hypercube, $2r$. For concreteness, we consider a hardware efficient ansatz with $n$ layers and uniformly sampled parameters within the hypercube. The variance over the full landscape ($r = \pi $) vanishes exponentially in $n$. However, as $r$ is decreased the variance increases with $r$ and ceases to decay exponentially in $n$. When $r$ is very small the variance again begins to decrease. This is because for sufficiently small $r$ we are computing the variance over a small region of the loss landscape at the base of the narrow gorge. This account is confirmed by the average behaviour of $\mathcal{L}(\thv)$ also shown in Fig.~\ref{fig:variance}a). In particular, when the variance peaks, we have an infidelity of approximately 0.7 for each system size, which indicates that the peak of the variance is a good measure of the width of our gorge.

In  Fig.~\ref{fig:variance} b) and c) we plot the $r$ value for which the variance peaks and the maximum value of the variance as function of the number of parameters in the ansatz $M$. Both quantities decay polynomially in $M$. In particular, we find that $r_{\rm max}$ scales as $\frac{1}{\sqrt{M}}$ implying that the width of the gorge decreases with a $\frac{1}{\sqrt{M}}$ scaling. This is consistent with our theoretical lower bound, Theorem~\ref{thm:variance-lower-bound}, which also suggests that to ensure at worst poly vanishing gradients its necessary to consider a region of width $\frac{1}{\sqrt{M}}$ around the minimum. \revadd{Finally, in Appendix~\ref{appx:numeric-var}, we further plot the variance as a function of the time-step $\delta t$ and find good agreement with our analytic claim that $\delta t \propto 1/\lambda$ suffices to ensure at worst polynomially vanishing gradients.}

\subsection{Convexity region around the starting point}\label{sec:convexity}
Substantial gradients are a necessary condition but not sufficient condition for trainability. If the substantial gradients are attributable to poor local minima then finding a good solution is likely to be highly challenging. However, if as well as having substantial gradients, we can prove that the landscape is convex, or approximately convex, then training to a minimum looks promising.  
In this section, we present a theorem which shows that the region around the starting parameters is approximately convex. As expected, our condition depends both on the width of the hypercube region considered and the time-step $\delta t$ taken. 

A function is convex over a parameter range if its second order partial derivatives are all non-negative in that parameter range. In practise, a more convenient means of diagnosing convexity is to study the Hessian, $\nabla^2_{\thv}[\LC(\thv)]$, of a function. If the Hessian is positive semi definite, i.e. $\nabla^2_{\thv}[\LC(\thv)] \geq 0$, in a given parameter region then the function is convex in that region. We will introduce a notion of approximate convexity by relaxing this constraint and saying that a landscape is $\epsilon$ convex if the following condition holds:
\begin{definition}[$\epsilon$-convexity]\label{def:epsilon-convex}
A loss is $\epsilon$-convex in the region $\thv \in \vol(\thv^*, r_c)$ if 
 \begin{equation}
     \left[\nabla^2_{\thv} \mathcal{L}(\thv)\right]_{\rm min} \geq -|\epsilon| \, 
 \end{equation}
for all $\thv \in \vol(\thv^*, r_c)$. Here $\nabla^2_{\thv} \mathcal{L}(\thv)$ denotes the Hessian of $\mathcal{L}(\thv)$ and we denote $\left[ A \right]_{\rm min}$ as the smallest eigenvalue of the matrix $A$.
\end{definition}
If a loss is $\epsilon$-convex the loss is convex up to $\epsilon$ small deviations, as sketched in Fig.~\ref{fig:overview}b), and argued more formally in Appendix~\ref{app:epsilonconvexity}. This notion is particularly important in a quantum context where the loss is only ever measured with a finite number of shots making it hard to tell apart $\epsilon$ negative curvatures from $\epsilon$ positive ones. Thus in practise the relevant $\epsilon$ will be determined by the shot noise floor. 

Equipped with this definition we now show that a polynomially sized region around the starting point of the previous iteration is approximately convex. 
\begin{theorem}[Approximate convexity of the landscape, Informal] \label{thm:convex} For a time-step of size  
\begin{align}
    \delta t \in \OC\left( \frac{\mu_{\rm min} + 2 |\epsilon| }{M \lambda_{\rm max}}\right) \;,
\end{align}
the loss landscape is $\epsilon$-convex in a hypercube of width $2r_c$ around a previous optimum $\vec{\theta^*}$ i.e., $\vol(\thv^*, r_c)$ such that
\begin{equation}\label{eq:convexitytheorem}
    r_c \in \Omega\left(\frac{\mu_{\rm min}+2|\epsilon|}{16 M^2} - \frac{\lambda_{\rm max} \delta t}{M}\right) \;,
\end{equation}
where $\mu_{\min}$ is the minimal eigenvalue of the Fisher information matrix associated with the loss at $\thv^*$.
\end{theorem}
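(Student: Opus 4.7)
The plan is a perturbative argument around the reference point $(\vtheta^*,\delta t = 0)$. At this point $U(\vtheta^*)^\dagger e^{-iH\delta t}U(\vtheta^*) = \1$, so $\mathcal{L}(\vtheta^*)|_{\delta t = 0} = 0$ is a global minimum; a short calculation then shows that the Hessian $\nabla^2_{\vtheta}\mathcal{L}$ evaluated at this reference point coincides (up to a fixed normalisation) with the Fisher information matrix $F$ at $\vtheta^*$, whose smallest eigenvalue is by definition $\mu_{\min} \geq 0$. The rest of the argument quantifies how much the smallest eigenvalue of the Hessian can drop when one (i) turns on the evolution $e^{-iH\delta t}$ and (ii) displaces $\vtheta$ from $\vtheta^*$ by at most $r_c$ in the $\ell_\infty$ norm.

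First, I would derive closed-form expressions for the entries $\partial_i\partial_j\mathcal{L}$. Because each gate has the form $e^{-i\theta_i\sigma_i}$ with $\sigma_i^2 = \1$, parameter-shift identities reduce the second derivatives to expectation values of bounded operators between the states $U(\vtheta)|\psi_0\rangle$ and $e^{-iH\delta t}U(\vtheta^*)|\psi_0\rangle$. The corresponding third derivatives in $\vtheta$ and the first derivative in $\delta t$ admit similar closed forms, bounded respectively by $\mathcal{O}(1)$ and $\mathcal{O}(\lambda_{\max})$ \emph{uniformly in $M$ and $n$}. Taylor expanding along the direction $(\vtheta-\vtheta^*,\delta t)$ and using $\|\vtheta-\vtheta^*\|_1 \leq M r_c$ then gives the entry-wise estimate
\begin{equation}
    \left| \partial_i\partial_j\mathcal{L}(\vtheta,\delta t) - \partial_i\partial_j\mathcal{L}(\vtheta^*,0) \right| \;\leq\; C_1 M r_c + C_2 \lambda_{\max} \delta t \;,
\end{equation}
for explicit $\mathcal{O}(1)$ constants $C_1, C_2$.

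Second, combining this with the operator-norm bound $\|A\|_{\mathrm{op}} \leq M \max_{i,j}|A_{ij}|$ for an $M\times M$ symmetric matrix and with Weyl's inequality
\begin{equation}
    \lambda_{\min}\!\left(\nabla^2_{\vtheta}\mathcal{L}(\vtheta)\right) \;\geq\; \mu_{\min} - \bigl\lVert \nabla^2_{\vtheta}\mathcal{L}(\vtheta) - \nabla^2_{\vtheta}\mathcal{L}(\vtheta^*)|_{\delta t = 0} \bigr\rVert_{\mathrm{op}} \;,
\end{equation}
reduces $\epsilon$-convexity to the sufficient scalar condition
\begin{equation}
    C_1 M^2 r_c + C_2 M \lambda_{\max} \delta t \;\leq\; \mu_{\min} + |\epsilon| \;.
\end{equation}
Splitting this into two pieces, each contributing at most half of the budget, yields the two bounds of the theorem: $\delta t \in \mathcal{O}((\mu_{\min}+2|\epsilon|)/(M\lambda_{\max}))$, and, after solving for $r_c$, the lower bound displayed in Eq.~\eqref{eq:convexitytheorem}, with the explicit constant $1/16$ tracking the precise values of $C_1, C_2$ arising from the closed-form expressions for the second and third derivatives.

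The main obstacle is keeping the third derivatives $\partial_i\partial_j\partial_k\mathcal{L}$ bounded by $\mathcal{O}(1)$ uniformly in the circuit depth $M$. Differentiating once more through the product $U(\vtheta) = \prod_{i=1}^M V_i U_i(\theta_i)$ inserts an extra gate at position $k$, and a naive bound sums over all choices of the inserted factor and could pick up a spurious factor of $M$. Showing that cancellations keep this at $\mathcal{O}(1)$ requires exploiting the unitary product structure and the fact that each $\sigma_i$ has unit operator norm, so that the relevant expectation values do not accumulate with depth. A secondary subtlety is identifying the Hessian at the reference point with the Fisher information matrix using exactly the normalisation that makes the bound read as $\mu_{\min}$ (plus the $|\epsilon|$ slack) without any additional $M$-dependent prefactor.
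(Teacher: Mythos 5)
Your proposal follows essentially the same route as the paper's proof in Appendix~\ref{app:convexity}: a Taylor expansion of the fidelity around $(\thv^*,\delta t=0)$ whose second-order term is the quantum Fisher information, uniform $\mathcal{O}(1)$ (resp.\ $\mathcal{O}(\lambda_{\rm max})$) bounds on the third derivatives via their nested-commutator form and H\"{o}lder's inequality, and then a matrix-perturbation bound converting the entry-wise remainder estimate $\mathcal{O}(Mr_c+\lambda_{\rm max}\delta t)$ into an eigenvalue shift of order $M$ times that. The only cosmetic difference is that you use Weyl's inequality with $\|A\|_{\rm op}\leq M\max_{ij}|A_{ij}|$ where the paper applies Gershgorin's circle theorem to the row sums of the remainder matrix; both yield the same scaling, and your anticipated resolution of the "main obstacle" (each fixed third derivative is a single nested commutator of unit-norm generators, so no spurious factor of $M$ accumulates) is exactly what the paper's Eq.~\eqref{eq:nested_comm}--\eqref{eq:bound2} establishes.
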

In Appendix~\ref{app:imaginary} we show that an analogous convexity guarantee can be proven for imaginary time evolution. \revadd{We additionally discuss the extension to the unitary learning tasks in Appendix~\ref{app:extension-variational-u} and Appendix~\ref{app:extension-qml-u}}  \\

Theorem~\ref{th:pRTE} tells us that it is always possible to pick a polynomially scaling $\delta t, r_c$ such that the landscape of $\mf(\thv)$ with respect to the parameters is approximately convex. The constraints on $\delta t$ and $r_c$ for convexity are pretty stringent in practice. However, convexity is also a lot to demand of a loss landscape. Nonetheless, it is nice to see that approximate convexity can be ensured at `only' a polynomially scaling cost. 

\begin{figure*}
    \centering
    \includegraphics[width=\linewidth]{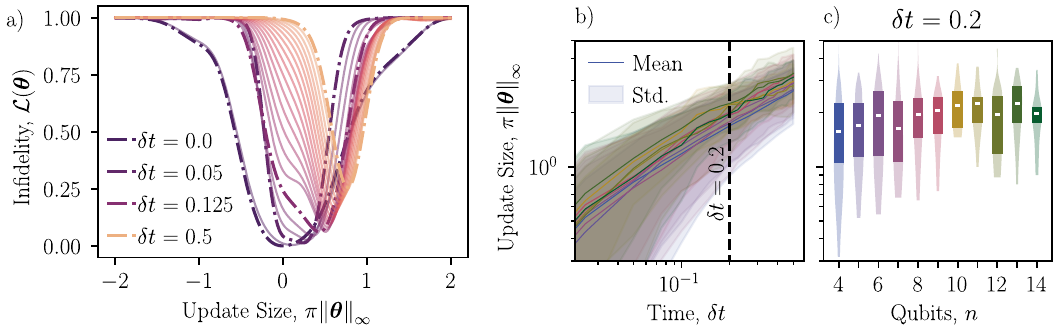}
    \caption{\textbf{Routine evolution of the adiabatic minimum}.
    Here we study the landscape of $\mf(\thv)$ as we increase the time-step $\delta t$. We study a 10 qubit Hamiltonian with nearest-neighbour interactions on a 1D lattice with $H=\sum X_i Z_{i+1} - 0.95 \sum Y_i$ where $X_i$, $Y_i$ and $Z_i$ are X-Pauli, Y-Pauli and Z-Pauli operators on the qubit $i$. We use a 2-layered Hamiltonian Variational Ansatz with random initial parameters.
    a) We plot our landscape for different $\delta t$. The cuts in our high dimensional $\mf(\vtheta)$ space
    contain both the initial parameters $\vtheta=\vec{0}$ and the adiabatic minimum  
    $\vtheta_{A}(\delta t)$ at $\delta t$.
    \revadd{
    b) We plot the size of our parameter update $\norm{\vtheta}_\infty$, i.e. the distance along the cuts between the old minimum and the new adiabatic minimum, as a function of the time-step for different system sizes (from $n=4$ to $14$ for at least $20$ different instances for each qubit).
    We repeat the experiment with different random initial parameters and plot their mean and standard deviation.
    c) We show a violin and box plot--with the median and quartiles-- of the distributions we obtain for $\delta t=0.2$ as we increase the number of qubits. Note that the color assigned to each number of qubits matches that of the curve in b). 
    }
    }
    \label{fig:UpdatingLandscape}
\end{figure*}

\subsection{Adiabatic minimum}\label{sec:adiabaticminimum}

So far we have identified two constraints on our parameters that push in the direction of trainability guarantees. Specifically, we have established a region in our landscape with substantial gradients and approximate convexity. The final condition required for convergence guarantees is to ensure our target circuit, i.e., a good minimum, lies within this region.

To address this point, let us start by introducing the notion of the \textit{adiabatic minima}. 
Intuitively, these are the minima that would be reached by increasing $\delta t$ infinitely slowly and minimizing $\mf(\vtheta)$ by gradient descent with a very small learning rate. By analogy, one can imagine dropping a marble in the initial minima and then slowly modifying the landscape by infinitesimally increasing $\delta t$. The position of the marble would correspond to our adiabatic minima and, in practice, it is where we expect our algorithm to converge for sufficiently small $\delta t$. 
In Fig.~\ref{fig:UpdatingLandscape} a) we plot a cut through the cost landscape around the old minimum $\vec{\theta}^*$ as a function of $\delta t$. We can see that the minimum smoothly moves rightwards and increases with increasing $\delta t$. More formally, we define the adiabatic minima as follows.

\begin{definition}[Adiabatic Minimum]\label{def:adiabatic-minimum}
For any time $\delta t$ in the range $[0,T]$, a function\footnote{We note that it is in fact possible for a single initial minima to have multiple corresponding adiabatic minima functions if there are multiple directions with zero gradients.} corresponding to the evolution of the adiabatic minima for some initial minimum $\vtheta^*$, is a continuous function $\vtheta_A(\delta t) \in C^{\infty}(\mathbb{R},\mathbb{R}^{m})$ such that $\vtheta_A(0) = \vtheta^*$ and
 $\nabla_{\thv} \mf(\vtheta_A(\delta t), \delta t)=\vec{0}$. The adiabatic minimum at time $\delta t$ is $\vtheta_A(\delta t)$.
 \label{def:adiabaticminimum}
\end{definition}

One can ensure that the iterative variational compression scheme will converge to some minimum if the time-step is small enough 
that an adiabatic minimum is inside of the convex region with non-vanishing gradients. One can question how good this minimum will be but we will set aside this question for the moment. Thus our next step will be to assess how small the time-step needs to be picked in order to guarantee this. 
In the following theorem we formalize this concept by bounding the $\delta t$ required to ensure an adiabatic minimum is in the substantial gradient region and in the convex region.

\begin{theorem}[Adiabatic minimum is within provably `nice' training region, Informal]\label{thm:adiabaticminimum}
If the time-step $\delta t$ is chosen such that
\begin{align}
    \delta t \in \OC\left(\frac{\beta_A}{M \lambda_{\rm max}}\right) \;,
\end{align}
then the adiabatic minimum $\thv_A(\delta t)$ is guaranteed to be within the non-exponentially-vanishing gradient region (as per Theorem~\ref{thm:variance-lower-bound}), and additionally, if $\delta t$ is chosen such that
\begin{align}
    \delta t \in \OC\left( \frac{\beta_A(\mu_{\rm min} + 2|\epsilon|) }{ M^{5/2} \lambda_{\rm max}}\right) \;,
\end{align}
then the adiabatic minimum $\thv_A(\delta t)$ is guaranteed to be within the $\epsilon$-convex region (as per Theorem~\ref{thm:convex}) where 
\begin{equation}
    \beta_A := \frac{\Dot{\vtheta}_A^T(\delta  t) \left( \nabla^2_{\vtheta} \LC(\vtheta) \big|_{\vtheta = \vtheta_A(\delta  t)}\right) \Dot{\vtheta}_A(\delta t) }{\| \Dot{\thv}_{A}(\delta t) \|_2^2} 
\end{equation}
corresponds to the second derivative of the loss in the direction in which the adiabatic minimum moves.
\end{theorem}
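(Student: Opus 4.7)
My proof plan for Theorem~\ref{thm:adiabaticminimum} is to control the trajectory $\thv_A(\delta t)$ via implicit differentiation of its defining equation and then force the resulting displacement to fit inside the regions already guaranteed by Theorems~\ref{thm:variance-lower-bound} and~\ref{thm:convex}.

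First, I would differentiate the defining relation $\nabla_{\thv}\mf(\thv_A(\delta t),\delta t)=\vec{0}$ with respect to $\delta t$, obtaining
\begin{equation}
\nabla^2_{\thv}\mf\big|_{\thv_A(\delta t)}\,\dot{\thv}_A(\delta t) + \partial_{\delta t}\nabla_{\thv}\mf\big|_{\thv_A(\delta t)} = \vec{0}\,.
\end{equation}
Projecting onto $\dot{\thv}_A(\delta t)$ isolates the curvature along the direction of motion, which by the definition in the theorem is $\beta_A\|\dot{\thv}_A\|_2^2$. Applying Cauchy--Schwarz then gives the velocity bound $\|\dot{\thv}_A(\delta t)\|_2 \leq \|\partial_{\delta t}\nabla_{\thv}\mf\|_2/\beta_A$.

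Next, I would bound $\|\partial_{\delta t}\nabla_{\thv}\mf\|_2$ using the explicit form of the fidelity loss in Eq.~\eqref{eq:loss}. Writing the overlap as $f(\thv,\delta t)=\langle\psi_0|U(\thv)^\dagger e^{-iH\delta t}U(\thv^*)|\psi_0\rangle$, the chain rule together with the product structure of the ansatz in Eq.~\eqref{eq:circuit} shows that each mixed partial $\partial_{\delta t}\partial_{\theta_i}\mf$ is a bounded sum of overlaps in which one factor of $H$ and one generator $\sigma_i$ have been brought down inside unitaries. Using $\|\sigma_i\|=1$, $\|H\|=\lambda_{\rm max}$ and $|f|\leq 1$ gives $|\partial_{\delta t}\partial_{\theta_i}\mf|\in\OC(\lambda_{\rm max})$ per component, hence $\|\partial_{\delta t}\nabla_{\thv}\mf\|_2\in\OC(\sqrt{M}\,\lambda_{\rm max})$ and therefore $\|\dot{\thv}_A\|_2\in\OC(\sqrt{M}\,\lambda_{\rm max}/\beta_A)$. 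Integrating yields $\|\thv_A(\delta t)-\thv^*\|_\infty\leq\|\thv_A(\delta t)-\thv^*\|_2\leq \delta t\cdot\sup_{s\in[0,\delta t]}\|\dot{\thv}_A(s)\|_2\in\OC(\delta t\sqrt{M}\,\lambda_{\rm max}/\beta_A)$.

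The two parts of the theorem then follow by requiring this $\ell_\infty$ displacement to fit inside the respective hypercubes. Imposing it be at most $r\in\Theta(1/\sqrt{M})$ from Theorem~\ref{thm:variance-lower-bound} yields $\delta t\in\OC(\beta_A/(M\lambda_{\rm max}))$; imposing it be at most $r_c\in\Omega((\mu_{\rm min}+2|\epsilon|)/M^2)$ from Theorem~\ref{thm:convex} yields $\delta t\in\OC(\beta_A(\mu_{\rm min}+2|\epsilon|)/(M^{5/2}\lambda_{\rm max}))$. The $-\lambda_{\rm max}\delta t/M$ correction in Eq.~\eqref{eq:convexitytheorem} can be absorbed self-consistently, since the second $\delta t$ bound is already much smaller than the scale $(\mu_{\rm min}+2|\epsilon|)/(M\lambda_{\rm max})$ at which the two terms in $r_c$ become comparable.

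The main obstacle I anticipate is the per-component bound on $\partial_{\delta t}\partial_{\theta_i}\mf$: although parameter-shift style reasoning makes the $\lambda_{\rm max}$ dependence transparent, one must verify carefully that differentiating in the middle of the $M$-fold product $U(\thv)$ does not inject any hidden dependence on $M$ into the coefficient, so that the $\sqrt{M}$ in $\|\partial_{\delta t}\nabla_{\thv}\mf\|_2$ comes purely from summing $M$ components. A subsidiary technical issue is justifying that $\beta_A$ is strictly positive along the trajectory so that the implicit function argument is well-posed; this is inherited in a bootstrap fashion from Theorem~\ref{thm:convex}, since as long as $\thv_A(\delta t)$ stays inside the $\epsilon$-convex region the Hessian in \emph{any} direction---including $\dot{\thv}_A$---is bounded below by $-|\epsilon|$, which suffices provided the intrinsic curvature along the motion dominates the perturbation.
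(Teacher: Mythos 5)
Your proposal follows essentially the same route as the paper's proof: implicit differentiation of $\nabla_{\thv}\mf(\thv_A(\delta t),\delta t)=\vec{0}$, projection onto $\dot{\thv}_A$ to isolate $\beta_A$, a Cauchy--Schwarz bound giving $\|\dot{\thv}_A\|_2\leq \sqrt{M}\,|\partial_{\delta t}\partial_{\theta_i}\mf|_{\max}/\beta_A$ with the mixed partial bounded by $\OC(\lambda_{\rm max})$ (the paper does this via the parameter-shift rule, which also settles your worry about hidden $M$-dependence from differentiating inside the product), followed by integration and fitting the resulting $\ell_\infty$ displacement into the two hypercubes. The argument is correct and matches the paper's Proposition~\ref{prop:adiabatic-min} and Theorem~\ref{th:min-in-convexregion}.
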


Theorem~\ref{thm:adiabaticminimum} tells us that it suffices to consider a time-step that scales as $\delta t \in \Omega\left(\frac{1}{M}\right)$ to ensure that the adiabatic minimum falls in the region with substantial gradients, or more stringently to take $\delta t \in \Omega\left(\frac{1}{\text{poly}(M)}\right)$ to ensure that the adiabatic minimum falls within the $|\epsilon|$-convex region. As in general $M \sim \text{poly}(n)$ it follows that if $\delta t$ is decreased polynomially with problem size, and the learning rate is chosen appropriately, it should be possible to train to the new adiabatic minimum.

We stress that this interpretation of Theorem~\ref{thm:adiabaticminimum} is only possible assuming that $\beta_A$ is not exponentially vanishing. This is a reasonable assumption as $\beta_A \rightarrow 0$ corresponds to the curvature of the loss at the minimum being flat in the direction in which the adiabatic minimum moves. While this is conceivably possible it is unlikely in practise (as is supported by our numerics in Fig.~\ref{fig:UpdatingLandscape} and Fig.~\ref{fig:minimumjump}).
Moreover, the $\beta_A$ dependence of Theorem~\ref{thm:adiabaticminimum} is a genuine feature that affects trainability, rather than a relic of our proof techniques. Namely, if the landscape is very flat in the direction of the new minimum then indeed the adiabatic minimum can move significant distances at short times. More poetically, one might visualise this case as a \textit{barren gorge}. That is, a sub-region of the landscape within the substantial gradient region that nonetheless has vanishing gradients. Such features are possible but perhaps unlikely unless the ansatz is highly degenerate.  

\revadd{To partially support our claim that the distance traveled by the minimum does not blow up with respect to the system's size we provide Fig.~\ref{fig:UpdatingLandscape}. Specifically, in the panel b) we see that while the distribution of distances traveled by the adiabatic minima is broad, even in the worse case scenario it seemingly scales polynomially with $\delta t$. By taking a cut in this distribution at an arbitrary time $\delta t = 0.2$--as seen in Fig.~\ref{fig:UpdatingLandscape} c)-- we do not perceive an significant scaling of the average distance with system size. }

Another caveat is that Theorem~\ref{thm:adiabaticminimum} only holds in the case that there exists a well-defined adiabatic minimum function $\thv_A(\delta t)$ in the time interval of interest. This is not always guaranteed to be the case because a minimum can vanish by evolving into a slope as $\delta t$ increases. If this occurs then the continuity condition in our definition of the adiabatic minima function fails. Nonetheless, this is not a situation that necessarily causes trainability problems (if the minimum turns into a slope then training is possible down that slope), rather it is a situation that makes finding analytic trainability guarantees more challenging. For a more detailed discussion of this caveat and a proof of Theorem~\ref{thm:adiabaticminimum} see Appendix~\ref{app:moving-min}.

Relatedly, it is worth mentioning that while Theorem~\ref{thm:adiabaticminimum} allows for polynomially shrinking step sizes in practise these step sizes are rather small. In particular, for the small problem sizes studied already in the literature practitioners have typically used larger step sizes than those that we have managed to derive guarantees for here. In parallel, we can see from our numerical implementations in Fig.~\ref{fig:UpdatingLandscape} that training would seem viable for larger $\delta t$ than allowed by our bounds. It is arguably an open question to what extent this can be attributed to looseness of our bounds or the small problem sizes that can be simulated classically. One thing to note in this regard is that optimization is often much more successful in practise than can be analytically guaranteed or even explained. As such, in practise, larger $\delta t$ may well be viable. This is specially relevant if one considers using adaptive approaches where $\delta t$ is modified at each step until a given precision threshold is reached. While heuristic, this method in the worst case enjoys the mathematical guarantees proven here, while in the best case allows for larger time-steps (and so reduces the average number of time-steps required in total). 

\begin{figure}
    \includegraphics[width=0.99\linewidth]{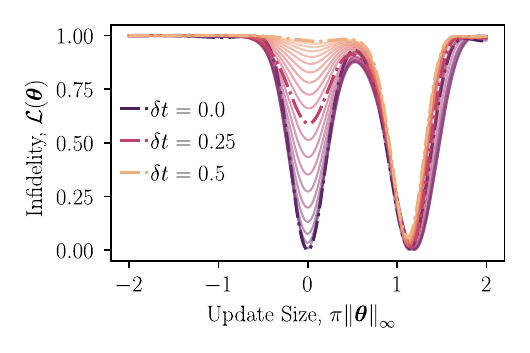}
    \caption{\textbf{Minimum jump.}
    Here we show a 1D-cut of the landscape $\mathcal{L}(\thv)$ as we increase the time-step $\delta t$. The cut includes the initial parameters-with update $\delta t = 0$ and $|| \vec{\theta} ||_\infty = 0$. We choose a 10 qubit Ising Hamiltonian $H=\sum X_iX_{i+1} - 0.95 \sum Y_i$ on a 1D-lattice. We use a 2-layered Hamiltonian Variational Ansatz.} 
    \label{fig:minimumjump}
\end{figure}

\medskip 

\subsection{Minima jumps and fertile valleys}
\label{sec:minimum-jump}

A final limitation of our analysis is that the adiabatic minimum (or indeed any minimum within the region with gradient guarantees) need not be a good minimum. The adiabatic minimum is the minimum that evolves away from the old minimum after the application of a time-step $\delta t$. However, it is possible that a different better minimum emerges in a different region of the landscape~\cite{campos2021abrupt}. That is, it is possible for the best minimum (or, more modestly, simply a significantly better minimum) to \textit{jump} from the initialization region to another region of the parameter landscape. As we only have lower bounds on the variance of the loss and convexity guarantees in the region around the initialization if the minimum jumps then we have no trainability guarantees to these superior minima. Moreover, if the full landscape has a barren plateau, which will be the case for most deep ans\"{a}tze~\cite{ragone2023unified, fontana2023theadjoint}, it may be very hard to train to this new minimum.

In Fig.~\ref{fig:minimumjump} we suggest that such apparent \textit{minimum jumps} can indeed occur. In particular we show a 1D cut of the landscape $\mathcal{L}(\thv)$ for different time-steps $\delta t$. The 1D cut includes both the `old minimum' at time $\delta t = 0$ and a new minimum that emerges for larger $\delta t$. Even after a short time-step $\delta t = 0.04$ the best minimum has jumped by a distance $|| \vec{\theta} ||_\infty \approx 0.8$. At this short time, the new minimum is only very slightly superior to the adiabatic minimum. However, at longer times the new minimum becomes substantially better. 

\begin{figure}
    \includegraphics[width=1.0\linewidth]{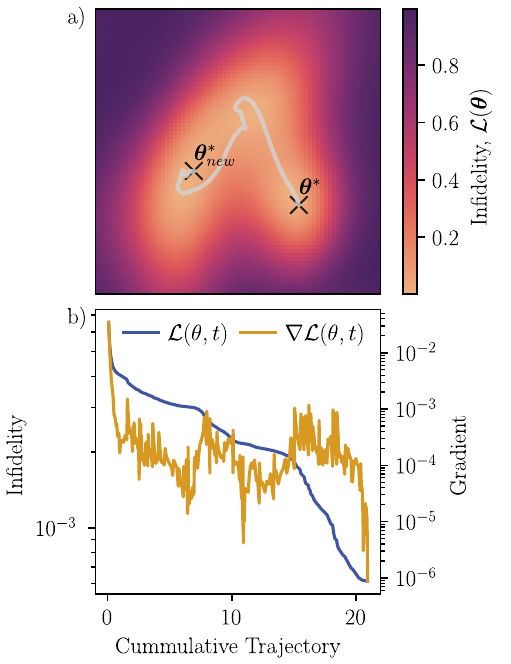}
    \caption{\textbf{Fertile valley.}
    a) Here we show a 2D plot of the loss landscape at $\delta t = 0.04$ for a 10 qubit Ising Hamiltonian $H=\sum X_iX_{i+1} - 0.95 \sum Y_i$ on a 1D-lattice and use a 2-layered Hamiltonian Variational Ansatz. $\vec{\theta_0}$ is the initial starting point and $\vec{\theta^*}$ is the true global minimum. The axes are chosen using principle component analysis to project the multi-dimensional space into a 2D-plane using ORQVIZ~\cite{rudolph2021orqviz}  and the white line is the projection of the optimization trajectory onto this 2D-plane. b) We plot the loss and directional loss gradient along the trajectory from the old to new minimum.} 
    \label{fig:fertile-valley-b}
\end{figure}

When a minimum jumps our theoretic guarantees developed in this manuscript lose most of their value. However, this does not mean that it is not possible to train (even in the case where the overall landscape exhibits a barren plateau). For training the `only thing' we need is a gradient flow, i.e., a path with substantial gradients, from the initialization to the new minimum. Such fertile valleys with nice gradients can theoretically exist on a barren plateau landscape but to what extent they occur in practise is currently unknown. 

In Fig.~\ref{fig:fertile-valley-b} we provide numerical evidence for a toy example of such a case. Specifically, we show a 2D cross-section of the landscape containing both an initialisation minimum and an apparently jumped minimum (marked by black crosses). We managed to successfully train from this initial minimum to the new minimum using the BFGS algorithm. This algorithm is a non-stochastic algorithm and so this indicates that there is indeed a trajectory between the two minima. As shown here in b) the gradients along this trajectory are of the order $10^{-3}$, in contrast to of the order of $10^{-6}$ on average over the landscape for a 10 qubit problem as shown in Fig.~\ref{fig:variance}. Thus, while a significant shot budget ($\sim 10^6$ shots) is likely needed for training it would seem that it is possible to train between these two minima without crossing into the most barren parts of the landscape.

The discussion in this section is necessarily heuristic. In our numerical investigations we found some minima jumps that we could train between (indicating a fertile valley) and other minima jumps where we could not. In the latter case there may or may not be fertile valleys. In both cases this is evidence for toy problems and at a small problem sizes (10 qubits). To what extent these phenomena occur at larger problem sizes, for more interesting problems and for relevant time-step sizes, remains entirely open.

\revadd{
\subsection{Outlook on other iterative approaches}\label{sec:extension-to-other-iterative}
So far, we have presented our results in the context of iteratively learning a circuit that implements $e^{-iHt}|\psi_0\rangle$. 
However, the same proof steps to guarantee substantial gradients can be re-interpreted and abstractified for other iterative approaches with a fidelity-type loss.

To intuitively see this, we note that core to Theorem~\ref{thm:variance-lower-bound} is the condition that an initial state of the current iteration (which is a solution of the previous iteration) $|\psi(\thv^*)\rangle = U(\thv^*) |\psi_0\rangle$ has a large overlap with the target state $e^{-iH\delta t}|\psi_0\rangle$. Indeed, precisely at this step, one could abstractify the setting beyond the variational quantum simulation. In particular, a target state for a given iteration need not represent a time-evolved state i.e., $e^{-iH\delta t} |\psi_0\rangle \rightarrow |\psi_{\rm target}\rangle$, and as long as the overlap between between the target state and the state initialized around the solution of the previous iteration is large, the region of attraction is theoretically guaranteed.

More precisely, consider any iterative method with a loss function expressed as
\begin{align}\label{eq:loss-extension-fidelity-type}
    \LC(\thv) = 1 - |\langle \psi_0 |U^\dagger(\thv) |\psi_{\rm target} \rangle|^2 \;,
\end{align}
where $|\psi_{\rm target}\rangle$ is a target state for the current iteration and $\thv^*$ is a set of parameters obtained from the previous iteration. Then, we have the following theorem regarding its attraction region. 
\begin{theorem}[A substantial gradient region, Informal]\label{th:substantial_gradient}
Consider any iterative method with the fidelity-type loss in Eq.~\eqref{eq:loss-extension-fidelity-type} and the general ansatz in Eq.~\eqref{eq:circuit} and 
the same setting for $\rho_0$ and $U(\thv)$ as in Theorem~\ref{thm:variance-lower-bound}. 
Given that the overlap between the target and the state around intialization with $\thv^*$ follows
\begin{align}
     \left| \langle \psi(\thv^*) | \psi_{\rm target}\rangle \right|^2 \geq \frac{1}{2} \;,
\end{align}
and we consider uniformly sampling parameters in a hypercube of width $2r$ around the solution from the previous iteration $\thv^*$, i.e. $\vol(\thv^*,r)$, such that 
\begin{align}
 r \in \Theta\left(\frac{1}{\sqrt{M}} \right) \;. 
\end{align}
Then the variance at any iteration of the algorithm is lower bounded as 
\begin{align}
    \Var_{\vtheta \sim\uni(\vtheta^*, r)} \left[ \LC (\vtheta)\right] \in \Omega\left( \frac{1}{M}\right) \;.
\end{align}
Thus, for $M \in \OC(\poly(n))$ we have 
\begin{align}
    \Var_{\vtheta \sim\uni(\vtheta^*, r)} \left[ \LC (\vtheta)\right] \in \Omega\left( \frac{1}{\poly(n)}\right) \;.
\end{align}
\end{theorem}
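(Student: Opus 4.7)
The plan is to adapt the proof of Theorem~\ref{thm:variance-lower-bound} nearly verbatim, with the only substantive change being that the small-$\delta t$ hypothesis---which in the original setting was used solely to guarantee that the previously optimized state has overlap at least $1/2$ with the new target state $e^{-iH\delta t}\ket{\psi_0}$---is replaced by the direct overlap assumption $|\langle\psi(\thv^*)|\psi_{\rm target}\rangle|^2\geq 1/2$. All remaining manipulations are ansatz-level and never reference the particular form of $\ket{\psi_{\rm target}}$, so they should transfer without incident.

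First I would shift coordinates to $\delta\thv=\thv-\thv^*$, so that $\thv\sim\uni(\thv^*,r)$ becomes $\delta\thv\sim\uni(\vec 0,r)$, and the loss reads $\LC(\thv^*+\delta\thv)=1-|\langle\psi_0|U^\dagger(\thv^*+\delta\thv)|\psi_{\rm target}\rangle|^2$. Next, I would apply the law of total variance to isolate the contribution of the first parameter,
\begin{align}
\Var_{\thv\sim\uni(\thv^*,r)}[\LC(\thv)]\geq \Ebb_{\thv_{>1}}\!\left[\Var_{\theta_1}[\LC(\thv)\mid\thv_{>1}]\right].
\end{align}
Factoring the ansatz as $U(\thv)=\tilde W(\thv_{>1})\,V_1\,U_1(\theta_1)$ and expanding $U_1(\theta_1)=\cos(\theta_1)\mathds{1}-i\sin(\theta_1)\sigma_1$ yields an explicit trigonometric dependence of $\LC$ on $\theta_1$. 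The orthogonality condition $\Tr[\rho_0\sigma_1\rho_0\sigma_1]=0$ annihilates the cross term, and elementary integration over $\theta_1\in[\theta_1^*-r,\theta_1^*+r]$ returns a quantity of order $r^2$ multiplied, up to constants, by the squared overlap $|\langle\psi(\thv^*,\thv_{>1})|\psi_{\rm target}\rangle|^2$ in which only the non-first parameters have been displaced from $\thv^*$.

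Third, I would take the expectation over $\thv_{>1}\in\vol(\vec 0,r)$ of this overlap. Mirroring the original proof, for $r\in\Theta(1/\sqrt M)$ an operator-norm bound on the small-angle circuit difference $U(\thv^*)-U(\thv^*+\delta\thv_{>1})$ together with Cauchy--Schwarz keeps the averaged overlap within a constant factor of $|\langle\psi(\thv^*)|\psi_{\rm target}\rangle|^2$, which by hypothesis is at least $1/2$. Collecting constants gives the advertised bound $\Var_{\thv\sim\uni(\thv^*,r)}[\LC(\thv)]\in\Omega(r^2)=\Omega(1/M)$, and hence $\Omega(1/\poly(n))$ whenever $M\in\OC(\poly(n))$.

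The main obstacle will be auditing the concentration step in the last paragraph to confirm that it nowhere secretly uses closeness of $\ket{\psi_{\rm target}}$ to $\ket{\psi(\thv^*)}$ beyond what the overlap hypothesis provides. Since that step relies only on a bound of the form $\|\tilde W(\thv_{>1})-\tilde W(\vec 0)\|_{\rm op}\leq\sum_i|\delta\theta_i|\leq Mr$ and on Cauchy--Schwarz applied to the resulting inner products, the argument should carry over without modification: the assumption $|\langle\psi(\thv^*)|\psi_{\rm target}\rangle|^2\geq 1/2$ is precisely the conclusion that the variational-simulation proof extracts from its $\delta t\in\OC(1/\lambda_{\rm max})$ bound, and it enters the remainder of the argument only through this inequality. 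A careful bookkeeping of all places where the original proof invokes that overlap estimate is therefore what the extension amounts to.
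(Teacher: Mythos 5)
Your high-level strategy is exactly the paper's: the proof of Theorem~\ref{th:substantial_gradient} reuses the exact variance lower bound of Proposition~\ref{prop:variance-lower-bound} (with $\rho_{(\thv^*,\delta t)}$ replaced by $\rho_{\rm target}$) and simply substitutes the hypothesis $F_{\rm target}(\thv^*)=|\langle\psi(\thv^*)|\psi_{\rm target}\rangle|^2\geq 1/2$ for the step where Lemma~\ref{lem:bound_fidelity} converted the $\delta t$ constraint into an overlap bound, yielding $\Delta_{\thv^*}\geq 2F_{\rm target}(\thv^*)-1$. Your first two steps (isolating the variance contributed by $\theta_1$ and using $\Tr[\rho_0\vsigma_1\rho_0\vsigma_1]=0$ to extract a factor of order $r^2$ times a squared signal term) are consistent with that machinery.

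However, there is a genuine gap in your third step. You propose to control the average over $\thv_{>1}$ of the signal term via an operator-norm bound $\|U(\thv^*)-U(\thv^*+\delta\thv_{>1})\|_{\rm op}\leq\sum_i|\delta\theta_i|\leq Mr$. For $r\in\Theta(1/\sqrt{M})$ this bound is of order $\sqrt{M}$ and therefore vacuous: it cannot keep the perturbed overlap within a constant factor of $F_{\rm target}(\thv^*)$, and the argument does not close at the claimed scaling. The mechanism the paper actually relies on is different and essential: because the parameters are averaged over a symmetric interval, the first-order (linear in $\delta\theta_i$) contributions vanish, and each integrated parameter degrades the weight on the faithful component only by a factor $k_+=\Ebb[\cos^2\alpha]=1-\OC(r^2)$. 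The recursive construction of $\widetilde{\rho}_{M-l}$ in Eqs.~\eqref{eq:thm1-proof6}--\eqref{eq:thm1-proof6-2} shows that after integrating out $M-1$ parameters the effective state is $k_+^{M-1}U^\dagger(\thv^*)\rho_{\rm target}U(\thv^*)+(1-k_+^{M-1})\xi$ with $k_+^{M-1}\geq 1-(M-1)r^2/3$, which remains a constant bounded away from the threshold $1/(2F_{\rm target}(\thv^*))\leq 1$ precisely when $r^2\lesssim r_0^2/M$. In short, the accumulation of error across the $M-1$ parameters is quadratic in $r$ per parameter on average, not linear in the worst case, and your triangle-inequality bound only captures the latter. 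Replacing your concentration step with this recursive averaging (or, equivalently, with Jensen applied to the exact expectation $\Ebb_{\thv_{>1}}[W^\dagger(\thv_{>1})\rho_{\rm target}W(\thv_{>1})]$ computed gate by gate) is what is needed to complete the proof.
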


Theorem~\ref{th:substantial_gradient} can be applied to a wide range of variational tasks including: i. simulating \textit{imaginary time evolution} for ground or thermal state preparation (see Appendix~\ref{app:imaginary}) and ii. \textit{learning an unknown target unitary}, where one learns the unitary itself rather than its effect on a fixed initial state ( Appendix~\ref{app:extension}). 
Similarly to the constraint on $\delta t$ in Theorem~\ref{thm:variance-lower-bound}, the large fidelity condition in Theorem~\ref{th:substantial_gradient} is a sufficient condition to have such a guarantee, but it may not be necessary. This large fidelity condition is satisfied for dynamics learning by making $\delta t$ small. How to satisfy it in other iterative methods will vary on a case-by-case basis. 
We note that while we expect the theoretical results for convexity and adiabatic minimum to also hold generally, this is challenging to rigorously show without further considering specific details of the iterative approach that we are interested in.

Lastly, while we have focused on fidelity-type losses, we believe our approach can be extended to other losses composed of the expectation of a generic observable $O$. This is relevant for several other tasks, including finding a ground state of some Hamiltonian via a perturbative approach. For example, one could consider a perturbative Hamiltonian of the form $H(\alpha) = (1-\alpha) H_0 + \alpha H_{\rm target}$ where $\alpha \in [0,1]$ is a perturbation with $H_0$ being a Hamiltonian that has an easier-to-prepare ground state and $H_{\rm target}$ being the Hamiltonian one wants to learn the ground state of. Then, one could imagine starting with a circuit for preparing the ground state of $H_0$ and iteratively increasing the perturbation $\alpha$ bit by bit and applying the variational quantum eigensolver between each iteration~\cite{PRXQuantum.2.020329,10.1145/3479197}. 
If the perturbations do not pass through a phase transition then such an iterative scheme is plausible and could potentially be characterised in a similar manner to as we have done here. In particular, one would need a significant overlap between an initial state and a ground state of $H(\alpha)$ to ensure substantial gradients in the initialization region at each iteration. Carrying out these calculations is beyond scope of this work and we leave them for future investigation.}

\section{Discussion}

Thanks to significant progress in recent years, the barren plateau phenomenon, defined as an average statement for an entire loss landscape, is by now technically well understood~\cite{ragone2023unified, fontana2023theadjoint}. However, prior analyses are consistent with different accounts of the behavior of the loss landscape in the subregions most important for optimization. In this work we have taken steps to address these open questions by investigating a popular iterative variational circuit compression scheme~\cite{otten2019noise, benedetti2020hardware, barison2021efficient, lin2021real, berthusen2022quantum, haug2021optimal, gentinetta2023overhead}. The iterative nature of this algorithm ensures that variational problem is repeatedly warm started at each iteration of the variational scheme.

Theorem~\ref{thm:variance-lower-bound} establishes that for short enough time-steps the loss variance is guaranteed to decrease at worst polynomially in the number of parameters $M$ in a hypercube with a width that scales as $1/\sqrt{M}$ around the new initialization. Theorem~\ref{thm:convex} strengthens this result by arguing that in a region $\sim 1/M^2$ around the initialization the landscape will be approximately convex. Finally, we sew together these results with a bound on the distance the adiabatic minimum (Definition~\ref{def:adiabaticminimum}) can move after applying a \revadd{time-}step of length $\delta t$. Thus in Theorem~\ref{thm:adiabaticminimum} we establish that as long as the time-step is decreased polynomially with the number of trainable parameters in the ansatz the adiabatic minimum remains in the approximately convex region with substantial gradients. Hence we show that by decreasing the time-step appropriately one should be able to train to a new minimum. 

Our analysis leaves room for further research opportunities. For one, the analytic bounds provided here are lower bounds. We do not here provide upper bounds. Thus our analysis leaves open the question of whether the region exhibiting polynomial gradients strictly decreases as $1/\sqrt{M}$ or whether potentially a larger region exhibits substantial gradients. Our numerical implementations (Fig.~\ref{fig:variance}) suggest that for the problems we have looked at this $1/\sqrt{M}$ is reasonable. However, analytic upper bounds to verify this would be more satisfying. 

Moreover, whether these bounds are to be viewed positively or negatively remains open. While in `complexity-theory-land' polynomial guarantees are typically satisfactory, in practise polynomially vanishing gradients, in polynomially shrinking regions, with polynomially shrinking step sizes may not be that appealing. In particular, the $\delta t$ values that enjoy guarantees via Theorem~\ref{thm:adiabaticminimum} are typically smaller than those used currently by practitioners for the small problem sizes accessible currently. To what extent these bounds can be tightened versus to what extent they indicate a fundamental limitation remains to be seen. Indeed, there is always the possibility that heuristically the optimization turns out to be more effective than analytic guarantees would suggest (as is typically the case for optimizing classical machine learning models). 

Here we have pushed our analysis beyond a conventional average case analyses for the full loss landscape. However, our analysis is still fundamentally an average case analysis within a hypercube around an initialization. The limitations of this are highlighted by our inability to analytically describe the minimum jumps and fertile valleys that we numerically observe in Fig.~\ref{fig:minimumjump} and Fig.~\ref{fig:fertile-valley-b}. To analytically study such phenomena new theoretical tools will need to be developed to analyse quantum landscapes. 

We remark that recent work has highlighted a strong link between provable absence of barren plateaus and the classical simulability and surrogatability of the hybrid optimisation loop of a variational quantum algorithm~\cite{cerezo2023does, bermejo2024quantum, angrisani2024classically,lerch2024efficient}. The lower bounds obtained here are consistent with these claims~\cite{lerch2024efficient}. In particular, for classically simulable initial states one could perform early iterations fully classically and then later iterations by collecting data from quantum computer and then training a classical surrogate of the landscape. We leave a discussion of the relative merits of this approach to future work.

\revadd{Finally, despite largely discussed in the context of an iterative variational scheme for quantum dynamics with a fixed initial state, most of our results are believed to be applicable more generally to other iterative methods. Intuitively, the substantial gradient region is a consequence of initializing in a state that has a large overlap with a target state. While we illustrate this conjecture by discussing the result for any iterative setting with a fidelity-like loss, such as learning the entire unknown unitary, there remain many open questions. 
One of which, for example, is to extend this to an iterative method with other loss types such as for learning the ground state energy of some Hamiltonian. More generally, it would also be interesting to explore guarantees beyond iterative approaches for other warm-start intialization strategies.}

\section{Acknowledgments}
We thank Stefan Woerner, Christa Zoufal and Matthis Lehmkuehler for insightful conversations.  We further thank Manuel Rudolph for helpful comments on a draft and his ORQVIZ wizardry. R.P. and M.D. acknowledge the support of the SNF Quantum Flagship Replacement Scheme (grant No. 215933). S.T. and Z.H. acknowledge support from the Sandoz Family Foundation-Monique de Meuron program for Academic Promotion. 
\section{Code Availability}
All plots in this paper were simulated using Qiskit \cite{qiskit}  and the code is available in \url{https://github.com/MarcDrudis/WarmStartCaseStudy}

\bibliography{quantum, bibliography}

\onecolumngrid

\newpage
\appendix

\section{Preliminaries}

In this section, we briefly review some analytical tools and concepts that will be used through out the other sections. 
\subsection{Relation between Hessian of the loss function and quantum Fisher information}

Given the optimal parameters obtained from the previous iteration $\thv^*$ and a time-step $\delta t$ of the Hamiltonian $H$, the loss function at the current iteration is of the form
\begin{align}
    \LC (\thv) & =  1 - \left| \langle \psi_0 | U^\dagger(\vtheta) e^{-iH \delta t} U(\thv^*) |\psi_0  \rangle\right|^2 \\
    & = 1 - F\left( U(\thv) |\psi_0\rangle, e^{-iH \delta t} U(\thv^*) |\psi_0  \rangle \right)
\end{align}
where $F\left(|\psi\rangle, |\phi\rangle\right)$ is a fidelity between two pure states $|\psi\rangle$ and $|\phi\rangle$. We remark that although $\delta t$ is often taken as fixed and not optimised during the training process, the loss function also implicitly depends on $\delta t$. 

The warm-start strategy is to initialise the training of the current iteration around $\thv^*$. To analyse the trainability of this strategy, we often consider the expansion of the loss around $\thv^*$ and $\delta t = 0$. In this context, it is convenient to write $\vec{x} = (\thv - \thv^*, \delta t)$ and $F(\vec{x}) := F\left( U(\thv) |\psi_0\rangle, e^{-iH \delta t} U(\thv^*) |\psi_0  \rangle \right)$. Upon expanding the loss around $\vec{x} = \vec{0}$, the connection between the Hessian of the loss function and the quantum fisher information is    
\begin{align}
    \nabla^2_{\vec{x}} \LC(\vec{x})\; \big|_{\vec{x}= \vec{0}}= - \nabla^2 F(\vec{x})\; \big|_{\vec{x}= \vec{0}} =  \frac{1}{2} \FC(\vec{0}) \;,
\end{align}
where $\FC(\vec{0})$ is the quantum fisher information evaluated at $\vec{x} = \vec{0}$ and measures how the quantum state $U(\thv^*) |\psi_0  \rangle$ is sensitive to local perturbations around $\thv^*$ and $\delta t =0$ ~\cite{appconvex_fisherproperties, larocca2021theory, Toth_2014}.

\subsection{Taylor remainder theorem}\label{app:taylor}

We present the Taylor remainder theorem which expresses a multivariate differentiable function as a series expansion. We refer the reader to Ref.~\cite{taylorbook} for further details.    

\begin{theorem}[Taylor reminder theorem] \label{thm:taylor}
Consider a multivariate differentiable function $f(\vec{x})$ such that $f: \mathbb{R}^N \rightarrow \mathbb{R}$ and some positive integer $K$. The function $f(\vec{x})$ can be expanded around some fixed point $\vec{a}$ as
\begin{align}
    f(\vec{x}) = \sum_{k = 0}^{K} \sum_{i_1, i_2, ...,i_k}^N \frac{1}{k!} \left( \frac{\partial^k  f(\vec{x})}{\partial x_{i_1} \partial x_{i_2} ... \partial x_{i_k}} \right)\bigg|_{\vec{x} = \vec{a}} (x_{i_1} - a_{i_1})(x_{i_2} - a_{i_2}) ... (x_{i_k} - a_{i_k}) + R_{K,\vec{a}}(\vec{x}) \;,
\end{align}
where the remainder is of the form
\begin{align}
     R_{K,\vec{a}}(\vec{x}) = \sum_{i_1, i_2, ...,i_{K+1}}^N \frac{1}{(K+1)!} \left( \frac{\partial^{K+1}  f(\vec{x})}{\partial x_{i_1} \partial x_{i_2} ... \partial x_{i_{K+1}}} \right)\bigg|_{\vec{x} = \vec{\nu}} (x_{i_1} - a_{i_1})(x_{i_2} - a_{i_2}) ... (x_{i_{K+1}} - a_{i_{K+1}}) \;,
\end{align}
with $\vec{\nu} = c \vec{x} + (1-c) \vec{a}$ for some $c \in [0,1]$. 
\end{theorem}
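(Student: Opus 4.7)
The plan is to reduce this multivariate statement to the single-variable Taylor theorem with Lagrange remainder by restricting $f$ to the line segment joining $\vec{a}$ and $\vec{x}$. Specifically, I would define the auxiliary one-variable function
\begin{equation}
g(t) := f\bigl(\vec{a} + t(\vec{x}-\vec{a})\bigr), \qquad t\in[0,1],
\end{equation}
so that $g(0)=f(\vec{a})$ and $g(1)=f(\vec{x})$. Assuming $f$ is $(K{+}1)$-times continuously differentiable on a neighbourhood of the segment (which is the regularity implicit in the statement), $g$ inherits the same regularity on $[0,1]$, so the classical single-variable Taylor theorem yields
\begin{equation}
g(1) \;=\; \sum_{k=0}^{K} \frac{g^{(k)}(0)}{k!} \;+\; \frac{g^{(K+1)}(c)}{(K+1)!}
\end{equation}
for some $c\in[0,1]$.

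Next I would compute $g^{(k)}(t)$ explicitly via the multivariate chain rule. A short induction on $k$ gives
\begin{equation}
g^{(k)}(t) \;=\; \sum_{i_1,\dots,i_k=1}^{N} \left.\frac{\partial^{k} f}{\partial x_{i_1}\cdots \partial x_{i_k}}\right|_{\vec{a}+t(\vec{x}-\vec{a})} \, (x_{i_1}-a_{i_1})\cdots (x_{i_k}-a_{i_k}).
\end{equation}
The base case $k=1$ is exactly the chain rule; the inductive step differentiates each partial derivative once more in $t$, producing an additional sum over $i_{k+1}$ together with the factor $(x_{i_{k+1}}-a_{i_{k+1}})$, with the mixed partial behaving symmetrically by Clairaut's theorem under the stated smoothness.

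Plugging $t=0$ into this formula reproduces precisely the $k$-th order term of the claimed expansion, while plugging $t=c$ into the remainder term of the single-variable theorem gives $R_{K,\vec{a}}(\vec{x})$ with the evaluation point $\vec{\nu}=\vec{a}+c(\vec{x}-\vec{a})=c\vec{x}+(1-c)\vec{a}$, matching the statement. The only step requiring genuine care is the chain-rule induction for $g^{(k)}$: one must verify the index bookkeeping so that all $N^{k}$ ordered multi-indices appear (not just unordered ones), since the formulation in the theorem uses the unsymmetrised sum over all tuples $(i_1,\dots,i_k)$. This is exactly what the chain rule produces, so no combinatorial symmetry factor is needed, but it is the place where a careless derivation would most easily go wrong.
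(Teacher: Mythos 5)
Your proof is correct: the reduction to the one-variable Lagrange-remainder theorem via $g(t)=f(\vec{a}+t(\vec{x}-\vec{a}))$, together with the chain-rule induction giving the unsymmetrised sum over all $N^{k}$ ordered index tuples, is exactly the standard argument for this statement. The paper itself offers no proof here — it states the theorem and defers to a textbook reference — and your argument is precisely the canonical one that reference would supply, so there is nothing to contrast.
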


\medskip

As an example, we apply the Taylor remainder theorem to prove the following statement.
\begin{lemma}\label{lem:bound_fidelity}
The fidelity between two pure states $\rho$ and $ e^{-i H t  }\rho e^{i H  t }$ (with the Hamiltonian $H$) can be upper bounded as
\begin{equation}
    F \left(\rho, e^{-i  t H }\rho e^{i t H}\right) \geq  1 - 2 \lambda_{\rm max}^2 t^2 
\end{equation}

where $\lambda_{\rm max}$ is the largest eigenvalue of $H$. 
\end{lemma}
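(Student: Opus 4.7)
Since $\rho$ is pure, write $\rho = \ketbra{\psi}{\psi}$, and note that for pure states the fidelity reduces to $F(\rho, e^{-iHt}\rho e^{iHt}) = |\langle \psi | e^{-iHt} | \psi\rangle|^2$. The plan is to define the scalar function
\begin{equation}
f(t) := |\langle \psi | e^{-iHt} | \psi\rangle|^2 = g(t)\, \overline{g(t)}, \qquad g(t) := \langle \psi | e^{-iHt} | \psi\rangle,
\end{equation}
and apply the Taylor remainder theorem (Theorem~\ref{thm:taylor}) to $f$ around $t=0$ with $K=1$, obtaining $f(t) = f(0) + f'(0)\, t + \tfrac{1}{2} f''(\nu)\, t^2$ for some $\nu \in [0,t]$. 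The claim will then follow by computing $f(0), f'(0)$ explicitly and lower-bounding $f''(\nu)$ uniformly in $\nu$.

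First I would evaluate the low-order terms. Clearly $f(0)=1$. For the first derivative, $g'(t) = -i\langle \psi | H e^{-iHt} | \psi\rangle$, so $f'(t) = g'(t)\overline{g(t)} + g(t)\overline{g'(t)} = 2\,\mathrm{Re}[g'(t)\overline{g(t)}]$, and at $t=0$ this gives $f'(0) = 2\,\mathrm{Re}[-i \langle \psi|H|\psi\rangle] = 0$ since $\langle\psi|H|\psi\rangle$ is real. Thus the Taylor expansion collapses to $f(t) = 1 + \tfrac{1}{2} f''(\nu)\, t^2$.

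Next I would bound $f''(\nu)$ from below. Differentiating once more,
\begin{equation}
f''(t) = 2\,\mathrm{Re}\!\left[g''(t)\overline{g(t)}\right] + 2\,|g'(t)|^2,
\end{equation}
with $g''(t) = -\langle \psi | H^2 e^{-iHt}|\psi\rangle$. The second term is non-negative, and for the first term I would use Cauchy--Schwarz together with the operator-norm bound $\|H^2 e^{-iHt}\|_\infty \leq \lambda_{\max}^2$ (since $e^{-iHt}$ is unitary), giving $|g''(t)| \leq \lambda_{\max}^2$ and $|g(t)| \leq 1$. Therefore
\begin{equation}
f''(\nu) \geq -2\,|g''(\nu)|\,|g(\nu)| \geq -2\lambda_{\max}^2.
\end{equation}
Substituting back yields $f(t) \geq 1 - \lambda_{\max}^2 t^2$, which is in fact stronger than the stated bound $1 - 2\lambda_{\max}^2 t^2$; a looser bookkeeping that drops the Cauchy--Schwarz refinement (e.g.\ bounding $|f''| \leq 4\lambda_{\max}^2$ via the triangle inequality on both terms) reproduces the constant $2$ as stated.

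The main (minor) obstacle is the careful computation and uniform bounding of the second derivative: one must track complex conjugates correctly and invoke the unitarity of $e^{-iHt}$ to control $\|H^k e^{-iHt}|\psi\rangle\|$ by $\lambda_{\max}^k$. Everything else is a direct application of Theorem~\ref{thm:taylor}.
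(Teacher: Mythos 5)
Your proposal is correct and follows essentially the same route as the paper: a second-order Taylor expansion with remainder about $t=0$, the observation that the first-order term vanishes, and a uniform operator-norm bound on the second derivative evaluated at the intermediate point. The only difference is bookkeeping: the paper bounds the second derivative of $F$ via the nested commutator $\Tr\left(\rho\, e^{-iH\tau} i\left[i\left[\rho,H\right],H\right]e^{iH\tau}\right)$ and H\"{o}lder's inequality, obtaining $|F''|\leq 4\lambda_{\rm max}^2$, whereas your decomposition $f''=2\,\mathrm{Re}[g''\bar g]+2|g'|^2$ exploits the non-negative term $2|g'|^2$ and yields the slightly sharper constant $F\geq 1-\lambda_{\rm max}^2 t^2$.
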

\begin{proof} 
First, we denote $F(t) := F \left(\rho, e^{-i t H }\rho e^{i  t H}\right)$. By using Theorem~\ref{thm:taylor} (expanding around $t = 0$ up to the second order), the fidelity is of the form 
\begin{align}
    F(t) = 1 + \frac{t^2}{2} \left(\frac{d^2 F(t)}{dt^2}\right) \bigg|_{t = \tau}  \;,
\end{align}
where the zero order term is $1$, the first order term is zero by a direct computation and the second order term is evaluated at some $\tau \in [0, t]$. We can then bound the second derivative as the following
\begin{align}
    \left(\frac{d^2 F(t)}{dt^2}\right) \bigg|_{t = \tau}& = \Tr\left(\rho  e^{- i H \tau} i \left[ i \left[ \rho, H \right],H \right]e^{iH\tau} \right) \\
    & \leq \| \rho  \|_1 \| e^{- i H \tau} i \left[ i \left[ \rho, H \right],H \right]e^{iH\tau} \|_\infty \\
    & \leq 4 \lambda_{\rm max}^2 \;,
\end{align}
where the first inequality is due to H\"{o}lder's inequality. In the second inequality, we rely on the following identities: (i) $\| \rho \|_1 = 1$ for a pure state, (ii) the unitary invariance of the Schatten p-norm i.e., $\| U A\|_p = \| A\|_p$ for any unitary $U$, (iii) $\| i [A,B] \|_p \leq 2 \|A\|_p \|B\|_p$ and lastly (iv) $\|AB\|_p \leq \|A\|_p \|B\|_p$. 
Thus, the fidelity can be lower bounded as
\begin{align}
    F(t) \geq 1 - 2 \lambda_{\rm max}^2 t^2 \;.
\end{align}
This completes the proof.
\end{proof}

\subsection{Approximate convexity}\label{app:epsilonconvexity}
In this section, we provide a formal explanation of our definition of an $\epsilon$-convex function. We start by defining what convexity is (see for example Ref.~\cite{convexity}) and we relate it to our notion of $\epsilon$-convexity. 

\begin{definition}[Convexity]\label{def:convex}
A differentiable function of several variables $f: \mathbb{R}^N\to \mathbb{R}$ is convex in a region $ \mathcal{R}$,  if and only if for all $\vec{x},\vec{y}\in\mathcal{R}$ the function fulfils
 \begin{equation}
      f(\vec{x}) \geq f(\vec{y}) + \nabla f(\vec{y})\cdot(\vec{x}-\vec{y}) \ .
 \end{equation}
or equivalently, $\nabla^2 f(\vec{x})$ is positive semi-definite. Here $\nabla^2 {f}(\vec{x})$ denotes the Hessian of $f(\vec{x})$. Notice that we use $\nabla f(\vec{y}) = \nabla_{\Tilde{\vec{y}}} f(\Tilde{\vec{y}})|_{\Tilde{\vec{y}} = {\vec{y}}}\ , \nabla^2f(\vec{y}) =\nabla^2_{\Tilde{\vec{y}}} f(\Tilde{\vec{y}})|_{\Tilde{\vec{y}} = {\vec{y}}}$
\end{definition}

Informally this means that all the tangent planes to $f$ are below $f$ in the region $\mathcal{R}$. This is shown for one variable in Fig. \ref{fig:eps-convex}~a). 

\setcounter{definition}{0}
\begin{definition}[$\epsilon$-approximate convexity.]\label{def:epsilon-convex-app}
 A differentiable function of several variables $f: \mathbb{R}^N\to \mathbb{R}$ is $\epsilon$-convex in a region  $ \mathcal{R}$ if 
 \begin{equation}
     \left[\nabla^2 f(\vec{x})\right]_{\rm min} \geq -|\epsilon| \, 
 \end{equation}
for all $\vec{x} \in \mathcal{R}$. Here $\nabla^2 {f}(\vec{x})$ denotes the Hessian of $f(\vec{x})$ and we denote $\left[ A \right]_{\rm min}$ as the smallest eigenvalue of the matrix $A$.
\end{definition}

If a function is $\epsilon$-convex in a finite region $\mathcal{R}$, then we can show an equivalent intuition to the one for convexity. Indeed, if a function $f$ is $\epsilon$-convex in a finite region $\mathcal{R}$, then we can say that an ``$\epsilon$-displacement'' in every tangent line is enough to make it below $f$ in $\mathcal{R}$ as shown in Fig. \ref{fig:eps-convex}~b).

\begin{proposition}
  If a differentiable function of several variables $f: \mathbb{R}^n\to \mathbb{R}$ is $\epsilon$-convex in finite a region $\mathcal{R}$, then for all $ \vec{x},\vec{y}\in\mathcal{R}$ the function fulfils 
   \begin{equation}
         f(\vec{x}) \geq f(\vec{y}) + \nabla f(\vec{y})\cdot(\vec{x}-\vec{y}) - |\epsilon| \alpha  \ ,
   \end{equation}
    where $\alpha = \frac{1}{2}\max_{\vec{a}, \vec{b}\in \mathcal{R}} ||\vec{a}-\vec{b}||_2^2 $.
\end{proposition}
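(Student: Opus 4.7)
The plan is to invoke the second-order Taylor remainder theorem (as stated in Theorem~\ref{thm:taylor}, applied with $K=1$) to expand $f(\vec{x})$ around $\vec{y}$, and then control the remainder using the eigenvalue hypothesis from Definition~\ref{def:epsilon-convex-app}. Specifically, for any $\vec{x},\vec{y}\in\mathcal{R}$, Taylor's theorem yields a point $\vec{\xi} = c\vec{x} + (1-c)\vec{y}$ with $c\in[0,1]$ such that
\begin{equation}
    f(\vec{x}) = f(\vec{y}) + \nabla f(\vec{y})\cdot(\vec{x}-\vec{y}) + \tfrac{1}{2}(\vec{x}-\vec{y})^T \nabla^2 f(\vec{\xi})\,(\vec{x}-\vec{y}) \,.
\end{equation}
Provided $\mathcal{R}$ is convex (which applies in our setting since the regions of interest are hypercubes $\vol(\vec{\phi},r)$), the intermediate point $\vec{\xi}$ lies in $\mathcal{R}$, so the $\epsilon$-convexity hypothesis applies at $\vec{\xi}$.

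Next, I would lower-bound the quadratic remainder using the Rayleigh--Ritz characterization of the smallest eigenvalue. For any symmetric matrix $A$ and vector $\vec{v}$, $\vec{v}^T A \vec{v} \geq [A]_{\min}\,\|\vec{v}\|_2^2$. Applied to $A = \nabla^2 f(\vec{\xi})$ with $[\nabla^2 f(\vec{\xi})]_{\min}\geq -|\epsilon|$, this gives
\begin{equation}
    \tfrac{1}{2}(\vec{x}-\vec{y})^T \nabla^2 f(\vec{\xi})\,(\vec{x}-\vec{y}) \;\geq\; -\tfrac{|\epsilon|}{2}\,\|\vec{x}-\vec{y}\|_2^2 \,.
\end{equation}

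Finally, I would bound $\|\vec{x}-\vec{y}\|_2^2$ by the diameter of the region. Since $\vec{x},\vec{y}\in\mathcal{R}$, by definition of $\alpha$ we have $\|\vec{x}-\vec{y}\|_2^2 \leq 2\alpha$, so the remainder is at least $-|\epsilon|\alpha$. Substituting this into the Taylor expansion yields the desired inequality. The main obstacle is essentially bookkeeping: one must ensure that $\mathcal{R}$ is convex so that $\vec{\xi}\in\mathcal{R}$ and the $\epsilon$-convexity hypothesis is available along the segment; and one must be careful with the factor of $1/2$ in the definition of $\alpha$ so that the bound tightens exactly to $|\epsilon|\alpha$ rather than $2|\epsilon|\alpha$. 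Both are routine, so I expect the proof to be short and essentially a direct application of Taylor's theorem combined with Rayleigh--Ritz.
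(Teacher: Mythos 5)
Your proposal is correct and follows essentially the same route as the paper: a second-order Taylor expansion with Lagrange remainder around $\vec{y}$, a Rayleigh--Ritz lower bound on the quadratic form using $[\nabla^2 f(\vec{\xi})]_{\min}\geq-|\epsilon|$, and a bound on $\|\vec{x}-\vec{y}\|_2^2$ by the squared diameter of $\mathcal{R}$. If anything, your bookkeeping is slightly more careful than the paper's, which asserts $(\vec{x}-\vec{y})^T(\vec{x}-\vec{y})\leq\alpha$ rather than the correct $\leq 2\alpha$ (the stated conclusion $-|\epsilon|\alpha$ still holds either way because of the factor $\tfrac{1}{2}$ in front of the remainder), and you rightly flag the convexity of $\mathcal{R}$ needed to place the intermediate point $\vec{\xi}$ inside it.
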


\begin{proof}
    First we recall the Taylor reminder theorem in \ref{app:taylor}. Then we can expand the function $f$ around the point $\vec{y}$ as
    \begin{align}
        f(\vec{x}) =& f(\vec{y}) +\sum_i^n\frac{\partial f(\vec{q})}{\partial q_i }\bigg|_{\vec{q} = \vec{y}} (x_i-y_i) + \frac{1}{2}\sum_{i,j}^n\frac{\partial^2 f(\vec{q})}{\partial q_i\partial q_j}\bigg|_{\vec{q} = \vec{z}} (x_j-y_j)(x_i-y_i)\\
        =&f(\vec{y}) + \nabla f(\vec{y}) \cdot (\vec{x}-\vec{y}) + \frac{1}{2}(\vec{x}-\vec{y})^T \nabla^2 f(\vec{z})(\vec{x}-\vec{y}) \ ,
    \end{align}
    for some $\vec{z} = c\vec{x} + (1-c)\vec{y}$ with some $c \in [0,1]$. 
    In the last equality we wrote the expression in its vector form for simplicity.

     Now if we apply the notion of $\epsilon$-convexity we can lower-bound the right-hand side of the previous equality to find
     \begin{equation}
        f(\vec{x}) \geq f(\vec{y}) + \nabla f(\vec{y}) \cdot (\vec{x}-\vec{y}) - \frac{1}{2}|\epsilon|(\vec{x}-\vec{y})^T (\vec{x}-\vec{y}) \ ,
    \end{equation}
    which can be further bounded by using that $(\vec{x}-\vec{y})^T (\vec{x}-\vec{y}) \leq \alpha$. Recall that $\alpha = \frac{1}{2}\max_{\vec{a}, \vec{b}\in \mathcal{R}} ||\vec{a}-\vec{b}||_2^2 $. With this we find
    \begin{equation}
         f(\vec{x}) \geq f(\vec{y}) + \nabla f(\vec{y})\cdot(\vec{x}-\vec{y}) - |\epsilon| \alpha \ ,
   \end{equation}
\end{proof}

Notice that in general $\alpha$ can increase with the dimension of the input $\vec{x}$ (for the loss function in Eq \eqref{eq:loss} this is equivalent to the number of parameters $M$) as well as with the size of the region that we demand $\epsilon$-convexity over. That is, for a region of size $r$, $\alpha$ approximately scales as $Mr^2$. In the regime that is relevant to our work (particularly, Theorem \ref{th:pRTE}), the region in which we require $\epsilon$-convexity is of order $r\in\Omega\left(1/M^2\right)$ (for appropriately chosen $\delta t$). Hence, $\alpha$ decays with the number of parameters as $1/M^2$. 

\begin{figure}
    \centering
    \includegraphics[width=.75\columnwidth,trim={1cm 0 15cm 0},clip]{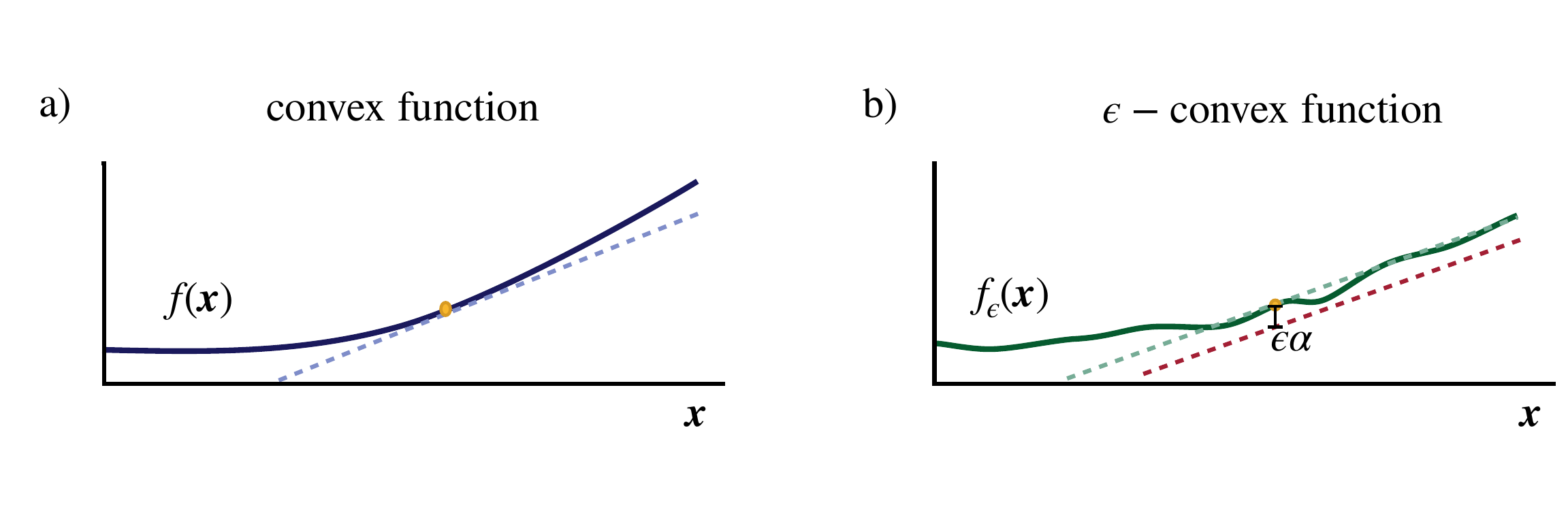}
    \caption{\textbf{Convexity and $\bm{\epsilon}$-convexity.} Here we show two schematics for one variable functions. a) Represents a convex function and the tangent (dashed line). The tangent is always below $f$. b) Represents a $\epsilon$-convex function and the tangent (dashed line). The light-green dashed line represents the tangent to the function. The red line represents the tangent displaced $\alpha\epsilon$ with respect to the green one. A ``displacement of $\alpha\epsilon$'' on the tangent makes it such that is always below the function. }
    \label{fig:eps-convex}
\end{figure}

\subsection{Upper bound on the eigenvalues}

We present a simplified version of Gershgorin's circle theorem~\cite{circle} which can be used to upper bound the eigenvalues of a real squared matrix. 

\begin{proposition}\label{prop:upper-eigen}
Consider a real $M\times M$ matrix $A$ and denote $\lambda_{\rm max}$ as the largest eigenvalue of $A$. Given that the sum of elements in any row is upper bounded by some value $T_0$, that is $\sum_{j} |A_{ij}| \leq T_0$ for any $i \in \{1,...,M\}$, the largest eigenvalue of $A$ can be bounded as 
\begin{align}
    \lambda_{\rm max} \leq T_0 \;\;.
\end{align}
\end{proposition}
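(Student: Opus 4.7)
The plan is to adapt the classical argument that the spectral radius of $A$ is dominated by the induced $\ell_\infty$-norm of $A$, i.e.\ the maximum absolute row sum. Since the hypothesis $\sum_j |A_{ij}| \leq T_0$ is precisely an upper bound on $\|A\|_\infty$, the conclusion follows from a one-shot estimate based on the eigenvalue equation, without needing to invoke the full Gershgorin disc theorem.

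Concretely, I would let $\vec{v}\in\mathbb{R}^M$ be an eigenvector of $A$ associated with $\lambda_{\max}$, so that $A\vec{v}=\lambda_{\max}\vec{v}$, and normalize it so that $\|\vec{v}\|_\infty = 1$. Then I pick an index $k$ realizing the maximum, i.e.\ $|v_k|=1$. Reading off the $k$-th component of the eigenvalue equation gives $\lambda_{\max} v_k = \sum_{j=1}^M A_{kj}v_j$. Taking absolute values, applying the triangle inequality, and then using $|v_j|\leq 1$ together with the hypothesis yields
\begin{equation}
|\lambda_{\max}| \;=\; \Bigl|\sum_{j} A_{kj}\,v_j\Bigr| \;\leq\; \sum_j |A_{kj}|\,|v_j| \;\leq\; \sum_j |A_{kj}| \;\leq\; T_0.
\end{equation}
Since $\lambda_{\max}$ is being interpreted as the largest real eigenvalue (the setting in which the proposition is invoked, where $A$ is a symmetric Hessian with real spectrum), we conclude $\lambda_{\max} \leq |\lambda_{\max}| \leq T_0$.

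There is no substantive obstacle here: the whole argument is three lines of inequalities. The only point that requires a word of care is well-definedness of ``the largest eigenvalue'' for a general real matrix, whose eigenvalues could a priori be complex; this is harmless in practice because the proposition is always applied to symmetric matrices in the paper, so the spectrum is real and $\lambda_{\max}$ is unambiguously the maximum. If desired, one could state the result more generally as a bound on the spectral radius, $|\lambda|\leq T_0$ for every eigenvalue $\lambda$ of $A$, which is exactly what the above calculation delivers.
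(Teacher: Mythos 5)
Your proof is correct and follows essentially the same route as the paper's: both arguments pick out the largest-magnitude component of an eigenvector for $\lambda_{\max}$, read off that row of the eigenvalue equation, and apply the triangle inequality together with $|v_j|/|v_i|\leq 1$ to obtain $\lambda_{\max}\leq\sum_j|A_{ij}|\leq T_0$. Your added remark that the bound really controls $|\lambda|$ for every eigenvalue (and that realness of the spectrum is guaranteed in the paper's symmetric-Hessian applications) is a slight tightening of the paper's presentation, but not a different argument.
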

\begin{proof} Let $\vec{v}$ be an eigenvector corresponding the largest eigenvalue of $A$. There exists the largest component in the eigenvector denoted as $v_i$. Then, we consider this eigenvector component in the eigenvalue equation
\begin{align}
    \sum_{j} A_{ij} v_j = \lambda_{\rm max} v_i \;.
\end{align}
The bound of the largest eigenvalue follows as
\begin{align}
    \lambda_{\rm max} & \leq \sum_{j}  \frac{|A_{i,j}|\cdot|v_j|}{|v_i|}  \\
    & \leq \sum_{j} |A_{i,j}| \\
    & \leq T_0 \;,
\end{align}
where the first inequality is due to triangle inequality, the second inequality is from $\frac{|v_j|}{|v_i|} \leq 1$ and in the last inequality we use the assumption that $\sum_{j} |A_{i,j}| \leq T_0 $.
\end{proof}

\section{Lower bound of the variance of the loss function}\label{app:variance}
In this section, we provide the exact formula for the lower bound of the loss function for the variational \revadd{time-evolution} compression algorithm.

\subsection{Exact formula for the lower bound} 

\begin{proposition}\label{prop:variance-lower-bound}
Consider the loss function $\mf(\thv)$ as defined in Eq.~\eqref{eq:loss} and with an ansatz of the general form defined in Eq.~\eqref{eq:circuit} with $M$ parameters. The variance of $\mf(\thv)$ over the hypercube parameter region $\vol(\thv^*, r)$ around an optimal solution of the previous iteration $\thv^*$ can be bounded as 
	\begin{equation} \label{eq:prop1-exact-var}
		\Var_{\thv\sim\uni(\thv^*, r) }[\mf(\thv)]\geq  \ (c_+ - k_+^2) \min_{\Tilde{\xi}\in [-1,1]} \left( k_+^{M-1} \Delta_{\thv^*}  + (1-k_+^{M-1})\Tilde{\xi} \right)^2 ,
	\end{equation}
	where we have
	\begin{align}
		&c_+     :=  \mathbb{E}_{\alpha \sim\uni(0,r)} [ \cos^4{\alpha}]\;, \\           
		&k_+   := \mathbb{E}_{\alpha \sim\uni(0,r)} [ \cos^2{\alpha}] \; ,                                                               \\
		&\Delta_{\thv^*} := \Tr[(\rho_0 -\vsigma_1 \rho_0 \vsigma_1) U^{\dagger}\left(\vec{\theta^*}\right)\rho_{(\vtheta^*,\delta t)} U\left(\vec{\theta^*}\right)] \; .\label{eq:var-delta-key}
	\end{align}
	Here $\vsigma_1$ is the Pauli string associated with the first gate in the circuit $U(\thv)$ as defined in Eq.~\eqref{eq:circuit}, $\rho_0 = |\psi_0 \rangle\langle\psi_0|$ is an initial state before the time evolution and $\rho_{(\vtheta^*,\delta t)}= e^{-iH\delta t} U(\vec{\theta^*})\rho_0 U^\dagger(\vec{\theta^*})e^{iH\delta t}$ with $H$ being the underlying Hamiltonian of the quantum dynamics. 
\end{proposition}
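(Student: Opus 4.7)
The plan is to isolate the dependence on $\phi_1 := \theta_1 - \theta_1^*$ via the law of total variance, evaluate the resulting one-parameter variance in closed form using $\sigma_1^2 = \1$, and handle the remaining $(M-1)$-dimensional expectation by iterated Pauli twirling combined with Jensen's inequality. Reparametrizing $\phi_i := \theta_i - \theta_i^*$ with $\phi \sim \uni(\vec{0}, r)$ gives $\Var[\mf] = \Var[F]$, where $F(\phi) = |\langle\chi|U(\thv^* + \phi)|\psi_0\rangle|^2$ and $|\chi\rangle := e^{-iH\delta t} U(\thv^*)|\psi_0\rangle$. Since rotations sharing a generator commute, $U_1(\theta_1^* + \phi_1) = U_1(\phi_1) U_1(\theta_1^*)$; expanding $U_1(\phi_1) = \cos(\phi_1)\1 - i\sin(\phi_1)\sigma_1$ factorizes $U(\thv^*+\phi)|\psi_0\rangle$ as $\cos(\phi_1) W^{(\phi_{>1})}|a\rangle + \sin(\phi_1) W^{(\phi_{>1})}|b\rangle$ with $|a\rangle := V_1 U_1(\theta_1^*)|\psi_0\rangle$, $|b\rangle := -i V_1 U_1(\theta_1^*)\sigma_1|\psi_0\rangle$, and $W^{(\phi_{>1})}$ the ($\phi_1$-independent) unitary implementing the remaining layers. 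Setting $\alpha(\phi_{>1}) := \langle\chi|W^{(\phi_{>1})}|a\rangle$ and $\beta(\phi_{>1}) := \langle\chi|W^{(\phi_{>1})}|b\rangle$ and taking the modulus squared yields
\[
F(\phi) = |\alpha|^2 + A(\phi_{>1})\sin^2\phi_1 + B(\phi_{>1})\sin\phi_1\cos\phi_1,
\]
with $A = |\beta|^2 - |\alpha|^2 = \Tr[(\sigma_1 \rho_0 \sigma_1 - \rho_0)\,Q(\phi_{>1})]$, where $Q(\phi_{>1}) := W^{(\phi_{>1})\dagger}|\chi\rangle\langle\chi|W^{(\phi_{>1})}$ is a pure-state density matrix, and $B = 2\,\mathrm{Re}(\alpha^*\beta)$ is real.

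Next, the law of total variance gives $\Var_\phi[F] \geq \Ebb_{\phi_{>1}}[\Var_{\phi_1}[F\,|\,\phi_{>1}]]$. Because the uniform measure on $[-r,r]$ annihilates odd functions of $\phi_1$, the covariance between $\sin^2\phi_1$ and $\sin\phi_1\cos\phi_1$ vanishes, and $\Var_{\phi_1}[\sin^2\phi_1] = c_+ - k_+^2$ by direct calculation; hence $\Var_{\phi_1}[F\,|\,\phi_{>1}] \geq (c_+ - k_+^2)\,A(\phi_{>1})^2$. Applying Jensen's inequality then gives $\Ebb_{\phi_{>1}}[A^2] \geq (\Ebb_{\phi_{>1}}[A])^2$, so it remains to evaluate the inner expectation $\Ebb_{\phi_{>1}}[A] = \Tr[(\sigma_1\rho_0\sigma_1 - \rho_0)\,\Ebb_{\phi_{>1}}[Q]]$.

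Each single-angle integral ($i \geq 2$) invokes the Pauli twirl $\Ebb_{\phi_i}[U_i^\dagger(\phi_i)X U_i(\phi_i)] = k_+ X + (1-k_+)\sigma_i X \sigma_i$, applied in the rotated frame produced by the surrounding circuit factors. Iterating from $i=M$ down to $i=2$ writes $\Ebb_{\phi_{>1}}[Q] = \sum_{S \subseteq \{2,\ldots,M\}} k_+^{|S|}(1-k_+)^{M-1-|S|}\,Q_S$, where each $Q_S$ is a conjugation of $|\chi\rangle\langle\chi|$ by a fixed unitary (hence a density matrix) and the ``all-identity'' term is $Q_{\{2,\ldots,M\}} = O_{\thv^*}$. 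This term contributes $-k_+^{M-1}\Delta_{\thv^*}$ to $\Ebb[A]$, while each remaining $\Tr[(\sigma_1\rho_0\sigma_1 - \rho_0)Q_S]$ lies in $[-1,1]$ because $\rho_0$, $\sigma_1\rho_0\sigma_1$, and $Q_S$ are all density matrices. Their convex combination therefore collapses to $(1-k_+^{M-1})\tilde\xi$ for some $\tilde\xi \in [-1,1]$, so $(\Ebb[A])^2 = \bigl(k_+^{M-1}\Delta_{\thv^*} + (1-k_+^{M-1})\tilde\xi\bigr)^2$; minimizing over $\tilde\xi \in [-1,1]$ delivers the claimed bound. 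The main technical obstacle will be executing the iterated Pauli twirl carefully---the $\phi_i$ integral acts on a generator already conjugated by the surrounding gates, so one must verify that each of the $2^{M-1}$ resulting operators $Q_S$ is a legitimate density matrix and that the accompanying weights form a genuine probability distribution.
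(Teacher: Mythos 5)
Your proof is correct and arrives at exactly the same final expression as the paper's, but via the complementary half of the variance decomposition. The paper iterates $\Var[X]\geq \Var[\,\Ebb[X|\cdot]\,]$ from the outermost angle $\alpha_M$ inward, keeping the variance of the conditional mean at every step (and explicitly tracking the $\cos^2$/$\sin^2$ covariances each time), and only at the innermost gate evaluates a genuine single-angle variance. You instead condition on $\phi_{>1}$ once, keep $\Ebb_{\phi_{>1}}[\Var_{\phi_1}[\cdot\,|\,\phi_{>1}]]$, evaluate the exact $\phi_1$-conditional variance, and then use Jensen to push the $\phi_{>1}$-average inside the square. Both routes collapse to $(c_+-k_+^2)\left(\Ebb_{\phi_{>1}}[A]\right)^2$, and your iterated Pauli twirl computing $\Ebb_{\phi_{>1}}[Q]$ is precisely the paper's recursive construction of $\widetilde{\rho}_1 = k_+^{M-1}U^{\dagger}(\thv^*)\rho_{(\thv^*,\delta t)}U(\thv^*) + (1-k_+^{M-1})\xi$; what your route buys is that the covariance bookkeeping is done once rather than $M$ times. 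One cosmetic repair: as written, $Q = W^{(\phi_{>1})\dagger}|\chi\rangle\langle\chi|W^{(\phi_{>1})}$ omits the conjugation by $V_1 U_1(\theta_1^*)$, so strictly $A = \Tr[(|b\rangle\langle b|-|a\rangle\langle a|)Q]$ rather than $\Tr[(\vsigma_1\rho_0\vsigma_1-\rho_0)Q]$; you need to fold $V_1 U_1(\theta_1^*)$ into $Q$ so that the all-$k_+$ twirl term reproduces $U^{\dagger}(\thv^*)\rho_{(\thv^*,\delta t)}U(\thv^*)$ and hence $\Delta_{\thv^*}$. The verifications you flag as the main obstacle are immediate: each $Q_S$ is a conjugation of the pure state $|\chi\rangle\langle\chi|$ by unitaries (the conjugated generators remain Hermitian involutions, hence unitary), and the weights $k_+^{|S|}(1-k_+)^{M-1-|S|}$ are nonnegative and sum to $1$ since $k_+\in[0,1]$.
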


\begin{proof}
First, we recall that the loss function at each iteration (as defined in Eq.~\eqref{eq:loss}) is of the form
\begin{align}
    \mf (\vtheta) & =  1 - \left| \langle \psi_0 | U^\dagger(\vtheta) |\psi(\vtheta^* , \delta t)  \rangle\right|^2 \\
    & = 1 - \langle \psi_0 | U^\dagger(\vtheta) \rho_{(\vtheta^*,\delta t)} U(\vtheta) |\psi_0 \rangle \;
\end{align}
for some initial state $\ket{\psi_0}$ and
\begin{align}
    \rho_{(\vtheta^*,\delta t)} : = |\psi(\vtheta^* , \delta t) \rangle  \langle\psi(\vtheta^* , \delta t)  | = e^{- i H \delta t} U(\vtheta^*) | \psi_0 \rangle\langle\psi_0 | U^\dagger(\vtheta^*)  e^{i H \delta t}  \;,
\end{align}
where $\thv^*$ is an optimal solution of the previous iteration. The parameterised quantum circuit $U(\vtheta)$ with $M$ parameters takes the following general form
\begin{equation}
    U\left( \vtheta \right) = \prod_{i=1}^M V_i U_i(\theta_i) \;,
\end{equation}
where $\{ V_i \}_{i=1}^M$ are some fixed unitaries and $\{ U_i(\theta_i) = e^{- i \theta_i \vsigma_i }\}_{i=1}^M$ are a set of parameterised rotation gates with $\vsigma_i$ being a Pauli string associated with the $i^{\rm th}$ gate. Crucially, the rotation gates can be re-expressed as perturbations $\vec{\alpha}$ around the previous optimal solution i.e., $\theta_i = \theta_i^* + \alpha_i$ for all $i$
\begin{align}
     U\left( \vtheta \right) 
     & = \prod_{i=1}^M V_i U_i(\theta^*_i) U_i (\alpha_i) \\
     & = \prod_{i=1}^M \widetilde{V}_i (\theta^*_i) U_i (\alpha_i) \;,
\end{align}
where the first equality holds due to $e^{-i \theta_i \vsigma_i} = e^{- i \theta^*_i \vsigma_i} e^{- i \alpha_i \vsigma_i}$ and in the second equality we denote $\widetilde{V}_i := \widetilde{V}_i (\theta_i^*) = V_i e^{- i \theta_i^* \vsigma_i }$. 

\medskip

We consider the region of parameters around the previous optimum which can also be expressed in terms of $\Vec{\alpha}$
\begin{align}
    \vol(\vtheta^*, r) = \{\vtheta = \vtheta^* + \valpha \;|\;\; \alpha_i \in  [- r, r] \} \;,
\end{align}
where $r$ is a characteristic length of the region. Now, we are interested in the variance of the loss function over $\vol(\vtheta^*, r)$ such that each parameter is uniformly sampled
\begin{align}
    \Var_{\vtheta \sim\uni(\vtheta^*, r)} \left[ \LC (\vtheta)\right] &= \Var_{\vec{\alpha} \sim\uni(\vec{0}, r)} \left[ \LC (\vtheta = \vtheta^* + \vec{\alpha})\right] \\
    & = \Var_{\vec{\alpha} \sim\uni(\vec{0}, r)} \left[ 1 - \LC (\vtheta)\right] \\
    & = \Ebb_{\valpha \sim\uni(\vec{0},r)}\left[ (1 - \LC(\vtheta))^2\right] - \left( \Ebb_{\valpha \sim\uni(\vec{0}, r)} [1 - \LC(\vtheta)]\right)^2 \; , \label{eq:thm1-proof-var}
\end{align}
where the second equality is due to $\Var_{\valpha} [b_1X(\valpha) + b_2] = (b_1)^2\Var_{\valpha} X(\valpha)$ for some constants $b_1$ and $b_2$.

Importantly, since all parameters are assumed to be uncorrelated, this allows us to compute the variance over each individual parameter one-by-one from the outermost parameter $\alpha_M$ towards the first parameter $\alpha_1$. That is, each term in Eq.~\eqref{eq:thm1-proof-var} can be expressed as 
\begin{align}
    \Ebb_{\valpha \sim\uni(\vec{0},r)}\left[ (1 - \LC(\vtheta))^2\right] &= \Ebb_{\alpha_1, \alpha_2, ..., \alpha_{M}}\left[ (1 - \LC(\vtheta))^2\right]  \\
    &= \Ebb_{\alpha_1, \alpha_2, ..., \alpha_{M-1}} \Ebb_{\alpha_{M}}\left[ (1 - \LC(\vtheta))^2\right] \\
    &:= \Ebb_{\overline{\alpha_M}}\Ebb_{\alpha_{M}}\left[ (1 - \LC(\vtheta))^2\right] \;, \label{eq:thm1-proof-avg-square-loss}
\end{align}
with $\overline{\alpha_M} := \alpha_1, \alpha_2,...,\alpha_{M-1}$ and, similarly,
\begin{align}
    \Ebb_{\valpha \sim\uni(\vec{0},r)}\left[ (1 - \LC(\vtheta))\right] =&\;  \Ebb_{\alpha_1, \alpha_2, ..., \alpha_{M-1}} \Ebb_{\alpha_{M}}\left[ (1 - \LC(\vtheta))\right] \\ 
    := &\; \Ebb_{\overline{\alpha_M}}\Ebb_{\alpha_{M}}\left[ (1 - \LC(\vtheta))\right]\;. \label{eq:thm1-proof-avg-loss}
\end{align}
Before delving into computing these terms, we first stress the  $\alpha_M$ dependence of the loss by writing 
\begin{align}
    1 - \LC(\vtheta) =& \; \left| \langle \psi_0 | U^\dagger(\vtheta) |\psi(\vtheta^* , \delta t)  \rangle\right|^2 \\
    = &\; \langle \psi_{M-1} | U^\dagger_M(\alpha_M) \rho_M U_M(\alpha_M) |\psi_{M-1} \rangle \, ,
\end{align}
where we have defined
\begin{align}
    |\psi_{M-1} \rangle & :=  \prod_{i=1}^{M-1} \widetilde{V}_{i}(\theta^*_i) U_i(\alpha_i) |\psi_0 \rangle \;, \\
    \rho_M & :=  \widetilde{V}^\dagger_M (\theta^*_M) \rho_{(\vtheta^*,\delta t)} \widetilde{V}_M(\theta_M^*) \;.
\end{align}
Remark that $\ket{\psi_{M-1}}$ depends on the other parameters $\{ \alpha_i \}_{i=1}^{M-1}$ while $\rho_M$ is independent of $\valpha$. Next we use the identity 
\begin{align}
    U_i(\alpha_i)  = \cos (\alpha_i) \1 - i \sin (\alpha_i) \vsigma_i \;,
\end{align}
to rewrite the loss as
\begin{align}
    1 - \LC(\vtheta) 
    =& \;\cos^2 (\alpha_M)  \langle \psi_{M-1} |\rho_M|\psi_{M-1} \rangle + \sin^2 (\alpha_M)  \langle \psi_{M-1} |\vsigma_M \rho_M \vsigma_M |\psi_{M-1} \rangle \nonumber \\
    &\; - \cos (\alpha_M) \sin (\alpha_M)  \langle \psi_{M-1} | i [\rho_M, \vsigma_M]|\psi_{M-1} \rangle \\
    = &\; \cos^2 (\alpha_M) \langle \rho_M \rangle_{ \psi_{M-1}} +  \sin^2 (\alpha_M) \langle \vsigma_M \rho_M \vsigma_M  \rangle_{\psi_{M-1}} - \cos (\alpha_M) \sin (\alpha_M) \langle i [\rho_M, \vsigma_M]  \rangle_{\psi_{M-1}} \, ,
\end{align}
where in the final line we use the shorthand
\begin{align}
    \langle O \rangle_{\psi} & := \langle \psi | O |\psi\rangle \;,
\end{align}
for some observable $O$ and some state $|\psi \rangle$. 

\medskip

We are now ready to proceed with the averaging over $\alpha_M$ in Eq.~\eqref{eq:thm1-proof-avg-square-loss} which results in
\begin{align}
    \Ebb_{\alpha_{M}}\left[ (1 - \LC(\vtheta))^2\right] 
    = &\;  c_+ \langle  \rho_M \rangle_{\psi_{M-1}}^2 + c_-  \langle\vsigma_M \rho_M \vsigma_M \rangle^2_{\psi_{M-1}} + c_0 \langle  i [\rho_M, \vsigma_M] \rangle_{\psi_{M-1}}^2 \label{eq:thm1-proof00} \\
    &\; + 2c_0 \langle \rho_M \rangle_{\psi_{M-1}} \langle \vsigma_{M} \rho_M \vsigma_M \rangle_{\psi_{M-1}} \nonumber \\
    \geq &\; c_+  \langle  \rho_M \rangle_{\psi_{M-1}}^2 + c_-  \langle\vsigma_M \rho_M \vsigma_M \rangle^2_{\psi_{M-1}} + 2c_0 \langle \rho_M \rangle_{\psi_{M-1}} \langle \vsigma_{M} \rho_M \vsigma_M \rangle_{\psi_{M-1}} \;, \label{eq:thm1-proof1}
\end{align}
where we have
\begin{align}
    c_+ & = \frac{1}{2r} \int_{-r}^{r} d\alpha_M \cos^4(\alpha_M) \;, \\
    c_- & = \frac{1}{2r} \int_{-r}^{r} d\alpha_M \sin^4(\alpha_M) \;, \\
    c_0 & =  \frac{1}{2r} \int_{-r}^{r} d\alpha_M \cos^2(\alpha_M) \sin^2(\alpha_M) \;, \\
    0 & = \frac{1}{2r} \int_{-r}^{r} d\alpha_M \cos^3(\alpha_M) \sin(\alpha_M) =  \frac{1}{2r} \int_{-r}^{r} d\alpha_M \cos(\alpha_M) \sin^3(\alpha_M) \;.
\end{align}

\noindent Similarly, by considering Eq.~\eqref{eq:thm1-proof-avg-loss}, we have
\begin{align}
    \Ebb_{\alpha_{M}}\left[ 1 - \LC(\vtheta)\right] =& k_+  \langle \rho_M \rangle_{\psi_{M-1} } + k_-  \langle \vsigma_M \rho_M \vsigma_M \rangle_{\psi_{M-1} } \;,
\end{align}
with
\begin{align}
    k_+ & =  \frac{1}{2r} \int_{-r}^{r} d\alpha_M \cos^2(\alpha_M) \;, \label{eq:var-proof-k-plus}\\
    k_- & = \frac{1}{2r} \int_{-r}^{r} d\alpha_M \sin^2(\alpha_M) \;, \\
    0 & = \frac{1}{2r} \int_{-r}^{r} d\alpha_M \cos(\alpha_M)\sin(\alpha_M) \;. 
\end{align}
From the above expressions, we can see that $\Var_{\alpha}[\cos^2(\alpha)] = c_+ - k^2_+$, $\Var_{\alpha}[\sin^2(\alpha)] = c_- - k^2_-$ and $\Cov_{\alpha}[\cos^2(\alpha), \sin^2(\alpha)] = c_0 - k_+ k_- $. In addition, it can be verified by a direct computation that
\begin{align} \label{eq:thm1-proof-cos-sin-rel}
     c_+ - k^2_+ =  c_- - k^2_- = - (c_0 - k_+k_-) = \frac{-1 + 4r^2 + \cos(4r) +r \sin(4r)}{32r^2} \;.
\end{align}

\medskip

Together, the variance in Eq.~\eqref{eq:thm1-proof-var} can be bounded as
\begin{align}
     \Var_{\vtheta} \left[ \LC (\vtheta)\right] =&\; \Ebb_{\alpha_1,...,\alpha_{M-1}}  \Ebb_{\alpha_M}\left[ (1-\LC(\vtheta))^2\right] - \left(\Ebb_{\alpha_1,...,\alpha_{M-1}}  \Ebb_{\alpha_M}\left[ 1-\LC(\vtheta)\right]  \right)^2 \\
     \geq &\; \Ebb_{\overline{\alpha_M}}  \left[ c_+  \langle  \rho_M \rangle_{\psi_{M-1}}^2 + c_-  \langle\vsigma_M \rho_M \vsigma_M \rangle^2_{\psi_{M-1}} + 2c_0 \langle \rho_M \rangle_{\psi_{M-1}} \langle \vsigma_{M} \rho_M \vsigma_M \rangle_{\psi_{M-1}} \right] \nonumber \\
     &\; - \left(\Ebb_{\overline{\alpha_M}}  \left[ k_+  \langle \rho_M \rangle_{\psi_{M-1} } + k_-  \langle \vsigma_M \rho_M \vsigma_M \rangle_{\psi_{M-1} } \right]  \right)^2 \label{eq:thm1-proof1-2}\\
     = &\; c_+ \Ebb_{\overline{\alpha_M}} \left[\langle  \rho_M \rangle_{\psi_{M-1}}^2\right] - k^2_+ \left( \Ebb_{\overline{\alpha_M}} \left[\langle  \rho_M \rangle_{\psi_{M-1}}\right] \right)^2 \nonumber \\
     &\; + c_-  \Ebb_{\overline{\alpha_M}} \left[\langle \vsigma_M \rho_M \vsigma_M \rangle_{\psi_{M-1}}^2\right] - k^2_- \left( \Ebb_{\overline{\alpha_M}} \left[\langle \vsigma_M  \rho_M \vsigma_M \rangle_{\psi_{M-1}} \right]\right)^2 \nonumber \\
     &\; + 2c_0 \Ebb_{\overline{\alpha_M}}\left[ \langle \rho_M \rangle_{\psi_{M-1}} \langle \vsigma_M \rho_M \vsigma_M \rangle_{\psi_{M-1}} \right] - 2k_+ k_-  \Ebb_{\overline{\alpha_M}} \left[ \langle \rho_M \rangle_{\psi_{M-1}} \right]  \Ebb_{\overline{\alpha_M}}\left[ \langle \vsigma_M \rho_M \vsigma_M \rangle_{\psi_{M-1}}\right] \\
     = &\; (c_+ - k^2_+) \Ebb_{\overline{\alpha_M}} \left[\langle \rho_M \rangle_{\psi_{M-1}}^2\right] + k_+^2 \Var_{\overline{\alpha_M}} \left[\langle \rho_M \rangle_{\psi_{M-1}}\right] \nonumber \\
     &\; + (c_- - k^2_-) \Ebb_{\overline{\alpha_M}} \left[\langle \vsigma_M \rho_M \vsigma_M \rangle_{\psi_{M-1}}^2 \right] + k_-^2 \Var_{\overline{\alpha_M}} \left[\langle \vsigma_M \rho_M \vsigma_M \rangle_{\psi_{M-1}}\right] \nonumber \\
     &\; + 2(c_0 - k_+ k_-) \Ebb_{\overline{\alpha_M}}\left[ \langle \rho_M \rangle_{\psi_{M-1}} \langle \vsigma_M \rho_M \vsigma_M \rangle_{\psi_{M-1}} \right] + 2k_+k_- \Cov_{\overline{\alpha_M}}\left[  \langle \rho_M \rangle_{\psi_{M-1}}, \langle \vsigma_M \rho_M \vsigma_M \rangle_{\psi_{M-1}} \right]\label{eq:thm1-proof2}  \\ 
     = &\; (c_+ - k^2_+) \Ebb_{\overline{\alpha_M}} \left[ \langle \rho_M \rangle_{\psi_{M-1}}^2 + \langle \vsigma_M \rho_M \vsigma_M \rangle_{\psi_{M-1}}^2 - 2 \langle \rho_M \rangle_{\psi_{M-1}}\langle \vsigma_M \rho_M \vsigma_M \rangle_{\psi_{M-1}} \right] \nonumber \\
     &\; + k_+^2  \Var_{\overline{\alpha_M}} \left[\langle \rho_M \rangle_{\psi_{M-1}}\right] +  k_-^2 \Var_{\overline{\alpha_M}} \left[\langle \vsigma_M \rho_M \vsigma_M \rangle_{\psi_{M-1}}\right]  + 2k_+k_- \Cov_{\overline{\alpha_M}}\left[  \langle \rho_M \rangle_{\psi_{M-1}}, \langle \vsigma_M \rho_M \vsigma_M \rangle_{\psi_{M-1}} \right] \label{eq:thm1-proof3} \\ 
     = &\; (c_+ - k^2_+) \Ebb_{\overline{\alpha_M}} \left[ \langle \rho_M  \rangle_{\psi_{M-1}} - \langle \vsigma_M \rho_M \vsigma_M \rangle_{\psi_{M-1}} \right]^2 + \Var_{\overline{\alpha_M}}\left[ k_+\langle \rho_M  \rangle_{\psi_{M-1}} + k_- \langle \vsigma_M \rho_M \vsigma_M \rangle_{\psi_{M-1}}\right] \label{eq:thm1-proof4}\\
     \geq &\; \Var_{\overline{\alpha_M}}\left[ k_+\langle \rho_M  \rangle_{\psi_{M-1}} + k_- \langle \vsigma_M \rho_M \vsigma_M \rangle_{\psi_{M-1}}\right] \label{eq:thm1-proof4-2} 
\end{align}
where the first inequality is due to Eq.~\eqref{eq:thm1-proof1}, we then reach Eq.~\eqref{eq:thm1-proof2} by using the fact that $\Var_{\valpha}[X(\valpha)] = \Ebb_{\valpha}[X^2(\valpha)] - (\Ebb_{\valpha}[X(\valpha)])^2$ and $\Cov_{\valpha}[X(\valpha),Y(\valpha)] = \Ebb_{\valpha}[X(\valpha)Y[\valpha]]- \Ebb_{\valpha}[X(\valpha)]\Ebb_{\valpha}[Y(\valpha)]$, Eq.~\eqref{eq:thm1-proof3} is from the relation presented in Eq.~\eqref{eq:thm1-proof-cos-sin-rel}. 
Next, in Eq.~\eqref{eq:thm1-proof4} we use the identity $\Var_{\valpha}[X(\valpha)+ Y(\valpha)] = \Var_{\valpha}[X(\valpha)] + \Var_{\valpha}[Y(\valpha)] + 2\Cov_{\valpha}[X(\valpha),Y(\valpha)]$ and to reach the next inequality we throw away the first positive term in the sum. 

\medskip

Notably, the variance of the term in Eq.~\eqref{eq:thm1-proof4-2} is no longer taken over $\alpha_M$ (i.e., the contribution to the variance from $\alpha_M$ is already taken into account). In addition, by denoting $|\psi_{M-2} \rangle =  \prod_{i=1}^{M-2} \widetilde{V}_{i}(\theta^*_i) U_i(\alpha_i) |\psi_0 \rangle$ as well as
\begin{align}
     \widetilde{\rho}_{M-1} = k_+ \widetilde{V}^\dagger_{M-1}(\theta^*_{M-1}) \rho_M \widetilde{V}_{M-1}(\theta_{M-1}) + k_- \widetilde{V}^\dagger_{M-1}(\theta^*_{M-1}) \vsigma_M \rho_M \vsigma_M\widetilde{V}_{M-1}(\theta^*_{M-1}) \;, \label{eq:thm1-proof6}
\end{align}
the lower bound in Eq.~\eqref{eq:thm1-proof4-2} can be expressed as
\begin{align}
    \Var_{\vtheta} \left[ \LC (\vtheta)\right] & \geq \Var_{\overline{\alpha_M}}\left[ k_+\langle \rho_M  \rangle_{\psi_{M-1}} + k_- \langle \vsigma_M \rho_M \vsigma_M \rangle_{\psi_{M-1}}\right] \\
    & =  \Var_{\alpha_1,\alpha_2,...,\alpha_{M-1}}\left[ \langle\psi_{M-2}|U^\dagger_{M-1}(\alpha_{M-1})  \widetilde{\rho}_{M-1} U_{M-1}(\alpha_{M-1})| \psi_{M-2} \rangle \right]\label{eq:thm1-proof5}
\end{align}

Crucially, the derivation steps from Eq.~\eqref{eq:thm1-proof00} to Eq.~\eqref{eq:thm1-proof5}, which are used to get rid of $\alpha_M$ dependence, can be repeated to recursively integrate over other parameters. To be more precise, let us first define 
\begin{align}
    |\psi_{M-l-1} \rangle =  \prod_{i=1}^{M-l-1} \widetilde{V}_{i}(\theta^*_i) U_i(\alpha_i) |\psi_0 \rangle \;,
\end{align}
as well as a general recursive form of Eq.~\eqref{eq:thm1-proof6}
\begin{align}\label{eq:thm1-proof6-2}
    \widetilde{\rho}_{M-l} = k_+ \widetilde{V}^\dagger_{M-l}(\theta^*_{M-l})\widetilde{\rho}_{M-l+1} \widetilde{V}_{M-l}(\theta^*_{M-l}) + k_-  \widetilde{V}^\dagger_{M-l}(\theta^*_{M-l})\vsigma_{M-l+1} \widetilde{\rho}_{M-l+1}  \vsigma_{M-l+1} \widetilde{V}_{M-l}(\theta^*_{M-l}) \;,
\end{align}
where $l \in \{1,2,...,M-1\}$ and we have $\widetilde{\rho}_M = \rho_M$ which gives back Eq.~\eqref{eq:thm1-proof6} for $l = 1$. We note that $\widetilde{\rho}_{M-l}$ can be seen as a mixed state between $\widetilde{V}^\dagger_{M-l}(\theta^*_{M-l})\widetilde{\rho}_{M-l+1} \widetilde{V}_{M-l}(\theta^*_{M-l})$ and $\widetilde{V}^\dagger_{M-l}(\theta^*_{M-l})\vsigma_{M-l+1} \widetilde{\rho}_{M-l+1}  \vsigma_{M-l+1} \widetilde{V}_{M-l}(\theta^*_{M-l})$ for all $l$. This is since $k_+ + k_- = 1$ and $\rho_M$ is a valid quantum state. 

The variance then can be recursively lower bounded, leading to
\begin{align}
     \Var_{\vtheta} \left[ \LC (\vtheta)\right] 
     \geq &\; \Var_{\alpha_1,\alpha_2,...,\alpha_{M-1}}\left[ \langle\psi_{M-2}|U^\dagger_{M-1}(\alpha_{M-1})  \widetilde{\rho}_{M-1} U_{M-1}(\alpha_{M-1})| \psi_{M-2} \rangle \right] \label{eq:thm1-proof7} \\
     \geq &\; \Var_{\alpha_1,\alpha_2,...,\alpha_{M-l}}\left[ \langle\psi_{M-l-1}|U^\dagger_{M-l}(\alpha_{M-l})  \widetilde{\rho}_{M-l} U_{M-l}(\alpha_{M-l})| \psi_{M-l-1} \rangle \right]  \\
     \geq &\; \Var_{\alpha_1}\left[ \langle \psi_1 |U^\dagger_1(\theta_1) \widetilde{\rho}_1 U_1(\theta_1) |\psi_1 \rangle \right] \label{eq:thm1-proof8a} \;,
\end{align}
where in the second inequality we have recursively integrated out parameters $\alpha_{M-l+1}, ..., \alpha_{M}$ and in the last equality we have integrated out all the parameters except $\alpha_1$.  

All that remains is to explicitly bound the variance with respect to $\alpha_1$ 
\begin{align}
     \Var_{\vtheta} \left[ \LC (\vtheta)\right] 
     \geq &\; \Var_{\alpha_1}\left[ \langle \psi_1 |U^\dagger_1(\theta_1) \widetilde{\rho}_1 U_1(\theta_1) |\psi_1 \rangle \right] \label{eq:thm1-proof8b}\; \\
     \geq &\; \left( c_+ \langle \widetilde{\rho}_1 \rangle_{\psi_0}^2 + c_- \langle \vsigma_1 \widetilde{\rho}_1 \vsigma_1 \rangle_{\psi_0}^2 + 2c_0  \langle \widetilde{\rho}_1 \rangle_{\psi_0} \langle \vsigma_1 \widetilde{\rho}_1 \vsigma_1 \rangle_{\psi_0} \right)  -  \left( k_+  \langle \widetilde{\rho}_1 \rangle_{\psi_0}  + k_-  \langle \vsigma_1 \widetilde{\rho}_1 \vsigma_1 \rangle_{\psi_0} \right)^2 \label{eq:thm1-proof9} \\
     = &\; (c_+ - k_+^2) \left( \langle \widetilde{\rho}_1 \rangle_{\psi_0} -  \langle \vsigma_1 \widetilde{\rho}_1 \vsigma_1 \rangle_{\psi_0} \right)^2 \\
     = &\; (c_+ - k_+^2) \left(\Tr\left[ \left( |\psi_0 \rangle\langle \psi_0 | - \vsigma_1 |\psi_0 \rangle \langle \psi_0 | \vsigma_1 \right) \widetilde{\rho}_1\right] \right)^2  \label{eq:thm1-proof10}
\end{align}
where Eq.~\eqref{eq:thm1-proof9} to Eq.~\eqref{eq:thm1-proof10} follows in the same manner as Eq.~\eqref{eq:thm1-proof1-2} to Eq.~\eqref{eq:thm1-proof4-2}. From recursively expanding $\widetilde{\rho}_1$ (according to Eq.~\eqref{eq:thm1-proof6-2}), we can write:
\begin{align}
    \widetilde{\rho}_1 &=  k_+^{M-1} \left( \prod_{i=1}^M \widetilde{V}_i(\theta_i^*) \right)^{\dagger} \rho_{(\vtheta^*,\delta t)} \left( \prod_{i=1}^M \widetilde{V}_i(\theta_i^*) \right) + (1 - k_+^{M-1}) \xi \\ 
    &=  k_+^{M-1}  U^{\dagger}(\vtheta^*)   \rho_{(\vtheta^*,\delta t)}  U(\vtheta^*) + (1 - k_+^{M-1}) \xi 
\end{align}
where $\xi$ is some complicated mixed state and, for clarification, we note that $k_+^{M-1} = \left(k_+\right)^{M-1}$ with $k_+$ defined in Eq~\ref{eq:var-proof-k-plus}~\footnote{This clarification on $k_+^M$ is included at the request of one of the authors. }. Thus we can write 
\begin{align}
    \Var_{\vtheta} \left[ \LC (\vtheta)\right] \geq&\; (c_+ - k_+^2) \left(\Tr\left[ \left( |\psi_0 \rangle\langle \psi_0 | - \vsigma_1 |\psi_0 \rangle \langle \psi_0 | \vsigma_1 \right) \left( k_+^{M-1}  U^{\dagger}(\vtheta^*)   \rho_{(\vtheta^*,\delta t)}  U(\vtheta^*) + (1 - k_+^{M-1}) \xi \right)\right] \right)^2 \label{eq:thm1-proof11} \\
     \geq &\; (c_+ - k_+^2) \min_{\Tilde{\xi} \in [-1,1]}\left( k_+^{M-1} \Tr\left[\left( |\psi_0 \rangle\langle \psi_0 | - \vsigma_1 |\psi_0 \rangle \langle \psi_0 | \vsigma_1 \right)  U^{\dagger}(\vtheta^*)   \rho_{(\vtheta^*,\delta t)}  U(\vtheta^*) \right] + (1-k_+^{M-1}) \Tilde{\xi}\right)^2
\end{align}
where in the final line we minimize over the free parameter $\Tilde{\xi} = \Tr\left[ \left( |\psi_0 \rangle\langle \psi_0 | - \vsigma_1 |\psi_0 \rangle \langle \psi_0 | \vsigma_1\right) \xi \right] \in [-1,1]$ by noting that $ \Tr\left[ \left( |\psi_0 \rangle\langle \psi_0 | - \vsigma_1 |\psi_0 \rangle \langle \psi_0 | \vsigma_1 \right) \xi\right]$ is bounded between $-1$ and $1$. This completes the proof of the proposition.
\end{proof}

\subsection{Proof of Theorem~\ref{thm:variance-lower-bound}} 

In this subsection, we analytically show that the lower bound of the variance scales polynomially with the number of parameters $M$ when the perturbation is within $1/\sqrt{M}$ region.

\setcounter{theorem}{0}
\begin{theorem}[Lower-bound on the loss variance, Formal]\label{oldsimplifiedR}
Assume \revadd{an} initial state $\rho_0$ and let us choose $\vsigma_1$ such that $\Tr[\rho_0 \vsigma_1 \rho_0\vsigma_1] = 0$. Given that the  time-step $\delta t$ respects 
\begin{align}\label{eq:var-time-limit}
    \frac{1}{2 \lambda_{\rm max}} \geq \delta t \;,
\end{align}
where $\lambda_{\rm max}$ is the largest eigenvalue of $H$, 
and the perturbation $r$ obeys
\begin{align}\label{eq:perturb-condition1}
    \frac{3 r_0^2\left(1 - 4 \lambda_{\rm max}^2 \delta t^2\right)}{2(M-1)\left( 1 - 2\lambda_{\rm max}^2 \delta t^2\right)} \geq r^2  \;,
\end{align}
with some $r_0$ such that $0< r_0 <1$, then the variance of the loss function within the region $\vol(\thv^*, r)$ is lower bounded as
\begin{align}
    \Var_{\vtheta \sim\uni(\vtheta^*, r)} \left[ \LC (\vtheta)\right] 
    & \geq \frac{4 r^4}{45}\left(1 - \frac{4r^2}{7} \right) \left[(1 - r_0)(1 - 4 \lambda_{\rm max}^2 \delta t ^2 )\right]^2 \;.
\end{align}
In addition, by choosing $r$ such that $r\in \Theta\left( \frac{1}{\sqrt{M}}\right)$, we have
\begin{align}
     \Var_{\vtheta \sim\uni(\vtheta^*, r)} \left[ \LC (\vtheta)\right] \in \Omega\left( \frac{1}{M^2}\right) \;.
\end{align}
\end{theorem}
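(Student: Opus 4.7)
The plan is to start from the exact lower bound in Proposition~\ref{prop:variance-lower-bound}, namely
\begin{equation*}
\Var_{\thv\sim\uni(\thv^*, r)}[\mf(\thv)] \geq (c_+ - k_+^2)\, \min_{\widetilde{\xi}\in[-1,1]} \left(k_+^{M-1}\Delta_{\thv^*} + (1-k_+^{M-1})\widetilde{\xi}\right)^2,
\end{equation*}
and estimate each of its three building blocks separately: the trigonometric prefactor $c_+ - k_+^2$, the trace quantity $\Delta_{\thv^*}$, and the weight $k_+^{M-1}$ that controls how much the adversarial $\widetilde{\xi}$ can eat into the bound.

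First I would extract the small-$r$ behaviour of the prefactor. Writing $c_+ = \frac{3}{8} + \frac{\sin(2r)}{4r} + \frac{\sin(4r)}{32r}$ and $k_+ = \frac{1}{2} + \frac{\sin(2r)}{4r}$ from Eq.~\eqref{eq:var-proof-k-plus}, both the constant and the $r^2$ contributions in $c_+ - k_+^2$ cancel identically, leaving an $r^4$ leading term. A routine Taylor-with-remainder estimate on the range of $r$ of interest then promotes this to the explicit bound $c_+ - k_+^2 \geq \frac{4r^4}{45}\bigl(1 - \frac{4r^2}{7}\bigr)$.

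Next I would lower bound $\Delta_{\thv^*}$ using the orthogonality hypothesis $\Tr[\rho_0\vsigma_1\rho_0\vsigma_1]=0$ together with Lemma~\ref{lem:bound_fidelity}. Setting $\widetilde{U}:=U^\dagger(\thv^*)e^{-iH\delta t}U(\thv^*)$, the definition in Eq.~\eqref{eq:var-delta-key} rearranges to $\Delta_{\thv^*} = \Tr[\rho_0 \widetilde{U}\rho_0\widetilde{U}^\dagger] - \Tr[\vsigma_1\rho_0\vsigma_1\,\widetilde{U}\rho_0\widetilde{U}^\dagger]$. The first trace is the fidelity between $U(\thv^*)|\psi_0\rangle$ and its time-evolved version, hence at least $1 - 2\lambda_{\max}^2\delta t^2$ by Lemma~\ref{lem:bound_fidelity}. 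For the second, I would decompose $\widetilde{U}|\psi_0\rangle = a|\psi_0\rangle + b|\psi_0^\perp\rangle$ with $|b|^2\leq 2\lambda_{\max}^2\delta t^2$; since the orthogonality hypothesis forces $\langle\psi_0|\vsigma_1|\psi_0\rangle = 0$, the trace reduces to $|b|^2\,|\langle\psi_0|\vsigma_1|\psi_0^\perp\rangle|^2 \leq 2\lambda_{\max}^2\delta t^2$. Combining yields $\Delta_{\thv^*} \geq 1 - 4\lambda_{\max}^2\delta t^2$, which is positive precisely when $\delta t \leq 1/(2\lambda_{\max})$.

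The final ingredient is the minimization over $\widetilde{\xi}$. From $\sin(2r) \geq 2r - (2r)^3/6$ I get $k_+ \geq 1 - r^2/3$, and Bernoulli's inequality then gives $k_+^{M-1} \geq 1 - (M-1)r^2/3$. The hypothesis on $r^2$ ensures that $k_+^{M-1}$ is close enough to $1$ that the signal $k_+^{M-1}\Delta_{\thv^*}$ dominates the adversarial displacement $(1-k_+^{M-1})\widetilde{\xi}$, so even the worst-case $\widetilde{\xi}=-1$ still leaves $k_+^{M-1}\Delta_{\thv^*} - (1 - k_+^{M-1}) \geq (1-r_0)(1-4\lambda_{\max}^2\delta t^2)$. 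Multiplying the three estimates assembles the claimed variance bound, and substituting $r \in \Theta(1/\sqrt{M})$ makes the $r^4$ prefactor scale as $1/M^2$ while all other factors remain $\Theta(1)$, yielding the stated $\Omega(1/M^2)$ corollary. The delicate part is this last step: the parameters $r$, $r_0$, and $\delta t$ must be tuned simultaneously so that $k_+^{M-1}$ stays sufficiently close to unity without sacrificing the $r^4$ gain in the prefactor, and one has to track the interaction between $\Delta_{\thv^*}$ and the adversarial $\widetilde{\xi}$ carefully enough to land on the precise $(1-r_0)^2(1-4\lambda_{\max}^2\delta t^2)^2$ combination rather than a looser one.
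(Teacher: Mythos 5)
Your proposal is correct and follows essentially the same route as the paper's proof: it starts from the exact bound of Proposition~\ref{prop:variance-lower-bound}, bounds $c_+-k_+^2$ by its $\tfrac{4r^4}{45}$ leading term, lower-bounds $\Delta_{\thv^*}\geq 1-4\lambda_{\rm max}^2\delta t^2$ via Lemma~\ref{lem:bound_fidelity} and the orthogonality hypothesis, and uses $k_+^{M-1}\geq 1-(M-1)r^2/3$ together with the constraint on $r$ to control the worst-case $\widetilde{\xi}=-1$. The only cosmetic difference is that you bound the cross term $\Tr[\vsigma_1\rho_0\vsigma_1\,\widetilde{U}\rho_0\widetilde{U}^{\dagger}]$ by decomposing $\widetilde{U}\ket{\psi_0}$ into parallel and orthogonal components, whereas the paper completes $\{\rho_{(\thv^*,0)},\,U(\thv^*)\vsigma_1\rho_0\vsigma_1 U^{\dagger}(\thv^*)\}$ to an orthonormal basis and uses completeness; both give the same estimate.
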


\begin{proof}
From Proposition~\ref{prop:variance-lower-bound}, we first recall the variance bound in Eq.~\eqref{eq:prop1-exact-var} is of the form
\begin{equation}\label{eq:proof-new1}
    \Var_{\thv\sim\uni(\thv^*, r) }[\mf(\thv)]\geq  \ (c_+ - k_+^2) \min_{\Tilde{\xi}\in [-1,1]} \left( k_+^{M-1} \Delta_{\thv^*}  + (1-k_+^{M-1})\Tilde{\xi} \right)^2 ,
\end{equation}
where we have
\begin{align}
    &c_+     :=  \mathbb{E}_{\alpha \sim\uni(0,r)} [ \cos^4{\alpha}]\;, \\
    &k_+   := \mathbb{E}_{\alpha \sim\uni(0,r)} [ \cos^2{\alpha}] \; ,                                                               \\
    &\Delta_{\thv^*} := \Tr[(\rho_0 -\vsigma_1 \rho_0 \vsigma_1) U^{\dagger}\left(\vec{\theta^*}\right)\rho_{(\vtheta^*,\delta t)} U\left(\vec{\theta^*}\right)] \; .
\end{align}
Here $\vsigma_1$ is the Pauli string associated with the first gate in the circuit $U(\thv)$ as defined in Eq.~\eqref{eq:circuit}, $\rho_0 = |\psi_0 \rangle\langle\psi_0|$ is an initial state before the time evolution and $\rho_{(\vtheta^*,\delta t)}= e^{-iH\delta t} U(\vec{\theta^*})\rho_0 U^\dagger(\vec{\theta^*})e^{iH\delta t}$ with $H$ being the underlying Hamiltonian of the quantum dynamics. 

We now notice that if the perturbation $r$ is chosen such that the following condition is satisfied
\begin{align}\label{eq:condition-perturbation}
     k_+^{M-1} \Delta_{\thv^*} \geq 1-k_+^{M-1} \;,
\end{align}
then $\Tilde{\xi} = -1$ minimises the lower bound which leads to
\begin{align}
    \Var_{\vtheta \sim\uni(\vtheta^*, r)} \left[ \LC (\vtheta)\right] & \geq (c_+ - k_+^2) \left( k_+^{M-1} \Delta_{\thv^*}  - (1-k_+^{M-1}) \right)^2 \\
    & = (c_+ - k_+^2) \left( k_+^{M-1} \Tr\left[\left( \rho_0 - \vsigma_1 \rho_0 \vsigma_1 \right)U^{\dagger}(\vtheta^*)   \rho_{(\vtheta^*,\delta t)}  U(\vtheta^*) \right] - (1-k_+^{M-1}) \right)^2 \\
    & = (c_+ - k_+^2) \left( k_+^{M-1}\left( F\left(\rho_{(\vtheta^*,0)}, \rho_{(\vtheta^*, \delta t)}\right) - \Tr\left[U(\vtheta^*) \vsigma_1 \rho_0 \vsigma_1 U^\dagger(\vtheta^*) \rho_{(\vtheta^*, \delta t)} \right] + 1\right) - 1\right)^2 \;, \label{eq:proof-coro1-1}
\end{align}
where $F(\rho, \rho') = \Tr[\rho \rho']$ is the fidelity between two pure states $\rho$ and $\rho'$, and
\begin{align}
    \rho_{(\vtheta^*, \delta t)} = |\psi(\vtheta^* , \delta t) \rangle  \langle\psi(\vtheta^* , \delta t)  |= e^{-iH\delta t} U(\vtheta^*) \rho_0 U^\dagger(\vtheta^*) e^{iH\delta t} \;.
\end{align}

\medskip

We note that the condition in Eq.~\eqref{eq:condition-perturbation} can be equivalently expressed as
\begin{align}\label{eq:condition-perturbation2}
    k_+^{M-1} & \geq \frac{1}{1 + \Delta_{\thv^*}}
     = \frac{1}{1 + \Tr\left[\left( \rho_0 - \vsigma_1 \rho_0 \vsigma_1 \right) U^{\dagger}(\vtheta^*)   \rho_{(\vtheta^*,\delta t)}  U(\vtheta^*) \ \right]} \;,
\end{align}
where we explicitly expand $\Delta_{\thv^*}$.

\medskip

Crucially, for the majority of the rest of the proof, we aim to show that the condition in Eq.~\eqref{eq:condition-perturbation2} is satisfied if the perturbation is chosen such that
\begin{align}\label{eq:proof-coro1-perturb}
    \frac{3 r_0^2\left(1 - 4 \lambda_{\rm max}^2 \delta t^2\right)}{2(M-1)\left( 1 - 2\lambda_{\rm max}^2 \delta t^2\right)} \geq r^2 \;,
\end{align}
where $r_0$ is some constant within the range $0<r_0<1$.
In order to prove this, we first note the following bound of $ k_+^{M-1}$ which follows as
\begin{align}
    k_+^{M-1} & = \left(\frac{1}{2r} \int_{-r}^{r} d\alpha \cos^2(\alpha) \right)^{M-1} \\
    & = \left( \frac{1}{2} + \frac{\sin{(2r)}}{4r}\right)^{M-1} \\
    & \geq  \left( 1 - \frac{r^2}{3}\right)^{M-1} \\
    & \geq 1 - \frac{(M-1)r^2}{3}  \label{eq:proof-bound-kplus} \\
    & > 1 - \frac{(M-1)r^2}{3r_0^2} \;, \label{eq:proof-coro1-0}
\end{align}
where the first inequality is by directly expanding the base and keeping only the second order term, the second inequality is due to Bernoulli's inequality and finally the last inequality holds because $0 <r_0 < 1$. 
We will come back to this inequality soon. 

\medskip

Now, the term $\Tr\left[\left( \rho_0 - \vsigma_1 \rho_0 \vsigma_1 \right) U^{\dagger}(\vtheta^*)   \rho_{(\vtheta^*,\delta t)}  U(\vtheta^*) \ \right]$ can be bounded as follows. \revadd{Assume we can choose $\vsigma_1$} such that $\Tr[\rho_0 \vsigma_1 \rho_0 \vsigma_1] = 0$~\footnote{For example, consider the all-zero basis state as an initial state $\rho_0 = |00...0\rangle\langle00...0|$. We can pick the first generator as $\vsigma_1 = X_1$.}, which leads to
\begin{align}\label{eq:product_state}
    \Tr\left[ \rho_{(\thv^*, 0)} U(\thv^*)\vsigma_1 \rho_0 \vsigma_1 U^\dagger(\thv^*) \right]   = \Tr\left[\rho_0 \vsigma_1 \rho_0 \vsigma_1 \right] = 0 \;.
\end{align}
Then, we construct an orthonormal basis $\{ |\phi_i \rangle\langle \phi_i | \}_{i=1}^{2^n}$ such that 
\begin{align}\label{eq:extend1}
    |\phi_1 \rangle\langle \phi_1| & = \rho_{(\thv^*,0)} \; , \\
    |\phi_2 \rangle\langle \phi_2| & = U(\thv^*)\vsigma_1 \rho_0 \vsigma_1 U^\dagger(\thv^*) \;, \label{eq:basis-phi2}
\end{align}

and the rest are some other orthornormal states necessary to complete the basis. With this basis, we have the following bound
\begin{align}
 \Tr\left[\left( \rho_0 - \vsigma_1 \rho_0 \vsigma_1 \right) U^{\dagger}(\vtheta^*)   \rho_{(\vtheta^*,\delta t)}  U(\vtheta^*) \ \right] 
    & = 
    F\left(\rho_{(\vtheta^*,0)}, \rho_{(\vtheta^*, \delta t)}\right) - \Tr\left[|\phi_2 \rangle\langle \phi_2| \rho_{(\vtheta^*, \delta t)} \right] \\
    &\geq F\left(\rho_{(\vtheta^*,0)}, \rho_{(\vtheta^*, \delta t)}\right) - \sum_{i=2}^{2^n}  \Tr\left[  |\phi_i \rangle \langle \phi_i |\rho(\vtheta^*,\delta t)\right] \\
    & = 2F\left(\rho_{(\vtheta^*,0)}, \rho_{(\vtheta^*, \delta t)}\right) - 1 \\
    & \geq 1 - 4 \lambda_{\rm max}^2 \delta t ^2 \;, \label{eq:proof-coro1-2}
\end{align}
where the first equality is by writing the first term in the fidelity form and writing the second term in $|\phi_2 \rangle\langle \phi_2|$ in Eq.~\eqref{eq:basis-phi2}, in the first inequality we include terms corresponding to other basis (which holds since $\Tr[\rho |\phi_i \rangle\langle \phi_i|] \geq 0$ for any $\rho$ and $|\phi_i \rangle\langle \phi_i|$). Next, the second equality is from the completeness of the basis $\sum_{i=1}^{2^n} |\phi_i \rangle\langle \phi_i | = \1$, the last inequality is due to Lemma~\ref{lem:bound_fidelity} with $\lambda_{\rm max}$ being the largest eigenvalue of $H$. 

We note that in order for the lower bound in Eq.~\eqref{eq:proof-coro1-2} to be informative it is required that $1 \geq 4 \lambda_{\rm max}^2 \delta t ^2$. Up on rearranging, this leads to the constraint on the time-step as
\begin{align}
    \frac{1}{2\lambda_{\rm max}} \geq \delta t \;,
\end{align}
which is the condition specified in Eq.~\eqref{eq:var-time-limit}. By assuming that the time-step satisfying the aforementioned constrain, we now proceed from Eq.~\eqref{eq:proof-coro1-2} by adding $1$ to both sides and rearranging the terms which leads to
\begin{align} \label{eq:proof-coro1-3}
    \frac{1}{2 - 4 \lambda_{\rm max}^2 \delta t ^2}\geq \frac{1}{1 + \Tr\left[\left( \rho_0 - \vsigma_1 \rho_0 \vsigma_1 \right) U^{\dagger}(\vtheta^*)   \rho_{(\vtheta^*,\delta t)}  U(\vtheta^*) \ \right]}  \;.
\end{align}
We remark that the right-hand side of Eq.~\eqref{eq:proof-coro1-3} appears in the condition in Eq.~\eqref{eq:condition-perturbation2}. 

\medskip
We are now ready to put everything together. Importantly, the condition in Eq.~\eqref{eq:condition-perturbation2} is satisfied if we enforce the left-hand side of Eq.~\eqref{eq:proof-coro1-0} to be larger than the right-hand side of Eq.~\eqref{eq:proof-coro1-3}. That is, we have $ k_+^{M-1} \geq  \frac{1}{1 + \Tr\left[\left( \rho_0 - \vsigma_1 \rho_0 \vsigma_1 \right) U^{\dagger}(\vtheta^*)   \rho_{(\vtheta^*,\delta t)}  U(\vtheta^*) \ \right]}$ to be true if the following holds
\begin{align} \label{eq:proof-coro1-perturb-rearrange}
    1 - \frac{(M-1)r^2}{3r^2_0} \geq \frac{1}{2 -   4 \lambda_{\rm max}^2 \delta t ^2} \;.
\end{align}
By rearranging the inequality in Eq.~\eqref{eq:proof-coro1-perturb-rearrange}, we have the perturbation regime of $r$ to be Eq.~\eqref{eq:proof-coro1-perturb} as previously stated.

\medskip

The last step is to bound the variance when $r$ satisfies Eq.~\eqref{eq:proof-coro1-perturb}. the variance of the loss in Eq.~\eqref{eq:proof-coro1-1} can be bounded as
\begin{align}
    \Var_{\vtheta \sim\uni(\vtheta^*, r)} \left[ \LC (\vtheta)\right] & \geq (c_+ - k_+^2) \left[ k_+^{M-1}\left( F\left(\rho_{(\vtheta^*,0)}, \rho_{(\vtheta^*, \delta t)}\right) - \Tr\left[U(\vtheta^*) \vsigma_1 \rho_0 \vsigma_1 U^\dagger(\vtheta^*) \rho_{(\vtheta^*, \delta t)} \right] + 1\right) - 1\right]^2 \label{eq:final-variance-forricard-1} \\
    & \geq  (c_+ - k_+^2) \left[ k_+^{M-1}\left(2 - 4 \lambda_{\rm max}^2 \delta t ^2 \right) - 1\right]^2  \\
    & \geq (c_+ - k_+^2) \left[ \left(1 - \frac{(M-1)r^2}{3}\right)\left(2 - 4 \lambda_{\rm max}^2 \delta t ^2 \right) - 1\right]^2  \label{eq:final-variance-forricard-2}\\
    & \geq (c_+ - k_+^2) \left[(1 - r^2_0)(1 - 4 \lambda_{\rm max}^2 \delta t ^2 )\right]^2 \\
    & \geq \frac{4 r^4}{45} \left(1 - \frac{4r^2}{7} \right)\left[(1 - r^2_0)(1 - 4 \lambda_{\rm max}^2 \delta t ^2 )\right]^2 \label{eq:final-variance-forricard-3}\;,
\end{align}
where the second inequality is due to Eq.~\eqref{eq:proof-coro1-2}, the third inequality is by bounding $k_+^{M-1}$ with Eq.~\eqref{eq:proof-bound-kplus} and in the next inequality we explicitly use the perturbation regime of $r$ in Eq.~\eqref{eq:proof-coro1-perturb}. To reach the last inequality, we directly bound $c_+ - k_+^2  = \frac{1}{2r} \int_{-r}^{r} d\alpha \cos^4(\alpha) - \left(\frac{1}{2r} \int_{-r}^{r} d\alpha \cos^2(\alpha) \right)^2 \geq \frac{4 r^4}{45} - \frac{16r^6}{315}$ by expanding it in the series and keeping the terms which result in the lower bound.

\end{proof}

We now comment on the assumption that an initial state $\rho_0$ is a product state and discuss a possible extension to an arbitrary initial state. In essence, the product state assumption is used in the proof above to ensure that the term $\Delta_{\thv^*}$ in Eq.~\eqref{eq:var-delta-key} is non-vanishing (see Eq.~\eqref{eq:product_state} to Eq.~\eqref{eq:proof-coro1-2}). However, we argue here that our results should hold more generally for arbitrary initial states as long as the first gate interacts non-trivially with the loss. In particular, this happens for a small enough time-step  $\delta t$ as long as the first gate does not rotate $\rho_0$ into a subspace that is fully parallel to itself.

To illustrate this, we can expand $e^{-iH\delta t}$ and keep only the leading order in $\delta t$ with an arbitrary non-product initial state. Since $\rho_0$ is no longer limited to be a product state, the orthornormal basis construction where $ U(\thv^*)\vsigma_1 \rho_0 \vsigma_1 U^\dagger(\thv^*) $ is chosen to be orthonormal to $\rho_{(\thv^*, 0)}$ (see Eq.~\eqref{eq:product_state}) is no longer guaranteed. However, we can modify the steps slightly and 
decompose $U(\thv^*)\vsigma_1 \rho_0 \vsigma_1 U^\dagger(\thv^*) $ into a parellel and a perpendicular component i.e.,
\begin{align}
   U(\vtheta^*)\vsigma_1 \rho_0 \vsigma_1  U^{\dagger}(\vtheta^*) = \left(a\ket{\phi_1}+ b\ket{\phi_2} \right)\left( \bra{\phi_1}a^* + \bra{\phi_2}b^* \right) \ ,
\end{align}
where $\ketbra{\phi_1} = \rho_{(\vtheta^*,0)}$, $\ketbra{\phi_2}$ is orthonormal to $\rho_{(\vtheta^*,0)}$, $a$ and $b$ are coefficients in the parallel and orthogonal directions such that $|a|, |b| \leq 1$. Then we can use Taylor's series to expand
\begin{align}
     \Tr\left[U(\vtheta^*) \vsigma_1 \rho_0 \vsigma_1 U^{\dagger}(\vtheta^*)   \rho_{(\vtheta^*,\delta t)} \right] =& \Tr\left[\left(a\ket{\phi_1}+ b\ket{\phi_2} \right)\left( \bra{\phi_1}a^* + \bra{\phi_2}b^* \right)\rho_{(\vtheta^*,\delta t)} \right] \\
     =& \Tr\left[\left(a\ket{\phi_1}+ b\ket{\phi_2} \right)\left( \bra{\phi_1}a^* + \bra{\phi_2}b^* \right) e^{-i H \delta t} \ketbra{\phi_1}  e^{i H \delta t} 
 \right] \\
    = & |a|^2+ \delta t\left( i a b^*\bra{\phi_1}H\ket{\phi_2} -ia^* b\bra{\phi_2}H\ket{\phi_1} \right) + \order{\delta t ^2} \;,
\end{align}
where, for the purpose of demonstration, we are only interested in the leading order in $\delta t$. 
Therefore with this we have that the term $\Delta_{\thv^*}$ is 
\begin{equation}
    \Delta_{\thv^*} = (1-|a|^2) + \delta t \left( i a b^*\bra{\phi_1}H\ket{\phi_2} -ia^* b\bra{\phi_2}H\ket{\phi_1} \right) + \order{\delta t^2} \ .
\end{equation}
Hence, for $|a| \in \Omega(1/\poly(n))$ (which is expected to hold when the first gate does not commute with $\rho_0$) and small time-step $\delta t \ll 1$, one can follow the same proof steps which then results in the polynomial scaling of the loss variance in the hypercube with $r$ scaling polynomially.

\revadd{\subsection{Additional numerics}\label{appx:numeric-var}

\begin{figure}[h]
    \centering
    \includegraphics[width=1\linewidth]{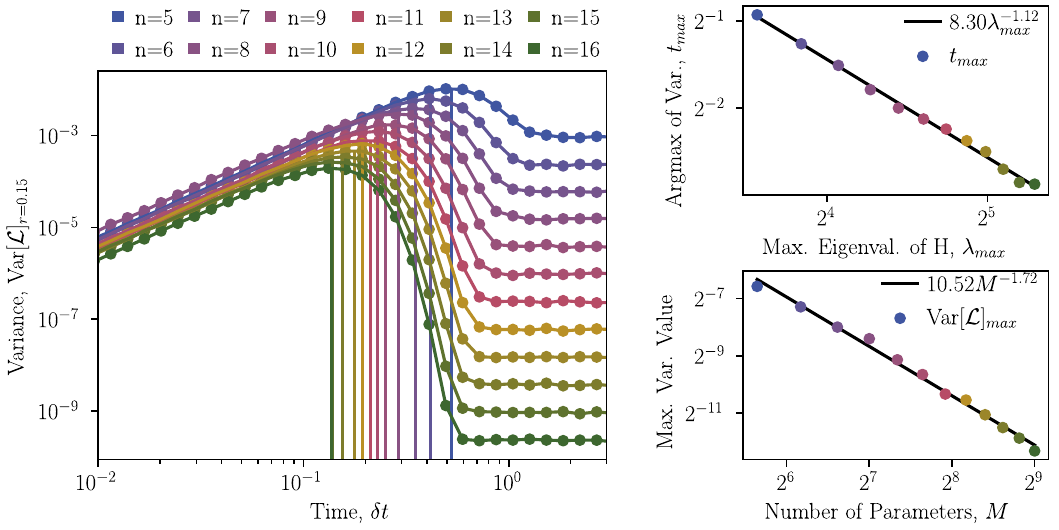}
    \caption{\revadd{\textbf{Variance of landscape and width of narrow gorge.} Here we study the landscape of $\mathcal{L}(\thv)$ at a given distance $r=0.15$ as we increase $\delta t$ and for different system sizes $n$. We consider a hardware efficient 
    ansatz with $n$ layers and random initial parameters within the hypercube.
    a) We plot $\Var_{\thv\sim\uni(\vec{0}, r=0.15)}[\mathcal{L}(\thv,\delta t)]$ as function of $\delta t$. We keep track of its maximum value (marked with a vertical line) for each system size.
    b) The value $r_{\rm max}$ for which the variance peaks as function of the number of parameters in the ansatz.
    c) Maximum value of the variance for different system's size is plotted as a function of parameters. 
    }}
    \label{fig:fixedr_variance}
\end{figure}

In Figure \ref{fig:fixedr_variance} we study the dependency of 
$
\Var_{\thv\sim\uni(\thv^*, r) }[\mf(\thv)]
$
on different time-step sizes for a fixed region $\uni(\theta,r=0.15)$. It is observed that for small $\delta t$ the variance scales favorably while large $\delta t$ leads to the exponential vanishing of the variance. We keep track of the maximum value of the variance as a function of $\delta t$ and we find that it is reached for $\delta t \propto \lambda_{max}^{-1.12}$ while its value scales as $ \Var_{\thv\sim\uni(\thv^*, r) }[\mf(\thv)]\propto M^{-1.72}$. While the scaling for $\delta t$ is slightly larger than that suggested in Theorem~\ref{thm:variance-lower-bound}, i.e. $\delta t \propto \lambda_{\rm max}^{-1}$, the variance also decays slower than the suggested scaling of $
\Var_{\thv\sim\uni(\thv^*, r) }[\mf(\thv)]
\propto M^{-2}
$.
Note that following the maximum value of the variance is an arbitrary choice. In practice, having the polynomial large variance rather than maximizing it is more important. From what numerically observed, it seems we could achieve this with larger time-step's sizes.

}

\section{Proof convexity}\label{app:convexity}
\begin{theorem}[Approximate convexity of the landscape, Formal]\label{th:pRTE} For a time-step of size 
\begin{align}
    \delta t \leq \frac{\mu_{\rm min} + 2 |\epsilon| }{16 M \lambda_{\rm max}} \;,
\end{align}
the loss landscape is $\epsilon$-convex in a hypercube of width $2r_c$ around a previous optimum $\vec{\theta^*}$ i.e., $\vol(\thv^*, r_c)$ such that
\begin{equation}\label{eq:convexitytheorem-appx}
     r_c \geq \frac{1}{M}\left(\frac{\mu_{\rm min}+2|\epsilon|}{16 M} - \lambda_{\rm max} \delta t\right) \;,
\end{equation}
where $\mu_{\min}$ is the minimal eigenvalue of the Fisher information matrix associated with the loss. 
\end{theorem}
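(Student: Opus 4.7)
The strategy is to show that the Hessian $\nabla^2_\thv \LC(\thv, \delta t)$ stays close to its value at the base point $(\thv^*, \delta t = 0)$ throughout the hypercube and then to invoke Weyl's inequality to control its minimum eigenvalue. The preliminaries in Appendix~A.1 give the key identity $\nabla^2_{\vec{x}}\LC(\vec{x})\big|_{\vec{x}=\vec{0}} = \tfrac{1}{2}\FC(\vec{0})$, where $\vec{x} = (\thv - \thv^*, \delta t)$. Taking the $\thv$-sub-block and using that the Fisher matrix is positive semidefinite with minimum eigenvalue $\mu_{\rm min}$, the base-point Hessian satisfies $\lambda_{\rm min}(\nabla^2_\thv \LC(\thv^*, 0)) = \mu_{\rm min}/2$.

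Writing $\nabla^2_\thv \LC(\thv, \delta t) = \tfrac{1}{2}\FC(\vec{0}) + E(\thv, \delta t)$, Weyl's inequality gives $\lambda_{\rm min}(\nabla^2_\thv\LC) \geq \mu_{\rm min}/2 - \|E\|_2$, so $\epsilon$-convexity follows from the sufficient condition $\|E\|_2 \leq (\mu_{\rm min} + 2|\epsilon|)/2$. I would then apply the Taylor remainder theorem (Theorem~\ref{thm:taylor}) to each entry of $E$, anchored at $(\thv^*, 0)$, obtaining
\begin{equation*}
E_{ij}(\thv,\delta t) = \sum_{k=1}^{M} \left(\partial_{\theta_k}\partial_{\theta_i}\partial_{\theta_j}\LC\right)\!(\vec{\nu})\,(\theta_k - \theta^*_k) + \left(\partial_{\delta t}\partial_{\theta_i}\partial_{\theta_j}\LC\right)\!(\vec{\nu})\,\delta t,
\end{equation*}
for some $\vec{\nu}$ on the segment joining the anchor to $(\thv, \delta t)$. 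The spectral norm of the symmetric matrix $E$ can then be bounded by its maximum row-sum of absolute values via the Gershgorin-style Proposition~\ref{prop:upper-eigen}.

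The key technical estimate is to bound the third-order partial derivatives uniformly over the hypercube. Writing $\LC = 1 - \Tr[U(\thv)\rho_0 U^\dagger(\thv)\,\rho_{(\thv^*,\delta t)}]$, each derivative $\partial_{\theta_m}$ acting on $U\rho_0 U^\dagger$ produces a commutator with an adjoint-transformed Pauli string $\tilde{\vsigma}_m$ satisfying $\|\tilde{\vsigma}_m\|_\infty = 1$, while $\partial_{\delta t}$ produces a commutator with $H$. Applying $\|[A,B]\|_p \leq 2\|A\|_p \|B\|_p$ three times, together with H\"older's inequality and $\|U\rho_0 U^\dagger\|_1 = 1$, yields $|\partial_{\theta_k}\partial_{\theta_i}\partial_{\theta_j}\LC| \leq 8$ and $|\partial_{\delta t}\partial_{\theta_i}\partial_{\theta_j}\LC| \leq 8\lambda_{\rm max}$. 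Summing over $j$ and $k$ with $|\theta_k - \theta^*_k| \leq r_c$, each row of $E$ is bounded by $8 M^2 r_c + 8 M \lambda_{\rm max}\,\delta t$. Requiring this to be at most $(\mu_{\rm min}+2|\epsilon|)/2$ and dividing by $16M$ yields $M r_c + \lambda_{\rm max}\,\delta t \leq (\mu_{\rm min}+2|\epsilon|)/(16 M)$, which rearranges to both the stated time-step bound (setting $r_c = 0$) and the stated size of the convex region.

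The main obstacle is accounting correctly for the constant factors: the $16$ in the denominator emerges as $2 \times 8$, combining the $1/2$ from the Fisher-information identity with the $2^3 = 8$ from three nested commutators. A secondary subtlety is ensuring the third-derivative bound holds uniformly throughout the hypercube rather than only at the anchor point; since the generators are Pauli strings with unit spectral norm and the adjoint action preserves this, the required uniformity is automatic, but one must be careful to keep the row-sum in the Gershgorin step scaling as $M$ rather than accidentally $M^2$ in the $\delta t$ contribution.
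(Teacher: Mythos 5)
Your proposal is correct and follows essentially the same route as the paper's proof: relate the base-point Hessian to the quantum Fisher information, bound the third-order derivatives by $8$ (or $8\lambda_{\rm max}$ for the $\delta t$ direction) via three nested commutators, and control the perturbation's eigenvalues with the Gershgorin row-sum bound, yielding the identical constants. The only cosmetic difference is that you Taylor-expand the Hessian entries to first order and invoke Weyl's inequality explicitly, whereas the paper expands the fidelity itself to third order and then differentiates — the resulting estimate is the same.
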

\begin{proof}
We first recall that the region of the loss function is $\epsilon$-convex (i.e., Definition~\ref{def:epsilon-convex}) if all eigenvalues of the Hessian matrix of the loss function i.e., $\LC(\thv) = 1 - F\left[ U(\thv) \rho_0 U^\dagger(\thv), \rho(\vtheta^*, \delta t) \right]$ within the region are larger than $-|\epsilon|$, which can be re-expressed in terms of the fidelity as
\begin{align}\label{eq:proof-def-epsilon-convex}
    \left[\nabla^2_{\thv} F\left( U(\thv) \rho_0 U^\dagger(\thv), \rho(\vtheta^*, \delta t) \right)\right]_{\rm max} \leq |\epsilon| \; ,
\end{align}
for all $\vtheta \in \vol(\theta^*, r)$ with $[A]_{\rm max}$ being the largest eigenvalue of the matrix $A$.   

By using Taylor's expansion around $\thv^*$ (see Theorem~\ref{thm:taylor}), the fidelity can be written in the form of
\begin{align}\label{eq:fidel-taylor}
    F(\vec{x}) = 1 - \sum_{i,j} \frac{x_i x_j}{4} \mathcal{F}_{ij}(\vec{0}) + \sum_{i,j,k} \frac{x_i x_j x_k}{6} \left(\frac{\partial^3 F(\vec{x})}{\partial x_i \partial x_j \partial x_k}\right) \bigg|_{\vec{x} = \vec{\nu}} \;,
\end{align}
where we introduce the shorthand notation of the fidelity around this region as $F(\vec{x})$ with $\vec{x} = (\thv - \thv^*, \delta t)$, $\FC_{ij} (\vec{0})$ are elements of the quantum fisher information at $\vec{x} = \vec{0}$, and the last term is the result of the Taylor's remainder theorem with $\vec{\nu} = c \vec{x}$ for some $c\in [ 0,1]$.

For convenience, we denote $\AC_{ijk}(\vec{x}) = \frac{\partial^3 F(\vec{x})}{\partial x_i \partial x_j \partial x_k}$. This third derivative can be expressed as a nested commutator of the form (for $k > j >i$) 
\begin{align}\label{eq:nested_comm}
    \AC_{ijk}(\vec{x})  := \frac{\partial^3 F(\vec{x})}{\partial x_i \partial x_j \partial x_k} & = \Tr\left[ U^{(M+1,k)} i\left[ U^{(k,j)} i \left[ U^{(j,i)} i \left[ U^{(i,0)} \rho_0 {U^{(i,0)}}^{\dagger}, \vsigma_i\right]{U^{(j,i)}}^\dagger, \vsigma_j  \right]{U^{(k,j)}}^\dagger, \vsigma_k \right] {U^{(M+1,k)}}^\dagger \rho_{(\thv^*, 0)}\right] \;,
\end{align}
where $U(\vec{x}) = U^{(M+1,k)}U^{(k,j)}U^{(j,i)}U^{(i,0)}$ with $U^{(a,b)} = \prod_{l = a+1}^b e^{-i x_l \vsigma_l} \widetilde{V}_{l}$ such that $\vsigma_{M+1} := H$ and $\widetilde{V}_{M+1} = \1$. For clarification we emphasise that the notation $\sigma_{M+1}:=H$ does \textit{not} imply that $H^2 = \1$, $H$ is still a general Hamiltonian, but rather this is just a way of simplifying the notation. That is, $U(\vec{x})$ is decomposed into 4 sections e.g., $U^{(i,0)}$ contains the part of $U(\thv)$ from the first gate to the $i^{\rm th}$ gate.

\medskip

Now, we consider an element of $\nabla^2_{\thv} F\left( \vec{x}\right)$ which can be obtained by explicitly differentiating $F(\vec{x})$ in Eq.~\eqref{eq:fidel-taylor} with respect to the variational parameters (i.e., $x_l$ and $x_m$ cannot be $\delta t$)
\begin{align}
    \frac{\partial^2 F(\vec{x})}{\partial x_l \partial x_m } = - \frac{1}{2} \FC_{lm}(\vec{0}) + \frac{1}{6} \widetilde{\AC}_{lm}(\vec{\nu}) \;,
\end{align}
with
\begin{align}
     \widetilde{\AC}_{lm}(\vec{\nu}) = \sum_{i=1}^{M+1} x_i \left( \AC_{lmi}(\vec{\nu}) + \AC_{lim}(\vec{\nu}) + \AC_{ilm}(\vec{\nu}) + \AC_{mli}(\vec{\nu}) + \AC_{mil}(\vec{\nu}) +  \AC_{iml}(\vec{\nu})\right) \;,
\end{align}
where we remark that here the sum includes the time component $\delta t$. 

Now, the largest eigenvalue of $\nabla^2_{\thv} F\left( \vec{x}\right)$ can be bounded as
\begin{align}
    \left[ \nabla^2_{\thv} F\left( \vec{x}\right)\right]_{\rm max} \leq -\frac{1}{2} \left[ \FC(\vec{0}) \right]_{\rm min} + \frac{1}{6} [\widetilde{\AC}(\vec{\nu}) ]_{\rm max} \;,
\end{align}
where we denote $[A]_{\rm min}$ as the smallest eigenvalue of the matrix $A$.

In order to bound $[\widetilde{\AC}(\vec{\nu}) ]_{\rm max}$, we first consider the bound on $\AC_{ilm}(\vec{\nu})$
\begin{align}\label{eq:bound}
    \AC_{ilm}(\vec{\nu}) & \leq \left| \AC_{ilm}(\vec{\nu})\right| \\
    & \leq \left\| U^{(M+1,m)} i\left[ U^{(m,l)} i \left[ U^{(l,i)} i \left[ U^{(i,0)} \rho_0 {U^{(i,0)}}^{\dagger}, \vsigma_i\right]{U^{(l,i)}}^\dagger, \vsigma_l  \right]{U^{(m,l)}}^\dagger, \vsigma_m \right] {U^{(M+1,m)}}^\dagger \right\|_\infty \left\| \rho_{(\thv^*, 0)} \right\|_1 \\
    & \leq 2^3 \| \vsigma_i \|_{\infty} \| \vsigma_l \|_{\infty} \| \vsigma_m \|_{\infty} \\
    & = 8  \| \vsigma_i \|_{\infty} \label{eq:bound2}\;.
\end{align}
Here the second inequality is due to H\"{o}lder's inequality. In the third inequality we use a few identities including (i) the one-norm of a pure state is $1$, (ii) $\| U A\|_p = \| A\|_p$ for any unitary $U$, (iii) $\| i [A,B] \|_p = 2 \|A\|_p \|B\|_p$ and lastly (iv) $\|AB\|_p \leq \|A\|_p \|B\|_p$. To reach the final equality, we recall that since $x_l$ and $x_m$ cannot be a time component $\delta t$, $\vsigma_l$ and $\vsigma_m$ are generators of the circuit which have $ \| \vsigma_l \|_{\infty} = \| \vsigma_m \|_{\infty} =1$.

We now bound the sum of the absolute of elements in a row of $\widetilde{\AC}(\vec{\nu})$ as
\begin{align}\label{eq:bound_tilde}
    \sum_{m=1}^M \left| \widetilde{\AC}_{lm}(\vec{\nu}) \right| 
    & \leq \sum_{m=1}^M \sum_{i = 1}^{M+1} |x_i| \left( |\AC_{lmi}(\vec{\nu})| + |\AC_{lim}(\vec{\nu})| + |\AC_{ilm}(\vec{\nu})| + |\AC_{mli}(\vec{\nu})| + |\AC_{mil}(\vec{\nu})| +  |\AC_{iml}(\vec{\nu})| \right)\\
    & \leq 48 \sum_{m=1}^{M} \sum_{i=1}^{M+1} |x_i| \| \vsigma_i \|_{\infty} \\
    & \leq 48 M  \left( \lambda_{\rm max} \delta t + M r \right)
\end{align}
By invoking Proposition~\ref{prop:upper-eigen}, the largest eigenvalue of the matrix can then be bounded as
\begin{align}
    [\widetilde{\AC}(\vec{\nu}) ]_{\rm max} 
    & \leq 48 M  \left( \lambda_{\rm max} \delta t + M r \right) \;.
\end{align}
Finally, we can guarantee the region of $\epsilon$-convexity (i.e., Eq.~\eqref{eq:proof-def-epsilon-convex}) by enforcing the following condition
\begin{align}
    -\frac{1}{2} \left[ \FC(\vec{0}) \right]_{\rm min} + 8 M  \left( \lambda_{\rm max} \delta t + M r \right)   \leq |\epsilon| \;.
\end{align}
Upon rearranging the terms, we have
\begin{align} \label{eq:proof-ricard-hello}
    r \leq \frac{1}{M}\left(\frac{\mu_{\rm min}+2|\epsilon|}{16 M} - \lambda_{\rm max} \delta t\right) \;.
\end{align}
Indeed, this implies that \textit{any} hypercube $\vol(\thv^*, r)$ such that $r$ satisfies Eq.~\eqref{eq:proof-ricard-hello}  is guaranteed to be approximately convex. Hence, we know that the total $\epsilon$-convex region has to be at least of size $\frac{1}{M}\left(\frac{\mu_{\rm min}+2|\epsilon|}{16 M} - \lambda_{\rm max} \delta t\right)$. More explicitly, by denoting $r_c$ to be the length of the total $\epsilon$-approximate convex region $\vol(\thv^*, r_c)$, we have
\begin{equation}
    r_c \geq \frac{1}{M}\left(\frac{\mu_{\rm min}+2|\epsilon|}{16 M} - \lambda_{\rm max} \delta t\right) \;.
\end{equation}
We note that the bound is only informative if the time-step respects
\begin{align}
    \delta t \leq \frac{\mu_{\rm min} + 2 |\epsilon| }{16 M \lambda_{\rm max}} \;.
\end{align}
This completes the proof of the theorem.
\end{proof}

\section{Adiabatic Moving Minima}\label{app:moving-min}
In this section, we provide further analysis on the adiabatic moving minimum, including the proof of Theorem~\ref{thm:adiabaticminimum} and some some technical subtleties. We first recall the definition of the adiabatic minimum and also introduce a definition of the adiabatic shift. 

\setcounter{definition}{1}
\begin{definition}[Adiabatic Minima]
For any time $\delta t$ in the range $[0,T]$, the function corresponding to the evolution of the adiabatic minima for some initial minimum $\vtheta^*$, is a continuous function $\vtheta_A(\delta t) \in C^{\infty}(\mathbb{R},\mathbb{R}^{m})$ such that $\vtheta_A(0) = \vtheta^*$ and
\begin{align}
    \nabla_{\thv} \mf(\vtheta_A(\delta t), \delta t)=\vec{0} \;.
\end{align}
The adiabatic minimum at time $\delta t$ is $\vtheta_A(\delta t)$
 \end{definition}

\setcounter{definition}{3}
\begin{definition}[Adiabatic shift of the previous minima]\label{def:adaibatic-shift}
The shift of the adiabatic minimum with respect to the previous optimal point is defined as
\begin{align}
    \valpha_{A}(\delta t) = \thv_A(\delta t) - \thv^* \;,
\end{align}
and also respects
\begin{align}
    \nabla_{\valpha} \LC(\valpha,\delta t) \big|_{\valpha = \valpha_A(\delta t)} = \vec{0} \;,
\end{align}
for any time $\delta t$. 
\end{definition}

Intuitively, the adiabatic function corresponds to the minima one would converge to by increasing $\delta t$ infinitely slowly and minimizing $\mf(\vtheta,\delta t)$ by gradient descent with a very small learning rate. By analogy, one can imagine dropping a marble in the initial minima and then slowly modifying the landscape by increasing $\delta t$. The position of the marble would correspond to our adiabatic minima and in practice it is where we expect our algorithm to converge.

Up to this point, there are two caveats that we would like to highlight. First, this adiabatic minimum is not necessarily the global minimum (as discussed in Section~\ref{sec:minimum-jump} - there could potentially be a jump in the global minimum). The other subtlety is that the existence of the adiabatic minimum is not always guaranteed for increasing $\delta t$. This is highlighted in Figure \ref{fig:disapearing_adiabatic}. While for a small time-step one intuitively expects to have the adiabatic minimum, it is not certain whether we have this for large time-steps. That is, the adiabatic function can cease to be continuous beyond $T$ (and in practice we do not in general know what $T$ is). Crucially, the discontinuity in the adiabatic minimum path implies that zero gradients now turn into some slopes. Hence, the lack of a continuous adiabatic minimum does not necessarily imply untrainability.

\begin{figure}
    \centering
    \includegraphics[width = 0.45\textwidth]{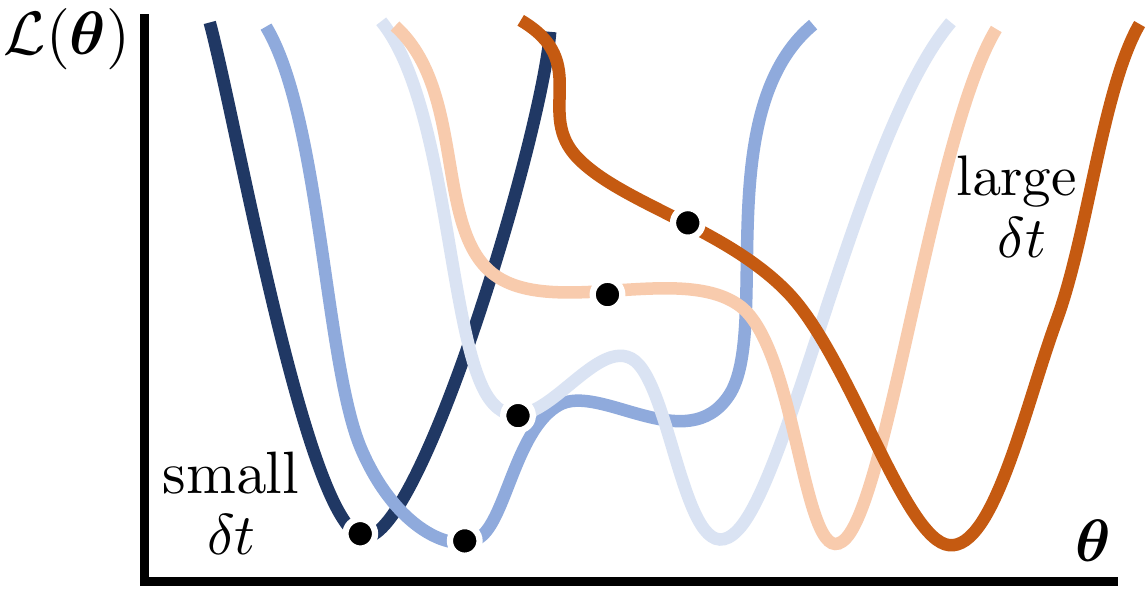}
    \caption{\textbf{Adiabatic minima.} Here we show the adiabatic minima (highlighted in the black dots) as a function of the time-step $\delta t$. From dark blue to orange, we highlight how the loss function evolves with increasing $\delta t$. Indeed, when the time-step increases the adiabatic minima stops being the global minima. Then it turns into a saddle point and finally disappears to become a slope. As mentioned in the main text, when the adiabatic minima disappears it turns into a slope.}
    \label{fig:disapearing_adiabatic}
\end{figure}

With these caveats in mind, we proceed under the assumption that the adiabatic minimum exists within the time-step of our interest. We first present Proposition~\ref{prop:adiabatic-min} which shows that the shift in the adiabatic minimum can be bounded with the time-step.

\begin{proposition}\label{prop:adiabatic-min}
Given a time-step of the current iteration $\delta t$ and assuming that the adiabatic minimum exists within this time frame, the shift of the adiabatic minimum $\valpha_{A}(\delta t)$ as defined in Definition~\ref{def:adaibatic-shift} can be bounded as 
\begin{align}\label{eq:adiabatic-shift-prop-appx}
    \| \valpha_A(\delta t) \|_2 \leq \frac{2\sqrt{M} \lambda_{\rm max} \delta t}{\beta_A} \;,
\end{align}
where $M$ is the number of parameters, $\lambda_{\rm max}$ is the largest eigenvalue of the dynamic Hamiltonian $H$ and $\beta_A = \frac{\Dot{\valpha}_A^T(\delta t) \left( \nabla^2_{\valpha} \LC(\valpha,\delta t) \big|_{\valpha = \valpha_A(\delta t)}\right) \Dot{\valpha}_A(\delta t) }{\| \Dot{\valpha}_{A}(\delta t) \|_2^2} \;\;$. 
\end{proposition}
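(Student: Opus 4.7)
\medskip

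\textbf{Proof proposal.} The plan is to derive an implicit ODE for the adiabatic shift $\valpha_A(\delta t)$, bound the norm of its derivative pointwise in $\delta t$, and then integrate. The starting point is the defining condition of the adiabatic minimum, $\nabla_{\valpha} \LC(\valpha_A(\delta t),\delta t) = \vec{0}$ for all $\delta t$ in the regime where the adiabatic function is well-defined and smooth. Differentiating this identity with respect to $\delta t$ (using the chain rule, which is valid because $\valpha_A \in C^{\infty}$ by Definition~\ref{def:adiabatic-minimum}) yields
\begin{equation}
    \left(\nabla^2_{\valpha} \LC(\valpha,\delta t)\big|_{\valpha = \valpha_A(\delta t)}\right) \Dot{\valpha}_A(\delta t) \;+\; \partial_{\delta t}\nabla_{\valpha}\LC(\valpha,\delta t)\big|_{\valpha=\valpha_A(\delta t)} = \vec{0}\,.
\end{equation}
This is the implicit function theorem in disguise and it gives a clean handle on $\Dot{\valpha}_A(\delta t)$.

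Next, I would project this identity onto $\Dot{\valpha}_A(\delta t)$ itself in order to introduce $\beta_A$ naturally. Taking the inner product with $\Dot{\valpha}_A(\delta t)$ and dividing by $\|\Dot{\valpha}_A(\delta t)\|_2^2$ gives, via the definition of $\beta_A$ in the proposition,
\begin{equation}
    \beta_A \,\|\Dot{\valpha}_A(\delta t)\|_2^2 \;=\; -\,\Dot{\valpha}_A(\delta t)^T \,\partial_{\delta t}\nabla_{\valpha}\LC(\valpha,\delta t)\big|_{\valpha=\valpha_A(\delta t)}\,.
\end{equation}
Applying the Cauchy--Schwarz inequality to the right-hand side and cancelling one factor of $\|\Dot{\valpha}_A(\delta t)\|_2$ yields
\begin{equation}
    \|\Dot{\valpha}_A(\delta t)\|_2 \;\leq\; \frac{1}{\beta_A}\,\bigl\| \partial_{\delta t}\nabla_{\valpha}\LC(\valpha,\delta t)\big|_{\valpha=\valpha_A(\delta t)} \bigr\|_2\,.
\end{equation}

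The remaining task is to bound the mixed partial derivative $\partial_{\delta t}\partial_{\alpha_i}\LC$ entrywise. This is exactly a third-order-type expression analogous to the nested commutators $\AC_{ijk}$ that appear in the proof of Theorem~\ref{th:pRTE} (Appendix~\ref{app:convexity}), except that the slot corresponding to $\delta t$ contributes a generator $H$ instead of some $\vsigma_l$. Repeating the H\"older / commutator-norm computation from Eqs.~\eqref{eq:bound}--\eqref{eq:bound2} with $\|H\|_\infty = \lambda_{\rm max}$ and $\|\vsigma_i\|_\infty = 1$ gives $\bigl|\partial_{\delta t}\partial_{\alpha_i}\LC\bigr| \leq 2\lambda_{\rm max}$ for each $i$, so that the Euclidean norm of the $M$-dimensional gradient satisfies $\bigl\|\partial_{\delta t}\nabla_{\valpha}\LC\bigr\|_2 \leq 2\sqrt{M}\,\lambda_{\rm max}$. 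Combining with the previous display gives $\|\Dot{\valpha}_A(\delta t)\|_2 \leq 2\sqrt{M}\lambda_{\rm max}/\beta_A$.

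Finally, using $\valpha_A(0) = \vec{0}$ and the fundamental theorem of calculus, $\valpha_A(\delta t) = \int_0^{\delta t} \Dot{\valpha}_A(s)\,ds$, the triangle inequality in the form $\|\int_0^{\delta t} \Dot{\valpha}_A(s)\,ds\|_2 \leq \int_0^{\delta t}\|\Dot{\valpha}_A(s)\|_2\,ds$ gives the claimed bound. The main subtlety that I anticipate is the time-dependence of $\beta_A$ inside the integrand: strictly speaking, the pointwise bound on $\|\Dot{\valpha}_A(s)\|_2$ involves $\beta_A(s)$, so to obtain the stated bound one must interpret $\beta_A$ in Eq.~\eqref{eq:adiabatic-shift-prop-appx} as (effectively) an infimum or representative value over $[0,\delta t]$, or else present the result as a differential / leading-order statement that becomes an equality in the limit $\delta t \to 0$. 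A second, more benign, obstacle is justifying the applicability of the implicit differentiation at all: this requires $\valpha_A$ to remain a smooth branch of zeros of $\nabla_{\valpha}\LC$, which is exactly the continuity/smoothness hypothesis in Definition~\ref{def:adiabatic-minimum} and breaks down precisely when the adiabatic minimum collides with a saddle or disappears, as already flagged in Fig.~\ref{fig:disapearing_adiabatic}.
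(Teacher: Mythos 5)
Your proposal is correct and follows essentially the same route as the paper's proof: implicit differentiation of the stationarity condition, projection onto $\Dot{\valpha}_A$ to introduce $\beta_A$, Cauchy--Schwarz, an entrywise bound of $2\lambda_{\rm max}$ on the mixed partial $\partial_{\delta t}\partial_{\alpha_i}\LC$, and integration. The only cosmetic difference is that you bound the mixed partial by reusing the nested-commutator/H\"older computation from the convexity appendix whereas the paper routes it through the parameter-shift rule (both yield $2\lambda_{\rm max}$), and the subtlety you flag about the time-dependence of $\beta_A$ inside the integrand is equally present, and silently glossed over, in the paper's own proof.
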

\begin{proof}
We note that to improve readability of the proof it is more convenient here to use $t$ to refer as a time-step (instead of $\delta t$ as in other sections). 
We recall from Definition~\ref{def:adaibatic-shift} that the adiabatic shift can be expressed as
\begin{align}
    \nabla_{\valpha} \LC(\valpha,t) \big|_{\valpha = \valpha_A(t)} = \vec{0} \;,
\end{align}
which holds for \textit{any} $t$. For convenience, we denote $\nabla_{\valpha_A} \LC := \nabla_{\valpha} \LC(\valpha,t) \big|_{\valpha = \valpha_A(t)}$. By a direct differentiation with respect to $t$, this leads to
\begin{align}\label{eq:moving-min-2-proof1}
    \frac{d}{dt}\left(\nabla_{\valpha_A} \LC \right) = \partial_t \nabla_{\valpha_A} \LC + \left(\nabla_{\valpha_A}^2 \LC  \right)\Dot{\valpha}_A = \vec{0} \;,
\end{align}
where we denote $\partial_t = \partial/\partial t$ and $\Dot{\valpha}_A  = d \valpha_A(t)/dt$. We remark that $\left(\nabla_{\valpha_A}^2 \LC  \right)\Dot{\valpha}_A $ is a matrix vector multiplication with $\nabla_{\valpha_A}^2 \LC$ a Hessian matrix evaluated at $\valpha_{A}(t)$. By multiplying with $\frac{\Dot{\valpha}_A^T}{\|\Dot{\valpha}_A\|_2}$ from the left, we have
\begin{align}
    \frac{\Dot{\valpha}_A^T}{\|\Dot{\valpha}_A \|_2} \partial_t \nabla_{\valpha_A} \LC  + \frac{\Dot{\valpha}_A^T \left(\nabla_{\valpha_A}^2 \LC  \right)\Dot{\valpha}_A}{\|\Dot{\valpha}_A \|_2^2} \|\Dot{\valpha}_A \|_2 = 0 \;.
\end{align}
For convenience, we denote $\beta_A := \Dot{\valpha}_A^T \left(\nabla_{\valpha_A}^2 \LC  \right)\Dot{\valpha}_A / \| \Dot{\valpha}_A \|_2^2 $. 
By rearranging the terms, we can bound  the norm of $\Dot{\valpha}_A(t)$ as
\begin{align}
    \| \Dot{\valpha}_A(t) \|_2 & =  \left\| -  \frac{\Dot{\valpha}_A^T \partial_t \nabla_{\valpha_A} \LC}{\|\Dot{\valpha}_A \|_2 \beta_A} \right\|_2 \\
    & \leq \frac{\| \partial_t \nabla_{\valpha_A} \LC\|_2}{\beta_A} \\
    & \leq \frac{\sqrt{M} \left|\partial_t \partial_{\alpha^{(i)}_{A}} \LC \right|_{\rm max}}{\beta_{A}} \\
    & \leq \frac{2\sqrt{M} \lambda_{\rm max}}{\beta_A} \;,
\end{align}
where the first inequality is due to Cauchy-Schwarz inequality, in the second inequality we expand the 2-norm out explicitly and take the largest value in the sum i.e., $ \| \vec{a} \|_2 = \sqrt{\sum_{i=1}^M a_i^2} \leq \sqrt{M} |a_i|_{\rm max}$ with $\alpha_A^{(i)}$ being the $i^{\rm th}$ component of $\valpha_{A}$. To reach the last inequality, we recall that the loss function is of the form $\LC(\valpha,t) = 1 - \Tr\left( e^{-iHt} \rho_{\thv^*} e^{i Ht} \rho_{\vtheta}\right)$ where $H$ is the dynamic Hamiltonian, $\rho_{\thv^*}$ is the state corresponding to the solution of the previous iteration and $\rho_{\vtheta}$ is the parametrized state that depends on $\valpha$ and respects a parameter shift's rule. We can then bound the quantity of interest as
\begin{align}
    \partial_t \partial_{\alpha^{(i)}_{A}} \LC  & = \frac{\partial}{\partial t}\left( \frac{\partial}{\partial \alpha^{(i)}}  \LC(\valpha, t)\right)\bigg|_{\valpha = \valpha_A(t)}  \\
    & = \frac{1}{2} \left(\frac{\partial}{\partial t} \LC \left(\valpha_A + \frac{\pi}{2} \hat{\alpha}_i\right) - \frac{\partial}{\partial t} \LC \left(\valpha_A - \frac{\pi}{2} \hat{\alpha}_i\right)\right) \\
    & = \frac{1}{2} \left( \Tr\left( i [H, e^{-iHt}\rho_{\thv^*} e^{i Ht}] \rho_{\valpha_{A,+}}\right) - \Tr\left( i [H, e^{-iHt}\rho_{\thv^*} e^{i Ht}] \rho_{\valpha_{A,-}}\right) \right) \\
    & \leq \left\|  [H, e^{-iHt}\rho_{\thv^*} e^{i Ht}]\right\|_{\infty} \\
    &\leq 2 \lambda_{\rm max} \;,
\end{align}
where the second equality is due to a parameter shift's rule, in the third equality we perform the direct differentiation with $t$ and denote $\valpha_{A,+} = \valpha_A + \frac{\pi}{2} \hat{\alpha}_i$ as well as  $\valpha_{A,-} = \valpha_A - \frac{\pi}{2} \hat{\alpha}_i$. The first inequality is due to the triangle inequality followed by the H\"{o}lder's inequality with the fact that $\| \rho \|_1 =1 $ for any pure quantum state $\rho$. In the last inequality, we use $\| i [A,B] \|_p = 2 \|A\|_p \|B\|_p$ and the unitary invariance of the p-norm $\| U A\|_p = \|A\|_p$ as well as $\|\rho \|_{\infty} = 1$ for any pure quantum state $\rho$.

Lastly, we can bound the shift of the adiabatic minimum as
\begin{align}
    \left\| \valpha_A(t) \right\|_2 & = \left\| \int_{0}^t \Dot{\valpha}_A(\tau) d\tau\right\|_2 \\
    & \leq \int_{0}^t \| \Dot{\valpha}_A(\tau) \|_2 d\tau \\
    & \leq \frac{2\sqrt{M} \lambda_{\rm max} t}{\beta_A} \;,
\end{align}
which completes the proof.

\end{proof}

Note that for the bound in Eq.~\ref{eq:adiabatic-shift-prop-appx} to be informative about the asymptotic scaling we require $\beta_A$ to be at least polynomially small.

In order for the training to have convergence guarantees to the adiabatic minimum, we need to ensure that our time-step is small enough 
so that our \textit{adiabatic minima} is inside of the trainable and convex regions. This is formalized in Theorem~\ref{thm:adiabaticminimum} which is presented in the main text and proved here.

\begin{theorem}[Adiabatic minimum is within provably `nice' region, Formal]\label{th:min-in-convexregion}
If the time-step $\delta t$ is chosen such that
\begin{align}
    \delta t \leq \frac{\eta_0 \beta_A }{2 M \lambda_{\rm max}} \;,
\end{align}
with some small constant $\eta_0$, then the adiabatic minimum $\thv_A(\delta t)$ is guaranteed to be within the non-vanishing gradient region (as per Theorem~\ref{thm:variance-lower-bound}), and additionally, if $\delta t$ is chosen such that
\begin{align}
    \delta t \leq \frac{\beta_A(\mu_{\rm min} + 2|\epsilon|)}{32 \lambda_{\max}M^{5/2} \left(1 + \frac{\beta_A}{2M^{3/2}} \right)} \;.
\end{align}
then the adiabatic minimum $\thv_A(\delta t)$ is guaranteed to be within the $\epsilon$-convex region (as per Theorem~\ref{thm:convex}) where 
\begin{equation}
    \beta_A := \frac{\Dot{\vtheta}_A^T(\delta  t) \left( \nabla^2_{\vtheta} \LC(\vtheta) \big|_{\vtheta = \vtheta_A(\delta  t)}\right) \Dot{\vtheta}_A(\delta t) }{\| \Dot{\thv}_{A}(\delta t) \|_2^2} 
\end{equation}
corresponds to the second derivative of the loss in the direction in which the adiabatic minimum moves.
\end{theorem}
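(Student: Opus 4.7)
The plan is to reduce the theorem to Proposition~\ref{prop:adiabatic-min}, which already controls the distance traveled by the adiabatic minimum as a function of $\delta t$, and then to simply demand that this distance fits inside the two ``nice'' regions already characterized by Theorems~\ref{thm:variance-lower-bound} and \ref{th:pRTE}. Concretely, by Proposition~\ref{prop:adiabatic-min} the adiabatic shift satisfies
\begin{equation}
    \|\valpha_A(\delta t)\|_\infty \;\leq\; \|\valpha_A(\delta t)\|_2 \;\leq\; \frac{2\sqrt{M}\,\lambda_{\max}\,\delta t}{\beta_A},
\end{equation}
where the first inequality comes from the standard norm relation (since hypercubes are defined in $\ell_\infty$). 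So the task reduces to choosing $\delta t$ small enough that this $\ell_\infty$ bound is smaller than the radius of each provably ``nice'' hypercube.

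For the substantial-gradient region, Theorem~\ref{thm:variance-lower-bound} guarantees polynomial variance for any $r$ satisfying the constraint in Eq.~\eqref{eq:perturb-condition1}. In the regime of small $\delta t$ this reduces to $r \leq c/\sqrt{M}$ for some constant $c$ (absorbable into an arbitrary small constant $\eta_0$). Requiring $\frac{2\sqrt{M}\lambda_{\max}\delta t}{\beta_A}\leq \frac{\eta_0}{\sqrt{M}}$ and solving for $\delta t$ gives
\begin{equation}
    \delta t \;\leq\; \frac{\eta_0\,\beta_A}{2M\lambda_{\max}},
\end{equation}
which is the first stated bound.

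For the $\epsilon$-convex region, Theorem~\ref{th:pRTE} certifies $\epsilon$-convexity in a hypercube of radius
\begin{equation}
    r_c \;\geq\; \frac{1}{M}\left(\frac{\mu_{\min}+2|\epsilon|}{16M}-\lambda_{\max}\delta t\right).
\end{equation}
The subtlety here is that the admissible $r_c$ itself shrinks as $\delta t$ grows, so enforcing $\|\valpha_A(\delta t)\|_\infty \leq r_c$ yields a nonlinear inequality in $\delta t$:
\begin{equation}
    \frac{2\sqrt{M}\lambda_{\max}\delta t}{\beta_A} \;\leq\; \frac{1}{M}\left(\frac{\mu_{\min}+2|\epsilon|}{16M}-\lambda_{\max}\delta t\right).
\end{equation}
Collecting the $\delta t$ terms and rearranging,
\begin{equation}
    \lambda_{\max}\delta t\,\left(\frac{2M^{3/2}}{\beta_A}+1\right) \;\leq\; \frac{\mu_{\min}+2|\epsilon|}{16M},
\end{equation}
which directly gives the second claimed bound $\delta t \leq \frac{\beta_A(\mu_{\min}+2|\epsilon|)}{32\lambda_{\max}M^{5/2}(1+\beta_A/(2M^{3/2}))}$.

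The main conceptual obstacle is really already absorbed into Proposition~\ref{prop:adiabatic-min}: it is there that one has to assume the adiabatic minimum function $\thv_A(\delta t)$ remains continuous and well-defined on the interval of interest (no vanishing-into-slope events, as discussed around Fig.~\ref{fig:disapearing_adiabatic}), and that the curvature $\beta_A$ in the direction of motion is not exponentially small (otherwise neither bound is informative asymptotically). Given that proposition, the theorem itself is essentially a two-line comparison of radii; the only technical care needed is (i) the $\ell_2$-to-$\ell_\infty$ conversion, which we have absorbed harmlessly, and (ii) correctly handling the $\delta t$-dependence of $r_c$ when inverting the convexity inequality.
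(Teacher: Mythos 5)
Your proposal is correct and follows essentially the same route as the paper's own proof: both invoke Proposition~\ref{prop:adiabatic-min} for the bound $\|\valpha_A(\delta t)\|_\infty \leq 2\sqrt{M}\lambda_{\max}\delta t/\beta_A$ and then compare this against the $\eta_0/\sqrt{M}$ radius of the substantial-gradient region and the $\delta t$-dependent radius $r_c$ of the $\epsilon$-convex region, with the same algebraic rearrangement producing both stated bounds. The caveats you flag (well-definedness of $\thv_A(\delta t)$ and non-vanishing $\beta_A$) are exactly those the paper records around Proposition~\ref{prop:adiabatic-min}.
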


\begin{proof}
From Proposition~\ref{prop:adiabatic-min} and the norm inequality, the adiabatic minimum follows
\begin{align}
    \| \valpha_A(\delta t) \|_\infty \leq \| \valpha_A(\delta t) \|_2 \leq \frac{2\sqrt{M} \lambda_{\rm max} \delta t}{\beta_A} \;.
\end{align}

What we want now is to incorporate the conditions of the regions of interest. That is, by fine-tuning $\delta t$, we want a guarantee that the new minimum is within (i) the non-vanishing gradient region and (ii) the convex region. 

\medskip

\underline{For (i) the non-vanishing gradient region,} we recall from the formal version of Theorem~\ref{oldsimplifiedR} that given the time-step bounded as $\delta t \leq 1/2\lambda_{\rm max}$, the hypercube of width $2r$ has the substantial non-vanishing gradients when $r$ follows 
\begin{align}
    r  = \frac{\eta_0}{\sqrt{M}} \;,
\end{align}
with some constant $\eta_0$. Then, it is sufficient to have the guarantee that the adiabatic minimum is inside this region by imposing
\begin{align}
    \| \valpha_A(\delta t) \|_\infty  \leq \frac{2\sqrt{M} \lambda_{\rm max} \delta t}{\beta_A} \leq  \frac{\eta_0}{\sqrt{M}}\;,
\end{align}
which leads to
\begin{align}
    \delta t \leq \frac{\eta_0 \beta_A }{2 M \lambda_{\rm max}} \;.
\end{align}

\medskip

\underline{For (ii) the convex region,} from Theorem~\ref{th:pRTE}, recall that given $\delta t \leq \frac{\mu_{\rm min} + 2 |\epsilon| }{16 M \lambda_{\rm max}}$, we have an $\epsilon$-convex hypercube region of width $2r_c$ such that
\begin{align}
    r_c \geq \frac{1}{M}\left(\frac{\mu_{\rm min}+2|\epsilon|}{16 M} - \lambda_{\rm max} \delta t\right) \;.
\end{align}
Therefore, it is sufficient to guarantee that the adiabatic minimum is inside this convex region by imposing
\begin{align}
    \| \valpha_A(\delta t) \|_\infty \leq \frac{2\sqrt{M} \lambda_{\rm max} \delta t}{\beta_A} \leq \frac{1}{M}\left(\frac{\mu_{\rm min}+2|\epsilon|}{16 M} - \lambda_{\rm max} \delta t\right) \leq r_c \;.
\end{align}
Upon rearranging the terms, the time-step is bounded as
\begin{align} \label{eq:dt-convex-min}
    \delta t \leq \frac{\beta_A(\mu_{\rm min} + 2|\epsilon|)}{32 \lambda_{\max}M^{5/2} \left(1 + \frac{\beta_A}{2M^{3/2}} \right)} \;.
\end{align}
We remark that this bound in Eq.~\eqref{eq:dt-convex-min} is much tighter than the bound in Theorem~\ref{th:pRTE} (specified above). That is, to have such a guarantee, $\delta t$ is relatively shorter.

\end{proof}

\section{Imaginary Time Evolution}\label{app:imaginary}
\subsection{Framework}
The variational imaginary time evolution can be used to prepare ground states and thermal states~\cite{Yuan_2019,McArdle_2019,Garcia-Ripoll,Wofl2015}. In this section, we focus on a variational \revadd{time-evolution} compression version of imaginary time evolution~\cite{benedetti2020hardware}. Most steps of the algorithm are identical to the real-time version described in the main text except we substitute $\delta t\to i \delta \tau$. This leads to $U = e^{i\delta t H}\to \mathcal{U} = e^{-\delta \tau H}$. One key technicality is to add a constraint such that the state after the evolution is forced to be normalised. That is, we have 
\begin{equation}
    |\psi_{\delta \tau}\rangle = \frac{1}{\sqrt{Z}}e^{-\delta \tau H}\ket{\psi}
\end{equation}
where $Z =\bra{\psi}e^{-2\tau H}\ket{\psi}$. We remark that $|\psi_{\delta \tau}\rangle$ is now a valid \textit{pure} quantum state and thus there is a unitary that prepares it.
To have any chance of preparing the ground state via imaginary time evolution the initial state $\ket{\psi}$ must have a non-vanishing overlap with the ground state. If the initial state is instead a maximally entangled state, and imaginary time evolution is applied to only half the Bell state, then this approach can be used to prepare a thermal double field state (and thereby a thermal state). However, in what follows we will focus on ground state preparation.

In the variational \revadd{time-evolution} compression approach for imaginary time evolution, one aims to iteratively learn $|\psi_{\delta \tau}\rangle$ with a parametrized quantum circuit $U(\thv)$ (with parameters initialized around the optimal parameter values obtained from the previous iteration). More explicitly, the loss function at each iteration is of the form
\begin{equation}
    \mf_{\rm ITE}(\thv) = \Tr\left[U(\thv)\rho U^\dagger(\thv)\frac{1}{Z}e^{-\delta\tau}U(\vtheta^*)\rho U^\dagger(\vtheta^*)e^{-\delta\tau}  \right] \;,
\end{equation}
where $\thv^*$ are the optimal parameters from the previous iteration. For details on how to compute this loss in practise see Ref.~\cite{benedetti2020hardware}. 
We further suppose that the parametrised quantum circuit is of the same form used in the real-time case i.e.,
\begin{equation}
	U\left( \vtheta \right) = \prod_{i=1}^M V_i U_i(\theta_i)
\end{equation}
where $\left\{ V_i \right\}_{i=1}^M$ are a set of fixed unitary matrices, $\left\{ U_i(\theta_i) = e^{-i\theta_i\vsigma_i} \right\}_{i=1}^M$, are the parameter-dependant rotations, and $\{ \vsigma_i \}_{i=1}^M$ is a set of gate generators such that $\sigma_i^2 = \1$ e.g., Pauli strings on $n$ qubits. Crucially, it is natural to consider the parameter initialization around $\thv^*$ i.e., 
\begin{align}
    \thv \sim \uni(\thv^*, r) \;,
\end{align}
with $r$ being some small perturbation.

\subsection{Summary of analytical results for imaginary time evolution}

Here we summarize analytical results similar to those obtained for the real time evolution scenario. These include the existence of the non-vanishing gradient region with the warm-start initialization, the guarantee of the approximate convex region as well as the analysis on the adiabatic minimum. 
We note that the derivations of these results follow the same steps as in the case of the real-time evolution and are provided in Appendix~\ref{app:ITE-proof} for the completeness.

First, we show that the region around the optimal solution of the previous iteration exhibits substantial gradients. More precisely, the following theorem demonstrates the polynomial large variance of the loss function within in a small hypercube around the starting point. This theorem is similar to Theorem~\ref{thm:variance-lower-bound} (as discussed in Section~\ref{sec:variance} of the main text). 
\begin{theorem}[Lower-bound on the loss variance for imaginary time evolution, Informal]\label{thm:ITE-variance-lower-bound} 
Assume \revadd{an} initial state \revadd{$\rho_0$} and let us choose $\vsigma_1$ such that $\Tr[\rho_0 \vsigma_1 \rho_0 \vsigma_1] = 0$. Given that the imaginary time-step scales as
$\delta \tau \leq  \frac{1}{\sqrt{24}\lambda_{\rm max}}$ 
where $\lambda_{\rm max}$ is the largest eigenvalue of $H$ 
and we consider a hypercube of width $2r$ such that 
\begin{align}
 r = \Theta\left(\frac{1}{\sqrt{M}}\right) \;,
\end{align}
the variance at any iteration of the variational compression algorithm is lower bounded as 
\begin{align}
    \Var_{\vtheta \sim \uni(\vtheta^*, r)} \left[ \LC_{\rm ITE} (\vtheta)\right] \in \Omega\left( \frac{1}{M}\right) \;.
\end{align}
Thus, for $M \in \OC(\poly(n))$, then we have
\begin{align}
    \Var_{\vtheta \sim \uni(\vtheta^*, r)} \left[ \LC_{\rm ITE} (\vtheta)\right] \in \Omega\left( \frac{1}{\poly(n)}\right) \;.
\end{align}\end{theorem}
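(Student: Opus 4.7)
The plan is to mirror the proof of Proposition~\ref{prop:variance-lower-bound} and Theorem~\ref{oldsimplifiedR} in Appendix~\ref{app:variance}, substituting the normalized imaginary-time state $|\psi_{\delta\tau}\rangle = e^{-\delta\tau H}U(\vtheta^*)|\psi_0\rangle/\sqrt{Z}$ for the real-time target state and adapting the overlap bound between the pre- and post-evolution states. The key observation is that, although $e^{-\delta\tau H}$ is non-unitary, the normalization produces a legitimate pure state $\rho_{\delta\tau} = |\psi_{\delta\tau}\rangle\langle\psi_{\delta\tau}|$, so $\mf_{\rm ITE}(\vtheta) = 1 - F(U(\vtheta)\rho_0 U^\dagger(\vtheta),\rho_{\delta\tau})$ retains the fidelity-with-a-pure-state form of the real-time loss. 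Since the recursive integration over $\alpha_M,\alpha_{M-1},\ldots,\alpha_1$ in Proposition~\ref{prop:variance-lower-bound} uses only the structure of the ansatz, it goes through verbatim and yields $\Var_{\vtheta\sim\uni(\vtheta^*,r)}[\mf_{\rm ITE}] \geq (c_+ - k_+^2)\min_{\tilde\xi\in[-1,1]}(k_+^{M-1}\Delta_{\vtheta^*} + (1-k_+^{M-1})\tilde\xi)^2$ with $\Delta_{\vtheta^*} = \Tr[(\rho_0 - \vsigma_1\rho_0\vsigma_1)U^\dagger(\vtheta^*)\rho_{\delta\tau}U(\vtheta^*)]$.

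The one substantive new ingredient is an imaginary-time analogue of Lemma~\ref{lem:bound_fidelity} of the form $F(\rho_{(\vtheta^*,0)},\rho_{\delta\tau}) \geq 1 - 12\lambda_{\max}^2\delta\tau^2$; the constant $12$ is precisely what makes the time-step constraint come out as $\delta\tau \leq 1/(\sqrt{24}\lambda_{\max})$. By unitary invariance this reduces to bounding $F(\delta\tau) = |\langle\psi_0|e^{-\delta\tau H'}|\psi_0\rangle|^2/\langle\psi_0|e^{-2\delta\tau H'}|\psi_0\rangle$, where $H' := U^\dagger(\vtheta^*)HU(\vtheta^*)$ has the same spectrum as $H$. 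I would verify $F(0) = 1$, observe that the first derivative at $\delta\tau = 0$ vanishes because the $\langle H'\rangle$ contributions from numerator and denominator cancel, and then apply the Taylor remainder theorem (Theorem~\ref{thm:taylor}), bounding $|F''|$ uniformly on $[0,\delta\tau]$ using $\|H'\|_\infty = \lambda_{\max}$ and the normalization estimate $e^{-2\sigma\lambda_{\max}} \leq \langle\psi_0|e^{-2\sigma H'}|\psi_0\rangle \leq e^{2\sigma\lambda_{\max}}$.

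With this bound in hand, the remainder traces Theorem~\ref{oldsimplifiedR} step-for-step. Using the orthogonality condition $\Tr[\rho_0\vsigma_1\rho_0\vsigma_1] = 0$, complete $\{\rho_{(\vtheta^*,0)},\,U(\vtheta^*)\vsigma_1\rho_0\vsigma_1 U^\dagger(\vtheta^*)\}$ to an orthonormal basis and invoke completeness to obtain $\Delta_{\vtheta^*} \geq 2F - 1 \geq 1 - 24\lambda_{\max}^2\delta\tau^2$, which is nontrivial precisely under the stated $\delta\tau$ constraint. Combining with $k_+^{M-1} \geq 1 - (M-1)r^2/3$ and choosing $r \in \Theta(1/\sqrt{M})$ enforces $k_+^{M-1}\Delta_{\vtheta^*} \geq 1 - k_+^{M-1}$, so that the minimum over $\tilde\xi\in[-1,1]$ is attained at $\tilde\xi = -1$; substituting back and using the leading $r^4$ behaviour of $c_+ - k_+^2$ yields a variance in $\Omega(1/\poly(M))$. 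The main obstacle is the new fidelity lemma: unlike in the real-time case where $e^{-iH\delta t}$ is unitary and Taylor expansion produces a nested commutator with a trivial norm bound, the non-unitarity of $e^{-\delta\tau H}$ forces simultaneous control of the numerator and the $Z$ denominator, and it is this joint estimate that ultimately fixes the precise $\sqrt{24}$ in the allowed time-step range.
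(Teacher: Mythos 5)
Your proposal is correct and follows essentially the same route as the paper: reuse the exact recursion of Proposition~\ref{prop:variance-lower-bound} with the normalized imaginary-time target state, prove an imaginary-time analogue of Lemma~\ref{lem:bound_fidelity} of the form $F \geq 1 - 12\lambda_{\max}^2\delta\tau^2$, and then retrace the orthonormal-basis and $k_+^{M-1}$ steps of Theorem~\ref{oldsimplifiedR}. The only place you diverge is inside the fidelity lemma: the paper differentiates the \emph{normalized} state via $\dot\rho_\tau = -\{\rho_\tau,H\}+2\Tr(H\rho_\tau)\rho_\tau$ and bounds each resulting trace term with H\"{o}lder and $\|\rho_\tau\|_1=1$, which yields the clean uniform bound $|F''|\leq 24\lambda_{\max}^2$; your plan of bounding the scalar numerator and the partition function $Z$ separately via $e^{\pm 2\sigma\lambda_{\max}}$ would pick up extra exponential factors that are $O(1)$ under the stated constraint but would not reproduce the sharp constant $\sqrt{24}$ — harmless for the asymptotic $\Omega(1/M)$ claim, but worth switching to the normalized-state computation if you want the theorem's exact time-step threshold.
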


Next, we can ensure that for a sufficiently small time-step the loss landscape around the optimal solution of the previous iteration is $|\epsilon|$-convex. We refer the readers to Appendix~\ref{app:epsilonconvexity} for the definition of the $|\epsilon|$-convexity. This result is an imaginary time evolution version of Theorem~\ref{thm:convex} discussed in Section~\ref{sec:convexity}. 
\begin{theorem}[Approximate convexity of the landscape for imaginary time evolution, Informal.]\label{thm:ITE-convex-in}
For a time-step of size
\begin{equation}
\delta\tau\in\order{\frac{\mu_{\min}+2|\epsilon|}{M\lambda_{\max}}}
\end{equation}
the loss landscape is $\epsilon$-convex in a hypercube of width $2r_c$ around a previous optimum $\vtheta^*$ i.e., $\vol(\thv^*, r_c)$ such that
\begin{equation}
    r_c\in\Omega\left( \frac{\mu_{\min}+2|\epsilon|}{16M^2}-3\lambda_{\max} \delta\tau \right)
\end{equation}
where $\mu_{\min}$ is the minimal eigenvalue of the Fisher information matrix associated with the loss at $\vtheta^*$.

\end{theorem}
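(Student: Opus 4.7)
The plan is to mirror the proof strategy used for Theorem~\ref{th:pRTE} in the real-time setting, adapting each step to accommodate the non-unitary evolution operator $e^{-\delta\tau H}$ and the data-dependent normalization factor $1/Z$ with $Z = \langle\psi_0|U^\dagger(\vtheta^*) e^{-2\delta\tau H} U(\vtheta^*)|\psi_0\rangle$. I first reduce $\epsilon$-convexity (Definition~\ref{def:epsilon-convex}) to bounding the largest eigenvalue of the Hessian $\nabla^2_{\vtheta} \LC_{\rm ITE}(\vtheta)$ throughout the hypercube $\vol(\vtheta^*, r_c)$, and then control this eigenvalue by invoking Gershgorin's bound (Proposition~\ref{prop:upper-eigen}) on the associated row sums.

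The key technical step is a Taylor expansion of $\LC_{\rm ITE}$ around $\vec{x} = (\vtheta - \vtheta^*, \delta\tau) = \vec{0}$, carried out up to second order with a third-order remainder, exactly as in equation~\eqref{eq:fidel-taylor} for the real-time case. At $\vec{x} = \vec{0}$ we have $Z=1$ and $e^{-\delta\tau H} = \1$, so the leading Hessian in the $\vtheta$-block reduces to $\tfrac{1}{2}\FC(\vec{0})$, the quantum Fisher information at $\vtheta^*$, whose smallest eigenvalue is $\mu_{\min}$. The main work lies in bounding the third derivatives of $\LC_{\rm ITE}$ at an intermediate point $\vec{\nu}$. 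Differentiation with respect to a circuit parameter $\theta_i$ produces nested commutators with the generators $\vsigma_i$, just as in Eq.~\eqref{eq:nested_comm}, and each such commutator can be bounded in operator norm by $2$ since $\|\vsigma_i\|_\infty = 1$. Differentiation with respect to $\delta\tau$ now acts on three $\delta\tau$-dependent objects: the operator $e^{-\delta\tau H}$ appearing twice in $\mf_{\rm ITE}$, and the scalar $1/Z$. Each such derivative brings down a factor of $H$ (bounded by $\lambda_{\max}$) and picks up anti-commutator rather than commutator structure, which is handled by direct Hölder bounds rather than unitary invariance of the Schatten norm.

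The main obstacle is the loss of unitary invariance: the standard identity $\|e^{-iH\delta t} A\|_p = \|A\|_p$ used throughout Appendix~\ref{app:convexity} fails for $e^{-\delta\tau H}$, which instead obeys $\|e^{-\delta\tau H}\|_\infty \leq e^{\lambda_{\max}\delta\tau}$. I would handle this by working in the small-$\delta\tau$ regime where $e^{-\delta\tau H} = \1 - \delta\tau H + O(\delta\tau^2)$ and $1/Z = 1 + O(\delta\tau)$, so each appearance of $e^{-\delta\tau H}$ contributes an $\OC(1)$ operator-norm factor and derivatives of $1/Z$ contribute additional terms proportional to $\lambda_{\max}$. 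Collecting the three sources of $\delta\tau$-dependence (the two copies of $e^{-\delta\tau H}$ and the normalization $1/Z$) together accounts for the factor of $3$ in $3\lambda_{\max}\delta\tau$, in contrast to the single $\lambda_{\max}\delta t$ appearing in Theorem~\ref{th:pRTE}. A careful bookkeeping of these bounded contributions yields an estimate on each row sum of $\nabla^2_{\vtheta}\LC_{\rm ITE}$ analogous to the quantity $48M(\lambda_{\max}\delta t + Mr)$ derived in the real-time proof, with the $\lambda_{\max}\delta\tau$ term tripled.

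Finally, I apply Proposition~\ref{prop:upper-eigen} to convert the row-sum bound into a bound on the largest eigenvalue of the Hessian, impose the $\epsilon$-convexity condition
\begin{equation}
    -\tfrac{1}{2}\mu_{\min} + \OC\!\left( M\bigl(3\lambda_{\max}\delta\tau + Mr\bigr)\right) \leq |\epsilon|\,,
\end{equation}
and solve for $r$ to obtain the claimed lower bound on $r_c$. The time-step constraint $\delta\tau \in \OC\left((\mu_{\min}+2|\epsilon|)/(M\lambda_{\max})\right)$ is precisely the condition ensuring the right-hand side of the solved inequality is positive, mirroring the role of the analogous condition in Theorem~\ref{th:pRTE}. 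Convergence of the Taylor remainder and the $\delta\tau$-expansion of $1/Z$ is uncritical in this regime, since we only need the remainder terms to be $\OC(\lambda_{\max}\delta\tau)$ rather than exponentially small.
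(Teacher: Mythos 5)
Your proposal follows essentially the same route as the paper's proof of Theorem~\ref{thm:ITE-convex}: Taylor-expand the fidelity in $\vec{x}=(\vtheta-\vtheta^*,\delta\tau)$ so the quadratic term is $\tfrac12\FC(\vec 0)$, bound the third-order remainder (nested commutators for parameter derivatives, anti-commutator plus normalization terms for the $\delta\tau$ derivative, yielding the tripled $3\lambda_{\max}\delta\tau$), and convert the row-sum bound into an eigenvalue bound via Proposition~\ref{prop:upper-eigen} before solving the $\epsilon$-convexity inequality for $r_c$. The only cosmetic difference is that you control $e^{-\delta\tau H}$ by a small-$\delta\tau$ expansion whereas the paper bounds $\|e^{-\delta\tau H}\|_\infty$ directly, which does not change the argument.
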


A similar result on the adiabatic minimum in Theorem~\ref{thm:adiabaticminimum} can also be analytically obtained. Assuming the adiabatic minimum exists within our time interval of interest, we present below the scaling of imaginary time-step such that the adiabatic minimum is in the region with substantial gradients and in the convex region. We refer the readers to Section~\ref{sec:adiabaticminimum} and Appendix~\ref{app:moving-min} for the refresher of the adiabatic minimum.
\begin{theorem}[Adiabatic minimum is within provably `nice' region for imaginary time evolution, Informal]\label{thm:ITE-adiabaticminimum}
If the imaginary time-step $\delta \tau$ is chosen such that
\begin{align}
    \delta \tau \in \OC\left(\frac{\beta_A}{M \lambda_{\rm max}}\right) \;,
\end{align}
then the adiabatic minimum $\thv_A(\delta \tau)$ is guaranteed to be within the non-vanishing gradient region (as per Theorem~\ref{thm:ITE-variance-lower-bound}), and additionally, if $\delta \tau$ is chosen such that
\begin{equation}
\delta\tau \in \OC\left(  \frac{\beta_A (\mu_{\min} + 2|\epsilon|)}{M^{5/2}\lambda_{\max}} \right)
\end{equation}
then the adiabatic minimum $\thv_A(\delta \tau)$ is guaranteed to be within the $\epsilon$-convex region (as per Theorem~\ref{thm:ITE-convex}) where 
\begin{equation}
    \beta_A := \frac{\Dot{\vtheta}_A^T(\delta  \tau) \left( \nabla^2_{\vtheta} \LC_{\rm ITE}(\vtheta) \big|_{\vtheta = \vtheta_A(\delta  \tau)}\right) \Dot{\vtheta}_A(\delta \tau) }{\| \Dot{\thv}_{A}(\delta \tau) \|_2^2} 
\end{equation}
corresponds to the second derivative of the loss in the direction in which the adiabatic minimum moves.
\end{theorem}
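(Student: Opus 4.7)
The plan is to mirror the blueprint of Theorem~\ref{th:min-in-convexregion} from the real-time case (proved in Appendix~\ref{app:moving-min}). The argument there is modular: first establish a bound on the shift $\|\valpha_A(\delta t)\|_2$ of the adiabatic minimum in terms of the time-step, then simply require that this shift fit inside the region guaranteed by the relevant earlier theorem, and solve for $\delta t$. For the imaginary-time version I would do exactly the same: first prove an analog of Proposition~\ref{prop:adiabatic-min} yielding a bound of the form
\begin{equation}
\|\valpha_A(\delta\tau)\|_2 \leq \frac{c\sqrt{M}\lambda_{\max}\delta\tau}{\beta_A}
\end{equation}
for some $\OC(1)$ constant $c$, and then combine this with Theorem~\ref{thm:ITE-variance-lower-bound} (requiring $\|\valpha_A\|_\infty\in \OC(1/\sqrt{M})$) and Theorem~\ref{thm:ITE-convex-in} (requiring $\|\valpha_A\|_\infty\in \OC((\mu_{\min}+2|\epsilon|)/M^2)$) to obtain the two claimed scalings.

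The core technical step is the shift bound. I would differentiate the stationarity condition $\nabla_\valpha \LC_{\rm ITE}(\valpha,\delta\tau)|_{\valpha=\valpha_A(\delta\tau)}=\vec 0$ implicitly in $\delta\tau$, giving
\begin{equation}
(\nabla_{\valpha_A}^2 \LC_{\rm ITE})\,\dot\valpha_A + \partial_\tau \nabla_{\valpha_A}\LC_{\rm ITE}=\vec 0 .
\end{equation}
Contracting against $\dot\valpha_A^T/\|\dot\valpha_A\|_2$, identifying $\beta_A$ as in Theorem~\ref{thm:ITE-adiabaticminimum}, and applying Cauchy--Schwarz yields $\|\dot\valpha_A\|_2 \leq \|\partial_\tau\nabla_{\valpha_A}\LC_{\rm ITE}\|_2/\beta_A \leq \sqrt{M}\,|\partial_\tau\partial_{\alpha^{(i)}}\LC_{\rm ITE}|_{\max}/\beta_A$. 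Integrating over $[0,\delta\tau]$ and converting $\ell_2$ into $\ell_\infty$ via $\|\cdot\|_\infty\leq\|\cdot\|_2$ then gives exactly the bound above, provided the mixed derivative is controlled by an $\OC(\lambda_{\max})$ quantity.

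The main obstacle, and the one genuine departure from the real-time proof, is bounding $|\partial_\tau\partial_{\alpha^{(i)}}\LC_{\rm ITE}|$ when the ``evolution'' operator $e^{-\delta\tau H}$ is non-unitary and the loss carries a normalization factor $Z(\delta\tau)=\bra{\psi_0} e^{-2\delta\tau H}\ket{\psi_0}$ that itself depends on $\delta\tau$. The parameter-shift rule in $\alpha^{(i)}$ still yields a commutator with the gate generator inside a trace, but differentiating in $\tau$ now produces three contributions: two from the non-unitary exponentials on either side of $\rho_{\thv^*}$, and one from $d\ln Z/d\tau = -2\Tr[H\rho_{\thv^*,\tau}]$. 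Recombining these, I expect the result to collapse into an expression of the form $\Tr[\,\rho_\valpha\,(H-\langle H\rangle\1)\,\tilde\rho\,]$, whose absolute value is bounded by $2\lambda_{\max}$ via H\"older's inequality and the elementary fact that $\|H-\langle H\rangle\1\|_\infty\leq 2\lambda_{\max}$. The small-$\delta\tau$ regime $\delta\tau\leq 1/(\sqrt{24}\lambda_{\max})$ inherited from Theorem~\ref{thm:ITE-variance-lower-bound} keeps $Z$ bounded away from zero, so the $1/Z$ factor contributes only a harmless $\OC(1)$ prefactor.

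Once this mixed-derivative bound is in hand, the remaining steps are essentially a relabeling of those in Theorem~\ref{th:min-in-convexregion}: substitute into the shift bound, impose $\|\valpha_A(\delta\tau)\|_\infty\leq r$ and $\|\valpha_A(\delta\tau)\|_\infty\leq r_c$ respectively, and solve each inequality for $\delta\tau$. The scalings $\OC(\beta_A/(M\lambda_{\max}))$ and $\OC(\beta_A(\mu_{\min}+2|\epsilon|)/(M^{5/2}\lambda_{\max}))$ should fall out immediately, with the only thing to double-check being that none of the $\OC(1)$ constants introduced by the normalization and the non-unitarity of $e^{-\delta\tau H}$ disturb the asymptotic dependence on $M$, $\lambda_{\max}$, $\mu_{\min}$, or $\beta_A$.
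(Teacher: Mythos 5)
Your proposal follows essentially the same route as the paper's proof: the paper first proves Proposition~\ref{prop:adiabatic-min-im} by implicitly differentiating the stationarity condition, contracting with $\Dot{\valpha}_A^T/\|\Dot{\valpha}_A\|_2$, applying Cauchy--Schwarz, and bounding the mixed derivative via the parameter-shift rule together with the normalized imaginary-time evolution equation $\frac{d\rho_\tau}{d\tau} = -\{\rho_\tau,H\} + 2\Tr(H\rho_\tau)\rho_\tau$ (obtaining $4\lambda_{\rm max}$ where you anticipate $2\lambda_{\rm max}$ --- an immaterial constant), and then combines the resulting shift bound with Theorems~\ref{thm:ITE-variance-lower-bound} and~\ref{thm:ITE-convex} exactly as you describe. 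The proposal is correct.
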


Finally, it is crucial to note that the discussion about the minimum jumps and the potential existence of the fertile valley in Section~\ref{sec:minimum-jump} is also applicable to imaginary time evolution.

\subsection{Proof of analytical results}\label{app:ITE-proof}
In this section, we analytically derive the analytical results presented in the previous sub-section. Again, these derivations are identical to the ones presented in Appendix~\ref{app:variance}, Appendix~\ref{app:convexity} and Appendix~\ref{app:moving-min}. We present them again here for completeness and the readers are also encouraged to look at those relevant appendices. 

\subsubsection{Bound on the variance of the landscape: Proof of Theorem~\ref{thm:ITE-variance-lower-bound}}

First, we introduce the equivalent version of Lemma \ref{lem:bound_fidelity} for imaginary time evolution. 

\begin{lemma}\label{lem:bound_fidelity_imaginary}
The fidelity between two pure states $\rho$ and $ \rho_\tau = \frac{1}{Z}e^{- H \tau  }\rho e^{-H  \tau }$ with $Z = \Tr \left( e^{- H \tau  }\rho e^{-H  \tau } \right)$ can be upper bounded as
\begin{equation}
    F \left[\rho, \rho_\tau\right] \geq  1 - 12 \lambda_{\rm max}^2 \tau^2 
\end{equation}

where $\lambda_{\rm max}$ is the largest eigenvalue of $H$.
\end{lemma}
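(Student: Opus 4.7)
The plan is to mirror the Taylor-expansion strategy of Lemma~\ref{lem:bound_fidelity} but to carefully handle the $\tau$-dependent normalization $Z(\tau) = \langle\psi|e^{-2H\tau}|\psi\rangle$ arising from the fact that $e^{-H\tau}|\psi\rangle$ is not a unit vector. Writing $\rho = |\psi\rangle\langle\psi|$, one has
\begin{equation*}
F(\tau) := F\bigl[\rho,\rho_\tau\bigr] = \frac{A(\tau)^2}{B(\tau)}, \qquad A(\tau) := \langle\psi|e^{-H\tau}|\psi\rangle, \qquad B(\tau):=Z(\tau),
\end{equation*}
so the statement reduces to lower-bounding a scalar function of a single real variable.

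Applying Theorem~\ref{thm:taylor} to $F$ around $\tau=0$ up to second order gives $F(\tau) = F(0) + \tau F'(0) + \tfrac{\tau^2}{2}F''(c)$ for some $c\in[0,\tau]$. A direct computation yields $A(0)=B(0)=1$, $A'(0)=-\langle H\rangle_\psi$, $B'(0)=-2\langle H\rangle_\psi$, whence $F(0)=1$ and, by the quotient rule, $F'(0)=2A'(0)-B'(0)=0$. The vanishing first derivative is the imaginary-time analogue of the Hermiticity cancellation in the real-time case: the normalization exactly cancels the leading drift of the un-normalized overlap.

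The bulk of the work is then a uniform bound $|F''(\tau)|\leq 24\lambda_{\max}^2$ on the relevant range of $\tau$. I would expand $F''$ via the quotient rule into a finite sum of monomials in $A,A',A'',B,B',B''$ divided by powers of $B$. Each derivative admits a clean H\"older-type bound $|A^{(k)}(\tau)|\leq \lambda_{\max}^k\|e^{-H\tau}\|_\infty$ (and analogously, with an extra $2^k$, for $B$), while $\|e^{-H\tau}\|_\infty$ stays close to $1$ whenever $\lambda_{\max}\tau\lesssim 1$, which is precisely the regime of interest. A convenient preliminary simplification is the shift $H\mapsto H-\langle H\rangle_\psi \1$, which leaves $F(\tau)$ invariant but sets $A'(0)=B'(0)=0$, killing several cross-terms and making the leading behaviour $F(\tau) = 1 - \tau^2\,\mathrm{Var}_\psi[H] + O(\tau^3)$ manifest. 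Collecting the bounds and using $\mathrm{Var}_\psi[H]\leq \lambda_{\max}^2$ for the leading piece then yields the claimed constant.

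The main obstacle is bookkeeping for the numerical constant: the quotient rule produces several cross-terms, and a naive triangle-inequality bound on each can easily blow the constant well past $24$. The shift trick above keeps things under control by eliminating the terms linear in $\langle H\rangle_\psi$ at second order, and in any case the time range $\tau\leq 1/(\sqrt{24}\lambda_{\max})$ used subsequently in Theorem~\ref{thm:ITE-variance-lower-bound} is precisely the regime in which all the $\|e^{-H\tau}\|_\infty$-type factors remain $O(1)$, making the argument self-consistent.
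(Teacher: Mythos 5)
Your overall strategy is the same as the paper's: expand $F(\tau)$ to second order around $\tau=0$ with a Lagrange remainder, check $F(0)=1$ and $F'(0)=0$ (the normalization cancelling the drift, exactly as you say), and then uniformly bound $|F''|$ by $24\lambda_{\max}^2$. The identification $F=A^2/B$ and the expansion $F(\tau)=1-\tau^2\,\mathrm{Var}_\psi[H]+O(\tau^3)$ are both correct. The gap is in how you propose to bound the remainder. Bounding $|A^{(k)}(\tau)|\leq\lambda_{\max}^k\|e^{-H\tau}\|_\infty$ and $B(\tau)^{-1}\leq \|e^{-H\tau}\|_\infty^{2}$ \emph{separately} and then assembling $F''$ by the quotient rule accumulates exponential prefactors: the four quotient-rule groups carry $B^{-1},B^{-2},B^{-2},B^{-3}$, so the worst term is of order $\lambda_{\max}^2 e^{12\lambda_{\max}c}$. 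At $c=0$ the naive count happens to give exactly $24\lambda_{\max}^2$, but the remainder is evaluated at an interior point $c\in(0,\tau]$, and on the only range where the lemma is nontrivial ($\lambda_{\max}\tau<1/\sqrt{12}$) these exponentials reach $\approx 30$, not $1$; the constant does not come out. The shift $H\mapsto H-\langle H\rangle_\psi\1$ does not rescue this: it kills first derivatives only at $\tau=0$, not at $c$, and since $|\langle H\rangle_\psi|$ can be as large as $\lambda_{\max}$ it can double the spectral norm of the generator, degrading every $\lambda_{\max}^2$ by up to a factor of $4$.

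The repair --- and what the paper's proof does implicitly --- is to never bound numerator and denominator separately. Writing $|\phi_\tau\rangle=e^{-H\tau}|\psi\rangle/\sqrt{B(\tau)}$, every ratio in $F''$ is an expectation in the \emph{normalized} state: $|B'/(2B)|=|\langle\phi_\tau|H|\phi_\tau\rangle|\leq\lambda_{\max}$ and $|A^{(k)}|/\sqrt{B}\leq\lambda_{\max}^{k}$ by Cauchy--Schwarz, with the exponential factors cancelling exactly. The paper organizes this by deriving the nonlinear equation of motion $\dot\rho_\tau=-\{H,\rho_\tau\}+2\Tr(H\rho_\tau)\rho_\tau$ for the normalized state and differentiating $F=\Tr(\rho\rho_\tau)$ twice, so that $F''$ is a sum of five traces against normalized density matrices, each bounded by H\"older with no prefactors; the term-by-term count $2+2+8+4+8$ then yields $24\lambda_{\max}^2$ uniformly in $\tau$. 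With that substitution your plan goes through; as written, it proves the lemma only with a worse constant.
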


\begin{proof}
First, the derivative of the loss function with respect to time can be written as 
\begin{equation}\label{eq:im_time_ev}
    \frac{d \rho_\tau}{dt} = - \{\rho_\tau, H\} + 2\Tr(H\rho_\tau)\rho_\tau \;,
\end{equation}
where $\{\cdot, \cdot\}$ is an anti-commutator. 

Now we can use a Taylor's expansion around $\tau = 0$, and then the fidelity is of the form
\begin{equation}
    F \left[\rho, \rho_\tau\right] = 1 + \frac{\tau^2}{2}\left( \frac{d^2 F \left[\rho, \rho_\tau\right]}{d\tau^2} \right)\bigg|_{\tau = \tau'}
    \label{eq:derivativeIm}
\end{equation}
where the zero order term is 1, the first order term is zero by a direct computation and the second order term is evaluated at $\tau'\in[0,\tau]$ by Taylor's remainder (see Theorem~\ref{thm:taylor}). Thus we can bound the second derivative as follows. 

\begin{align}
    \left( \frac{d^2 F \left[\rho, \rho_\tau\right]}{d\tau^2} \right)\bigg|_{\tau = \tau'}  =\;& 2 \text{Re}\left[\Tr \left( \rho\rho_\tau H^2 \right)\right] + 2\Tr(\rho H \rho_\tau H) - 8 \Tr(H\rho_\tau)\text{Re}[\Tr(\rho\rho_\tau H)]- 4 \Tr \left(H^2\rho_\tau\right)\Tr(\rho\rho_\tau) \\
    &+ 8 \Tr(\rho_\tau H)^2\Tr(\rho_\tau\rho)\nonumber\\
    \leq \;& 24\lambda_{\max}^2 \label{eq:ite-proof1}
\end{align}
where in the inequality is due to (i) Hölder's inequality, (ii) $\| \rho \|_1 = 1$ for any pure quantum state $\rho$, (iii) $\| A B \|_{p} \leq \|A\|_p \|B\|_p$. More precisely, we used the following bounds to reach Eq.~\eqref{eq:ite-proof1}
\begin{align}
    &\Tr \left( \rho\rho_\tau H^2\right)\leq ||\rho||_1 ||\rho_\tau H^2||_\infty\leq \lambda_{\max}^2\\
    &\Tr(\rho H \rho_\tau H)\leq  ||\rho||_1 ||\rho_\tau H^2||_\infty\leq\lambda_{\max}^2\\
    &\Tr(\rho\rho_\tau H)\leq  ||\rho||_1 ||\rho_\tau H||_\infty\leq \lambda_{\max}\\
    &\Tr(H\rho_\tau)\leq  \lambda_{\max}\\
    &\Tr \left(H^2\rho_\tau\right)\leq  \lambda_{\max}^2\\
    &\Tr(\rho_\tau\rho)\leq  1
\end{align}

Lastly, we can just substitute Eq.~\eqref{eq:ite-proof1} back to \eqq{derivativeIm} and obtain
\begin{equation}
    F \left[\rho, \rho_\tau\right] \geq 1 - 12\tau^2\lambda_{\max}^2 \;,
\end{equation}
which completes the proof.
\end{proof}

\medskip

We now present the formal version of Theorem~\ref{thm:ITE-variance-lower-bound} and provide a detailed proof.
\setcounter{theorem}{3}
\begin{theorem}[Lower-bound on the loss variance for imaginary time evolution, Formal]\label{simplifiedI}
Assume \revadd{an} initial state  and let us choose $\vsigma_1$ such that $\Tr[\rho_0 \vsigma_1 \rho_0\vsigma_1] = 0$. Given that the time-step $\delta \tau$ respects 
\begin{align}
    \frac{1}{\sqrt{24} \lambda_{\rm max}} \geq \delta \tau \;,
\end{align}
where $\lambda_{\rm max}$ is the largest eigenvalue of $H$ as well as the perturbation $r$ follows
\begin{equation}\label{eq:perturb-conditionIm}
        r^2\leq \frac{3r_0^2 (1-24\lambda^2_{\max} \delta\tau^2)}{2(M-1)(1-12\lambda^2_{\max} \delta\tau^2)} \ , 
\end{equation}
with some $r_0$ such that $0< r_0 <1$, then the variance of the loss function within the region $\vol(\thv^*, r)$ is lower bounded as
\begin{align}
    \Var_{\vtheta \sim \uni(\vtheta^*, r)} \left[ \LC_{\rm ITE} (\vtheta)\right] \geq \frac{4r^2}{45}\left[(1-r_0)(1-24\lambda_{\max}^2 \delta\tau^2) \right]^2 \;.
\end{align}
In addition, by choosing $r$ such that $r\in\Theta\left(\frac{1}{\sqrt{M}}\right)$, then we have
\begin{align}
     \Var_{\vtheta \sim \uni(\vtheta^*, r)} \left[ \LC_{\rm ITE} (\vtheta)\right] \in \Omega\left( \frac{1}{M}\right) \;.
\end{align}
\end{theorem}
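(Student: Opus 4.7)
The plan is to mirror the proof of Theorem~\ref{oldsimplifiedR} verbatim, substituting only the physics-specific ingredients for the imaginary-time setting. The crucial observation is that although $e^{-H\delta\tau}$ is non-unitary, after dividing by $\sqrt{Z}$ the target $\rho_{(\vtheta^*,\delta\tau)} = e^{-H\delta\tau}U(\vtheta^*)\rho_0 U^\dagger(\vtheta^*)e^{-H\delta\tau}/Z$ is a valid pure normalized state. Hence $\mf_{\rm ITE}(\vtheta)$ still has the fidelity form $1 - \Tr[U(\vtheta)\rho_0 U^\dagger(\vtheta)\rho_{(\vtheta^*,\delta\tau)}]$, and the recursive single-parameter integration used in Proposition~\ref{prop:variance-lower-bound} goes through unchanged. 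This yields the same structural lower bound
\begin{equation}
\Var_{\vtheta\sim\uni(\vtheta^*,r)}[\mf_{\rm ITE}(\vtheta)] \geq (c_+ - k_+^2)\min_{\tilde\xi \in [-1,1]}\left(k_+^{M-1}\Delta_{\vtheta^*} + (1 - k_+^{M-1})\tilde\xi\right)^2,
\end{equation}
with the same $c_+, k_+$ and with $\Delta_{\vtheta^*} = \Tr[(\rho_0 - \vsigma_1 \rho_0 \vsigma_1)U^\dagger(\vtheta^*)\rho_{(\vtheta^*,\delta\tau)}U(\vtheta^*)]$.

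Next I would lower-bound $\Delta_{\vtheta^*}$ exactly as in the real-time proof: build the orthonormal basis with $\ketbra{\phi_1}{\phi_1} = \rho_{(\vtheta^*,0)}$ and $\ketbra{\phi_2}{\phi_2} = U(\vtheta^*)\vsigma_1 \rho_0 \vsigma_1 U^\dagger(\vtheta^*)$ (these are orthogonal thanks to the assumption $\Tr[\rho_0\vsigma_1\rho_0\vsigma_1]=0$), so that $\Delta_{\vtheta^*} \geq 2F(\rho_{(\vtheta^*,0)},\rho_{(\vtheta^*,\delta\tau)}) - 1$. Here is where the imaginary-time case diverges from the real-time one: instead of Lemma~\ref{lem:bound_fidelity}, I plug in Lemma~\ref{lem:bound_fidelity_imaginary} to obtain $\Delta_{\vtheta^*} \geq 1 - 24\lambda_{\max}^2\delta\tau^2$. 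Positivity of this bound is exactly the stated constraint $\delta\tau \leq 1/(\sqrt{24}\,\lambda_{\max})$.

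With $\Delta_{\vtheta^*}$ in hand, the remainder is algebraic. To force the minimizing $\tilde\xi$ to be $-1$, I impose $k_+^{M-1}\Delta_{\vtheta^*} \geq 1 - k_+^{M-1}$, use $k_+^{M-1} \geq 1 - (M-1)r^2/3$ (from the Taylor-plus-Bernoulli chain in Eq.~\eqref{eq:proof-bound-kplus}) and a loosening factor $1/r_0^2$ with $0<r_0<1$, and solve for $r^2$: this reproduces exactly the perturbation window \eqref{eq:perturb-conditionIm}. Substituting back gives
\begin{equation}
\Var \geq (c_+ - k_+^2)\bigl[(1-r_0)(1 - 24\lambda_{\max}^2\delta\tau^2)\bigr]^2,
\end{equation}
and expanding $c_+ - k_+^2$ to leading order in $r$ produces the stated bound. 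Choosing $r \in \Theta(1/\sqrt{M})$ saturates the constraint and delivers the $\Omega(1/\poly(M))$ scaling.

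The only genuinely new work compared to the real-time case lives inside Lemma~\ref{lem:bound_fidelity_imaginary}: because $e^{-H\tau}$ does not preserve the norm, differentiating $\rho_\tau = e^{-H\tau}\rho e^{-H\tau}/Z$ produces not only the anti-commutator $\{\rho_\tau, H\}$ but also the back-reaction $2\Tr(H\rho_\tau)\rho_\tau$ from normalization (cf.~Eq.~\eqref{eq:im_time_ev}). Consequently the second-derivative bound on $F(\rho,\rho_\tau)$ picks up six additional contributions, each bounded via H\"older and submultiplicativity by a power of $\lambda_{\max}$ times a pure-state trace norm; this yields the factor $24$ rather than $4$, which is the only reason our threshold on $\delta\tau$ is tighter by $\sqrt{6}$ than in Theorem~\ref{oldsimplifiedR}. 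Beyond this constant, no new techniques are required.
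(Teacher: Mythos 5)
Your proposal is correct and follows essentially the same route as the paper's proof: invoke Proposition~\ref{prop:variance-lower-bound} unchanged (valid since the normalized imaginary-time-evolved state is still pure), lower-bound $\Delta_{\vtheta^*}$ via the orthonormal-basis construction and Lemma~\ref{lem:bound_fidelity_imaginary} to get $1-24\lambda_{\max}^2\delta\tau^2$, then force $\tilde\xi=-1$ via the same $k_+^{M-1}$ Bernoulli chain and conclude. You also correctly isolate the only genuinely new ingredient, namely the normalization back-reaction term $2\Tr(H\rho_\tau)\rho_\tau$ in the derivative of $\rho_\tau$, which is what inflates the constant from $4$ to $24$.
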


\begin{proof}
First, we note that Proposition~\ref{thm:variance-lower-bound} in Appendix \ref{app:variance} also applies for the imaginary time evolution. This is since all the proof steps in Proposition~\ref{thm:variance-lower-bound} hold when replacing $i\delta t \rightarrow \delta \tau$ and $\LC(\thv) \rightarrow \LC_{\rm ITE}(\thv)$. Hence, this proof starts by recalling Proposition~\ref{thm:variance-lower-bound}
\begin{equation}\label{eq:lowerbound_pre_imaginary}
    \Var_{\thv\sim\uni(\thv^*, r) }[\mf_{\rm ITE}(\thv)]\geq  \ (c_+ - k_+^2) \min_{\Tilde{\xi}\in [-1,1]} \left( k_+^{M-1} \Delta_{\thv^*}  + (1-k_+^{M-1})\Tilde{\xi} \right)^2 ,
\end{equation}
where the quantities in the bound above are
\begin{align}
    &c_+     :=  \mathbb{E}_{\alpha \sim \uni(0,r)} [ \cos^4{\alpha}]\;, \\
    &k_+   := \mathbb{E}_{\alpha \sim \uni(0,r)} [ \cos^2{\alpha}] \; ,                                                               \\
    &\Delta_{\thv^*} := \Tr[(\rho_0 -\vsigma_1 \rho_0 \vsigma_1) U^{\dagger}\left(\vec{\theta^*}\right)\rho_{(\vtheta^*,\delta \tau)} U\left(\vec{\theta^*}\right)]  \; ,\\
    &\rho_{(\vtheta^*,\delta\tau)} = \frac{1}{Z}e^{- H \delta\tau  }U\left(\vtheta^*\right)\rho_0U^\dagger\left(\vtheta^*\right) e^{-H  \delta\tau }\;,\\
    &Z = \Tr\left[ e^{- H \delta\tau  }U\left(\vtheta^*\right)\rho_0U^\dagger\left(\vtheta^*\right) e^{-H  \delta\tau } \right]\;.
\end{align}
with $\vsigma_1$ being the first (non commuting) gate of the ansatz and $\rho_0$ is the initial state. 

\medskip

Importantly, we notice that if the perturbation $r$ is chosen such that the following condition is satisfied
\begin{equation}
    k_+^{M-1} \Delta_{\thv^*}  \geq(1-k_+^{M-1})\label{eq:necessary_conditionIm}
\end{equation}
then $\hat{\xi}=-1$ minimises the variance lower bound in Eq.~\eqref{eq:lowerbound_pre_imaginary} which leads to 
\begin{equation}\label{eq:lowerbound_pre_imaginary2}
    \Var_{\thv\sim\uni(\thv^*, r) }[\mf(\thv)]\geq  \ (c_+ - k_+^2) \left( k_+^{M-1} (\Delta_{\thv^*}+1)-1 \right)^2 ,
\end{equation}

Now, we focus now on $\Delta_{\thv^*}$ which can be expressed as
\begin{align}\label{eq:ite-delta-big}
    \Delta_{\thv^*} :=& \Tr[(\rho_0 -\vsigma_1 \rho_0 \vsigma_1) U^{\dagger}\left(\vec{\theta^*}\right)\rho_{(\vtheta^*,\delta \tau)} U\left(\vec{\theta^*}\right)] \\
    =& F(\rho_{(\vtheta^*,0)},\rho_{(\vtheta^*,\delta\tau)}) - \Tr\left[\vsigma_1 \rho_0 \vsigma_1U^{\dagger}\left(\vec{\theta^*}\right)\rho_{(\vtheta^*,\delta \tau)} U\left(\vec{\theta^*}\right)\right]
\end{align}
where we have the fidelity $F(\rho_{(\vtheta^*,0)},\rho_{(\vtheta^*,\delta\tau)}) = \Tr(\rho_{(\vtheta^*,0)}\rho_{(\vtheta^*,\delta\tau)})$. \revadd{Assume we can choose $\vsigma_1$} such that $\Tr[\rho_0 \vsigma_1 \rho_0 \vsigma_1] = 0$ which also implies
\begin{align}
    \Tr\left[ \rho_{(\thv^*, 0)} U(\thv^*)\vsigma_1 \rho_0 \vsigma_1 U^\dagger(\thv^*) \right]   = \Tr\left[\rho_0 \vsigma_1 \rho_0 \vsigma_1 \right] = 0 \;.
\end{align}
Then, we define an orthonormal $\left\{\phi_i := \ketbra{\phi_i}\right\}_{i=1}^{2^n}$ basis as 
\begin{align}
    \phi_1 &= \rho_{(\vtheta^*,0)}\ , \\
    \phi_2 &= U\left(\vtheta^*\right)\vsigma_1\rho_0\vsigma_1U^\dagger\left(\vtheta^*\right) \;,
\end{align}
and other $\{ \phi_i \}$ are necessary orthornormal states to complete the basis. We can upper bound the second term on the right hand side of Eq.~\eqref{eq:ite-delta-big} as
\begin{align}
    \Tr\left[\vsigma_1 \rho_0 \vsigma_1U^{\dagger}\left(\vec{\theta^*}\right)\rho_{(\vtheta^*,\delta \tau)} U\left(\vec{\theta^*}\right)\right] =& \Tr\left(\phi_2\rho_{(\vtheta^*,\delta \tau)} \right)\\
    \leq& \sum_{i=2}^{2^n}\Tr\left(\phi_i\rho_{(\vtheta^*,\delta \tau)} \right)\\
    =& \Tr\left[(\1-\phi_1)\rho_{(\vtheta^*,\delta \tau)} \right]\label{eq:1minusid}\\
    =& F(\rho_{(\vtheta^*,0)},\rho_{(\vtheta^*,\delta\tau)}) -1
\end{align}
where in the inequality we add terms corresponding to other basis which holds because the trace of positive matrices is positive and in \eqq{1minusid} we use the fact that $\sum_i\phi_i = \1$. 

With this we can lower-bound $\Delta_{\vtheta^*}$ as follows
\begin{align}
    \Delta_{\vtheta^*}&\geq 2F(\rho_{(\vtheta^*,0)},\rho_{(\vtheta^*,\delta\tau)}) -1 \\ 
    & \geq 1-24 \lambda_{\max}^2\delta\tau^2 \label{eq:ite-delta-bound}
\end{align}
where we have used Lemma \ref{lem:bound_fidelity_imaginary} in the last inequality. We remark that the bound in Eq.~\eqref{eq:ite-delta-bound} can be equivalently expressed as
\begin{align}\label{eq:ite-delta-bound2}
    \frac{1}{2-24 \lambda_{\max}^2\delta\tau^2} \geq \frac{1}{1+\Delta_{\vtheta^*}} \;.
\end{align}
Importantly, for this bound to be informative (i.e., non-negative), we require the constrain on $\delta \tau$ as
\begin{align}
    \frac{1}{\sqrt{24} \lambda_{\rm max}} \geq \delta \tau \;.
\end{align}

\medskip

In this next step, we show how the condition in \eqq{necessary_conditionIm} can be fulfilled. We note that the condition can be equivalently expressed as 
\begin{equation}\label{eq:ite-key-con}
    k_{+}^{M-1}\geq \frac{1}{1+\Delta_{\vtheta^*}}
\end{equation}

We consider the bound of $ k_+^{M-1}$ which follows as
\begin{align}
    k_+^{M-1} & = \left(\frac{1}{2r} \int_{-r}^{r} d\alpha \cos^2(\alpha) \right)^{M-1} \\
    & = \left( \frac{1}{2} + \frac{\sin{(2r)}}{4r}\right)^{M-1} \\
    & \geq  \left( 1 - \frac{r^2}{3}\right)^{M-1} \\
    & \geq 1 - \frac{(M-1)r^2}{3}  \label{eq:ite-proof-bound-kplus} \\
    & > 1 - \frac{(M-1)r^2}{3r_0^2} \;, \label{eq:ite-proof-coro1-0}
\end{align}
where the first inequality is by directly expanding the base and keeping only the second order term, the second inequality is due to Bernoulli's inequality and finally the last inequality holds because $0 <r_0 < 1$. 

With this we can see that the condition in Eq.~\eqref{eq:ite-key-con} holds if the right hand side of Eq.~\eqref{eq:ite-proof-coro1-0} is larger then the left hand side of Eq.~\eqref{eq:ite-delta-bound2} i.e.,
\begin{equation}
    1-\frac{(M-1)r^2}{3r_0^2}>\frac{1}{2-24 \lambda_{\max}^2\delta\tau^2} \;\;\Rightarrow\;\; k_{+}^{M-1}\geq \frac{1}{1+\Delta_{\vtheta^*}} \;.
    \label{eq:boundrim}
\end{equation}

By rearranging \eqq{boundrim}, we find the bound on the perturbation $r$ as \begin{equation}\label{eq:ite-r-cond}
 r^2\leq \frac{3r_0^2 (1-24\lambda^2_{\max} \delta\tau^2)}{2(M-1)(1-12\lambda^2_{\max} \delta\tau^2)} 
\end{equation}. 

By enforcing the condition of the perturbation in Eq.~\eqref{eq:ite-r-cond}, the variance can be lower-bounded further from Eq.~\eqref{eq:lowerbound_pre_imaginary2} as
\begin{align}
    \Var_{\thv\sim\uni(\thv^*, r) }[\mf(\thv)]
    \geq & \ (c_+ - k_+^2) \left[ k_+^{M-1} (\Delta_{\thv^*}+1)-1 \right]^2 \\
    \geq& (c_+ - k_+^2)  \left[\left( 1 - \frac{(M-1)r^2}{3}\right)(2 - 24\lambda_{\max}^2\delta\tau^2) - 1\right]^2 \\
    \geq & \ (c_+ - k_+^2) \left[(1-r_0^2)(1-24\lambda_{\max}^2\delta\tau^2)\right]^2\\
    \geq& \frac{4r^4}{45}\left[(1-r_0^2)(1-24\lambda_{\max}^2\delta\tau^2) \right]^2 \;,
\end{align}
where the second inequality is due to  Eq.~\eqref{eq:ite-delta-bound} and Eq.~\eqref{eq:ite-proof-bound-kplus}, the third inequality is by the condition on $r$ in Eq.~\eqref{eq:ite-r-cond}. To reach the last inequality, we directly bound $c_+ - k_+^2  = \frac{1}{2r} \int_{-r}^{r} d\alpha \cos^4(\alpha) - \left(\frac{1}{2r} \int_{-r}^{r} d\alpha \cos^2(\alpha) \right)^2 \geq \frac{4 r^4}{45} - \frac{16r^6}{315}$ by expanding it in the series and keeping the terms which result in the lower bound. This completes the proof.

\end{proof}

\subsubsection{Convexity guarantee: Proof of Theorem~\ref{thm:ITE-convex}}
We devote this subsection to prove Theorem~\ref{thm:ITE-convex} which shows the convexity of the loss landscape for imaginary time evolution.

\begin{theorem}[Approximate convexity of the landscape for imaginary time evolution, Formal]\label{thm:ITE-convex}
Given that the dynamic imaginary time follows
\begin{equation}
    \delta\tau\leq \frac{\mu_{\min}+2|\epsilon|}{48 M \lambda_{\max}} \;,
\end{equation}
the loss landscape is $\epsilon$-convex in a hypercube of width $2r_c$ around a previous optimum $\vec{\theta^*}$ i.e., $\vol(\thv^*, r_c)$ such that
\begin{equation}
    r_c\geq\frac{1}{M}\left(\frac{\mu_{\min}+2|\epsilon|}{16 M}-3\lambda_{\max} \delta\tau \right)
\end{equation}
where $\mu_{\min}$ is the minimal eigenvalue of the Fisher information matrix associated with the loss. 
\end{theorem}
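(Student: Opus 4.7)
The plan is to mirror the real-time convexity proof (Theorem~\ref{th:pRTE}) with modifications to accommodate the non-unitary nature and $Z$-normalisation of imaginary time evolution. The $\epsilon$-convexity condition on $\LC_{\rm ITE}$ is equivalent to the spectral requirement $[\nabla^2 F_{\rm ITE}(\vec{x})]_{\max} \leq |\epsilon|$ for all $\vec{x} = (\thv-\thv^*, \delta\tau)$ with $\thv \in \vol(\thv^*, r_c)$, where $F_{\rm ITE}$ denotes the fidelity between $U(\thv)\rho_0 U^\dagger(\thv)$ and $\rho_{(\thv^*, \delta\tau)}$.

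First I would Taylor expand $F_{\rm ITE}(\vec{x})$ around $\vec{x} = \vec{0}$ as in Eq.~\eqref{eq:fidel-taylor}. The quadratic coefficient at the origin is again $-\tfrac{1}{2}\mathcal{F}(\vec{0})$, via the same identification of the fidelity Hessian with the quantum Fisher information used in the real-time setting, and the Taylor remainder produces a third-derivative correction evaluated at an intermediate point $\vec{\nu}$, giving
\begin{equation}
\frac{\partial^2 F_{\rm ITE}(\vec{x})}{\partial x_l \partial x_m} = -\tfrac{1}{2}\mathcal{F}_{lm}(\vec{0}) + \tfrac{1}{6}\tilde{\AC}_{lm}^{\rm ITE}(\vec{\nu}),
\end{equation}
where $\tilde{\AC}^{\rm ITE}_{lm}$ is the symmetrised sum of the $x_i \AC^{\rm ITE}_{ilm}$-type terms over $i$. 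Weyl's inequality then gives $[\nabla^2 F_{\rm ITE}]_{\max} \leq -\tfrac{\mu_{\min}}{2} + \tfrac{1}{6}[\tilde{\AC}^{\rm ITE}]_{\max}$.

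The crucial step, and the place where the argument departs from Theorem~\ref{th:pRTE}, is bounding the third derivatives $\AC_{ijk}^{\rm ITE}$. When all three indices correspond to circuit parameters the argument is unchanged from the real-time case (Eqs.~\eqref{eq:bound}--\eqref{eq:bound2}), giving $|\AC_{ijk}^{\rm ITE}| \leq 8$. When one index is the time variable the $\delta\tau$ derivative acts on $\rho_{(\thv^*,\delta\tau)}$ through Eq.~\eqref{eq:im_time_ev}, producing the anti-commutator $-\{H, \rho_{(\thv^*,\delta\tau)}\}$ together with the normalisation correction $2\Tr(H\rho_{(\thv^*,\delta\tau)})\rho_{(\thv^*,\delta\tau)}$. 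Bounding both through H\"older's inequality, $\|\rho\|_1 = 1$ for pure states, $\|H\|_\infty \leq \lambda_{\max}$, and the triangle inequality yields an estimate of the form $|\AC^{\rm ITE}_{ijk}| \leq C\lambda_{\max}$ for a small constant $C$. Tracking these constants is the main technical obstacle: the anti-commutator, the normalisation piece, and the derivatives of $Z$ must each be handled, and the net count should land at $24\lambda_{\max}$, replacing the $8\lambda_{\max}$ obtained in the real-time mixed case.

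Feeding this into the row-sum estimate of Eq.~\eqref{eq:bound_tilde} and invoking Proposition~\ref{prop:upper-eigen} produces $[\tilde{\AC}^{\rm ITE}]_{\max} \leq 48M(3\lambda_{\max}\delta\tau + Mr)$, where the extra factor of $3$ (compared to the real-time case) absorbs the additional contributions from the imaginary-time derivative. Substituting into the Hessian inequality and demanding $[\nabla^2 F_{\rm ITE}]_{\max} \leq |\epsilon|$ yields
\begin{equation}
-\tfrac{1}{2}\mu_{\min} + 8M(3\lambda_{\max}\delta\tau + Mr) \leq |\epsilon|,
\end{equation}
which rearranges to $r \leq \tfrac{1}{M}\bigl(\tfrac{\mu_{\min}+2|\epsilon|}{16M} - 3\lambda_{\max}\delta\tau\bigr)$. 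Since every such hypercube is $\epsilon$-convex, the width $r_c$ of the total $\epsilon$-convex region satisfies the stated lower bound, and requiring the right-hand side to be positive recovers the advertised constraint $\delta\tau \leq (\mu_{\min}+2|\epsilon|)/(48M\lambda_{\max})$.
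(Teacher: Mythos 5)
Your proposal follows essentially the same route as the paper's proof: the same Taylor expansion identifying the Hessian of the fidelity with $-\tfrac{1}{2}\mathcal{F}(\vec{0})$ plus a third-derivative remainder, the same nested-commutator bound of $8$ for pure-parameter third derivatives, the same use of $\dot{\rho}_\tau = -\{\rho_\tau,H\} + 2\Tr(H\rho_\tau)\rho_\tau$ to bound the mixed time-parameter derivatives by $24\lambda_{\max}$ (three contributions of $8\lambda_{\max}$ each), and the same Gershgorin-type row-sum estimate via Proposition~\ref{prop:upper-eigen} leading to the stated constraint on $r$ and $\delta\tau$. The only part left as a sketch is the explicit term-by-term accounting that yields the constant $24$, which the paper carries out in Eqs.~\eqref{eq:ite-con-con}--\eqref{eq:trace_H}, but your accounting of the three contributions is correct.
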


\begin{proof}
    We first recall that the region of the loss function is $\epsilon$-convex (i.e., Definition~\ref{def:epsilon-convex}) if all eigenvalues of the Hessian matrix within the region are larger than $-|\epsilon|$, which can be re-expressed in terms of the fidelity as
\begin{align}\label{eq:proof-def-epsilon-convex-im}
   \left[ \nabla^2_{\thv} F\left( U(\thv) \rho_0 U^\dagger(\thv), \rho_{(\vtheta^*, \delta \tau)} \right)\right]_{\max} \leq |\epsilon| \; ,
\end{align}
for all $\vtheta \in \vol(\vtheta^*, r)$.

By using Taylor's expansion around $\thv^*$ and Taylor reminder theorem (explained in Appendix \ref{app:taylor}), we can write the fidelity like
\begin{align}\label{eq:fidel-taylor-im}
    F(\vec{x}) = 1 - \sum_{i,j} \frac{x_i x_j}{4} \mathcal{F}_{ij}(\vec{0}) + \sum_{i,j,k} \frac{x_i x_j x_k}{6} \left(\frac{\partial^3 F(\vec{x})}{\partial x_i \partial x_j \partial x_k}\right) \bigg|_{\vec{x} = \vec{\nu}} \;,
\end{align}
where we introduce the shorthand notation of the fidelity around this region as $F(\vec{x})$ with $\vec{x} = (\thv - \thv^*, \delta \tau)$, $\FC_{ij} (\vec{0})$ are elements of the Quantum Fisher Information matrix at $\vec{x} = \vec{0}$, and the last term is the result of the Taylor's remainder theorem with $\vec{\nu} = c \vec{x}$ such that $ c \in \left[0, 1  \right]$. We remark that by this notation of $\vec{x}$ the imaginary time component is the last component of $\vec{x}$ i.e., $x_{M+1} = \delta \tau$.

A third order derivative in the last term when taken only with respect to the parameters $\vtheta$ (i.e., no $\delta \tau$ component) can be expressed as (which is the same fashion as in \eqq{im_time_ev} for the real time dynamic case)
\begin{align}
    \frac{\partial^3 F(\vec{x})}{\partial \theta_i \partial \theta_j \partial \theta_k} & =\frac{1}{Z} \Tr\left\{ e^{-\delta\tau H}U^{(M,k)} i\left[ U^{(k,j)} i \left[ U^{(j,i)} i \left[ U^{(i,0)} \rho_0 {U^{(i,0)}}^{\dagger}, \vsigma_i\right]{U^{(j,i)}}^\dagger, \vsigma_j  \right]{U^{(k,j)}}^\dagger, \vsigma_k \right] {U^{(M,k)}}^\dagger e^{-\delta\tau H} \rho_{(\thv^*, 0)}\right\} \;,
\end{align} 
where $U(\thv) = U^{(M,k)}U^{(k,j)}U^{(j,i)}U^{(i,0)}$ with $U^{(a,b)} = \prod_{l = a+1}^b e^{-i x_l \vsigma_l} \widetilde{V}_{l}$ (for $b<M+1$).

When the third derivative is taken with respect to the imaginary time in of the components, by direct calculation we have  
\begin{align}
     \frac{\partial^3 F(\vec{x})}{\partial \theta_i \partial \theta_j \partial \tau} &=\Tr\left[ -\rho_{(\thv^*, 0)}\left\{ \mathcal{B}_{i,j}, H \right\} + 2\Tr[H\mathcal{B}_{i,j}]\rho_{(\vtheta^*, \delta \tau)}\rho_{(\thv^*, 0)} + 2\Tr[\rho_{(\vtheta^*, \delta \tau)} H]\mathcal{B}_{i,j}\rho_{(\thv^*, 0)}\right]
\end{align}
where we defined $\mathcal{B}_{i,j} = \frac{\partial^2 \rho_{(\vtheta^*, \delta \tau)}}{\partial \theta_i \partial \theta_j }$ which can be written as a nested commutator as
\begin{equation}
    \mathcal{B}_{i,j} = \frac{1}{Z} \Tr\left\{ e^{-\tau H}U^{(M,k)} i \left[ U^{(k,i)} i \left[ U^{(i,0)} \rho_0 {U^{(i,0)}}^{\dagger}, \vsigma_i\right]{U^{(j,i)}}^\dagger, \vsigma_j  \right] {U^{(M,j)}}^\dagger e^{-\tau H} \rho_{(\thv^*, 0)}\right\} \;.
\end{equation}
We note that the other third derivative terms with respect to the imaginary time in more than one components can be also be expressed in a similar way. However, since they are not important in our analysis, we do not write them explicitly. Indeed, if we now compute the Hessian matrix of $F(\vec{x})$ with respect to the variational parameters $\vtheta$, all the terms with higher derivatives in time will be $0$. 

We now focus on one element of this hessian matrix $\nabla^2_{\vtheta} F(\vec{x})$ (recall that $\vec{x} = (\vec{\theta}-\vec{\theta}^*, \tau)$). For convenience, we denote $\AC_{i,j,k}(\vec{x}) = \frac{\partial^3 F(\vec{x})}{\partial x_i \partial x_j \partial x_k}$. By direct compuation, we see that
\begin{equation}
    \frac{\partial^2 F(x)}{\partial \theta_i \partial \theta_j} = -\frac{1}{2}\f_{j,k}(\vec{0}) + \frac{1}{6}\widetilde{\AC}_{j,k}(\nu)
\end{equation}
with
\begin{align}
     \widetilde{\AC}_{j,k}(\vec{\nu}) = \sum_{i=1}^{M+1} x_i \left( \AC_{j,k,i}(\vec{\nu}) + \AC_{j,i,k}(\vec{\nu}) + \AC_{i,j,k}(\vec{\nu}) + \AC_{k,j,i}(\vec{\nu}) + \AC_{k,i,j}(\vec{\nu}) +  \AC_{i,k,j}(\vec{\nu})\right) \;,
\end{align}
where we remark that the sum up to $M+1$ is because $\delta\tau$ is included in this sum.

Thus, the largest eigenvalue of $\nabla^2_{\vtheta}F(\vec{x})$ can be bounded as follows
\begin{align}
    \left[ \nabla^2_{\thv} F\left( \vec{x}\right)\right]_{\rm max} \leq -\frac{1}{2} \left[ \FC(\vec{0}) \right]_{\rm min} + \frac{1}{6} [\widetilde{\AC}(\vec{\nu}) ]_{\rm max} \;,
\end{align}
where we define $[A]_{\rm max}$ as the largest eigenvalue of the matrix $A$ and similarly $[A]_{\rm min}$ is used for the smallest eigenvalue. 

Our strategy is to bound $[\widetilde{\AC}(\vec{\nu}) ]_{\rm max}$ with Proposition \ref{prop:upper-eigen}. To do this, we first consider a bound on $\AC_{i,j,k}(\vec{x})$. We consider two cases when (i) the index $k$ represents the parameter component or (ii) the index $k$ represents the time component. For the first case, we can do the same steps as in the real time dynamics presented in Eq.~\eqref{eq:bound} to Eq.~\eqref{eq:bound2}, which is repeated here for completeness.
\begin{align}
    \AC_{i,j,k}(\vec{\nu}) & \leq \left| \AC_{i,j,k}(\vec{\nu})\right| \\
    & \leq \left\| e^{-\delta \tau H} U^{(M,k)} i\left[ U^{(k,j)} i \left[ U^{(j,i)} i \left[ U^{(i,0)} \rho_0 {U^{(i,0)}}^{\dagger}, \vsigma_i\right]{U^{(j,i)}}^\dagger, \vsigma_j  \right]{U^{(k,l)}}^\dagger, \vsigma_k \right] {U^{(M,k)}}^\dagger  e^{-\delta \tau H}\right\|_\infty \left\| \rho_{(\thv^*, 0)} \right\|_1 \\
    & \leq 2^3 \| \vsigma_i \|_{\infty} \| \vsigma_j \|_{\infty} \| \vsigma_k \|_{\infty} \\
    & = 8  
\end{align}
Here the second inequality is due to H\"{o}lder's inequality. In the third inequality we use a few identities including (i) the one-norm of a pure state is $1$, (ii) $\| U A\|_p = \| A\|_p$ for any unitary $U$, (iii) $\| i [A,B] \|_p = 2 \|A\|_p \|B\|_p$, (iv) $\|AB\|_p \leq \|A\|_p \|B\|_p$ and lastly (v) $\| e^{-\delta \tau H}\|_{\infty} < 1$. To reach the final equality, we recall that since $x_l$ and $x_m$ cannot be a time component $\delta t$, $\vsigma_l$ and $\vsigma_m$ are generators of the circuit which have $ \| \vsigma_i \|_{\infty} = \| \vsigma_j \|_{\infty}= \| \vsigma_k \|_{\infty} =1$.

For the second case with the index $k$ representing th time component, we have the following
\begin{align}
   \AC_{i,j,M+1}(\vec{x}) =&\Tr\left[ -\rho_{(\thv^*, 0)}\left\{ \mathcal{B}_{i,j}, H \right\} + 2\Tr[H\mathcal{B}_{i,j}]\rho_{(\vtheta^*, \delta \tau)}\rho_{(\thv^*, 0)} + 2\Tr[\rho_{(\vtheta^*, \delta \tau)} H]\mathcal{B}_{i,j}\rho_{(\thv^*, 0)}\right]\\
    \leq& \left|\Tr\left[ -\rho_{(\thv^*, 0)}\left\{ \mathcal{B}_{i,j}, H \right\} + 2\Tr[H\mathcal{B}_{i,j}]\rho_{(\vtheta^*, \delta \tau)}\rho_{(\thv^*, 0)} + 2\Tr[\rho_{(\vtheta^*, \delta \tau)} H]\mathcal{B}_{i,j}\rho_{(\thv^*, 0)}\right]\right|\\
     \leq & \left|\Tr\left[ -\rho_{(\thv^*, 0)}\left\{ \mathcal{B}_{i,j}, H \right\}\right]\right|+\left|\Tr\left[ 2\Tr[H\mathcal{B}_{i,j}]\rho_{(\vtheta^*, \delta \tau)}\rho_{(\thv^*, 0)} \right]\right|+ \left|\Tr\left[ 2\Tr[\rho_{(\vtheta^*, \delta \tau)} H]\mathcal{B}_{i,j}\rho_{(\thv^*, 0)} \right]\right| \label{eq:ite-con-con}
\end{align}

Now we can bound each individual term in Eq.~\eqref{eq:ite-con-con} with
\begin{align}
    &\left|\Tr\left[ -\rho_{(\thv^*, 0)}\left\{ \mathcal{B}_{i,j}, H \right\}\right]\right| \leq 2 \left|\Tr\left[\rho_{(\thv^*, 0)} \mathcal{B}_{i,j}H\right]\right|\\
    &\left|\Tr\left[ 2\Tr[H\mathcal{B}_{i,j}]\rho_{(\vtheta^*, \delta \tau)}\rho_{(\thv^*, 0)} \right]\right|\leq 2\left| \Tr[H\mathcal{B}_{i,j}]\right| \label{eq:trace_states}\\
    &\left|\Tr\left[ 2\Tr[\rho_{(\vtheta^*, \delta \tau)} H]\mathcal{B}_{i,j}\rho_{(\thv^*, 0)} \right]\right| = 2||H||_{\infty}\left|\Tr\left[ \mathcal{B}_{i,j}\rho_{(\thv^*, 0)} \right]\right|\label{eq:trace_H}
\end{align}
where we are using $|\Tr[\rho_{(\vtheta^*, \delta \tau)}\rho_{(\thv^*, 0)}]|\leq 1$ in \eqq{trace_states} and  H\"{o}lder's inequality in \eqq{trace_H}. Now, all of the remaining terms can be bounded using \eqq{bound}. Thus we obtain
\begin{equation}
    \AC_{i,j,M+1}\leq 24 ||H||_{\infty} \ .
\end{equation}

Now, we can bound the sum of the absolute of elements in a row of $\widetilde{\AC}(\vec{\nu})$ as
\begin{align}\label{eq:bound_tilde-im}
    \sum_{j=1}^M \left| \widetilde{\AC}_{i j}(\vec{\nu}) \right| 
    & \leq \sum_{j=1}^M \sum_{k = 1}^{M+1} |x_k| \left( |\AC_{i,j,k}(\vec{\nu})| + |\AC_{i,k,j}(\vec{\nu})| + |\AC_{k,i,j}(\vec{\nu})| + |\AC_{j,i,k}(\vec{\nu})| + |\AC_{j,k,i}(\vec{\nu})| +  |\AC_{k,j,i}(\vec{\nu})| \right)\\
    & \leq 48 M (3\lambda_{\max}\delta\tau + Mr) \;. 
\end{align}

Finally we invoke Proposition~\ref{prop:upper-eigen}. Thus, the largest eigenvalue of the matrix $\widetilde{\AC}$, $[\widetilde{\AC}_{j,k}]_{\max}$ can be bounded by 
\begin{equation}
    [\widetilde{\AC}_{j,k}]_{\max}\leq 48 M (3\lambda_{\max}\delta\tau + Mr_c) \ . 
\end{equation}

With this result we can guarantee the legion of $\epsilon$-convexity (i.e. \eqq{proof-def-epsilon-convex-im}) by enforcing the following condition
\begin{align}
    -\frac{1}{2} \left[ \FC(\vec{0}) \right]_{\rm min} + 8 M  \left( 3\lambda_{\rm max} \delta \tau + M r \right)   \leq |\epsilon| \;.
\end{align}
Upon rearranging the terms, we find
\begin{equation}\label{eq:proof-ricard-hello-im}
    r\leq\frac{1}{M}\left( \frac{\mu_{\min}+2|\epsilon|}{16 M}-3\lambda_{\max}\delta\tau \right) \ . 
\end{equation}
Indeed, this implies that \textit{any} hypercube $\vol(\thv^*, r)$ such that $r$ satisfies Eq.~\eqref{eq:proof-ricard-hello-im}  is guaranteed to be approximately convex. Hence, we know that the total $\epsilon$-convex region has to be at least of size $\frac{1}{M}\left(\frac{\mu_{\rm min}+2|\epsilon|}{16 M} - 3\lambda_{\rm max} \delta \tau\right)$. More explicitly, by denoting $r_c$ to be the length of the total $\epsilon$-approximate convex region $\vol(\thv^*, r_c)$, we have
\begin{equation}
    r_c \geq \frac{1}{M}\left(\frac{\mu_{\rm min}+2|\epsilon|}{16 M} - \lambda_{\rm max} \delta \tau\right) \;.
\end{equation}
Finally, we note that the bound is only informative if the time-step respects
\begin{align}
    \delta \tau \leq \frac{\mu_{\rm min} + 2 |\epsilon| }{48 M \lambda_{\rm max}} \;.
\end{align}
This completes the proof of the theorem.

\end{proof}

\subsubsection{Adiabatic moving minima: Proof of Theorem~\ref{thm:ITE-adiabaticminimum}}
In this subsection, we analytically prove Theorem~\ref{thm:ITE-adiabaticminimum}. We first show an equivalent result to Proposition \ref{prop:adiabatic-min} for imaginary time evolution. We refer the readers to Appendix~\ref{app:moving-min} for definitions of adiabatic minimum (Definition \ref{def:adiabatic-minimum}) and adiabatic shift (Definition \ref{def:adaibatic-shift}). 

\begin{proposition}\label{prop:adiabatic-min-im}
Given a time-step of the current iteration $\delta \tau$ and assuming that the adiabatic minimum exists within this time frame, the shift of the adiabatic minimum $\valpha_{A}(\delta \tau)$ as defined in Definition~\ref{def:adaibatic-shift} can be bounded as 
\begin{align}
    \| \valpha_A(\delta \tau) \|_2 \leq \frac{4\sqrt{M} \lambda_{\rm max} \delta \tau}{\beta_A} \;,
\end{align}
where $M$ is the number of parameters, $\lambda_{\rm max}$ is the largest eigenvalue of the dynamic Hamiltonian $H$ and $\beta_A = \frac{\Dot{\valpha}_A^T(\delta \tau) \left( \nabla^2_{\valpha} \LC(\valpha,\delta \tau) \big|_{\valpha = \valpha_A(\delta \tau)}\right) \Dot{\valpha}_A(\delta \tau) }{\| \Dot{\valpha}_{A}(\delta \tau) \|_2^2} $ 
\end{proposition}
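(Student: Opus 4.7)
The plan is to carry out the same five-step skeleton used in the proof of Proposition~\ref{prop:adiabatic-min}, replacing only the single ingredient that is genuinely dynamics-dependent. Starting from the defining relation of the adiabatic shift $\nabla_{\valpha}\LC_{\rm ITE}(\valpha,\tau)\big|_{\valpha=\valpha_A(\tau)} = \vec{0}$, I would differentiate both sides with respect to $\tau$ to get $\partial_\tau \nabla_{\valpha_A}\LC_{\rm ITE} + (\nabla^2_{\valpha_A}\LC_{\rm ITE})\Dot{\valpha}_A = \vec{0}$. Projecting onto $\Dot{\valpha}_A^T/\|\Dot{\valpha}_A\|_2$, invoking Cauchy--Schwarz, and bounding the $2$-norm by $\sqrt{M}$ times the componentwise maximum yields $\|\Dot{\valpha}_A(\tau)\|_2 \leq \sqrt{M}\,\max_i|\partial_\tau \partial_{\alpha_A^{(i)}}\LC_{\rm ITE}|/\beta_A$. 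Integrating over $\tau\in[0,\delta\tau]$ then produces the claimed bound, provided I can establish $|\partial_\tau \partial_{\alpha_A^{(i)}}\LC_{\rm ITE}|\leq 4\lambda_{\rm max}$.

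The only place where real work is required is this mixed-derivative estimate, and it is also where the imaginary-time case differs from the real-time one. I would first apply the parameter-shift rule in the $\alpha^{(i)}$ slot to rewrite $\partial_\tau\partial_{\alpha^{(i)}_A}\LC_{\rm ITE}$ as a difference of two pure $\tau$-derivatives of the loss evaluated at shifted parameters $\valpha_A\pm\tfrac{\pi}{2}\hat\alpha_i$, and then use the imaginary-time equation of motion \eqref{eq:im_time_ev} in place of the Schr\"{o}dinger equation. This gives
\[
\partial_\tau \LC_{\rm ITE}(\valpha,\tau) = 2\Tr\bigl[\rho_\valpha\rho_{(\vtheta^*,\tau)}H\bigr] - 2\Tr(H\rho_{(\vtheta^*,\tau)})\,\Tr(\rho_\valpha \rho_{(\vtheta^*,\tau)})\,,
\]
where the second term, absent in the real-time case, arises from the normalization correction in Eq.~\eqref{eq:im_time_ev}. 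H\"{o}lder's inequality combined with $\|\rho\|_1=1$ and $\|H\|_\infty=\lambda_{\rm max}$ bounds each summand by $2\lambda_{\rm max}$, so $|\partial_\tau\LC_{\rm ITE}|\leq 4\lambda_{\rm max}$ and therefore $|\partial_\tau\partial_{\alpha^{(i)}_A}\LC_{\rm ITE}|\leq 4\lambda_{\rm max}$. This doubling relative to the real-time commutator bound $2\lambda_{\rm max}$ is precisely what turns the prefactor $2$ in Proposition~\ref{prop:adiabatic-min} into the $4$ of the statement here.

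The main subtlety, and where I expect the bookkeeping to be most delicate, is handling the explicit $1/Z$ normalization sitting inside $\rho_{(\vtheta^*,\tau)}$. A naive differentiation through $1/Z$ would require an additional product-rule step involving $\partial_\tau Z = -2\Tr(H\rho_{(\vtheta^*,\tau)})Z$; using Eq.~\eqref{eq:im_time_ev} directly --- which is already written for the normalized state --- sidesteps this, cleanly absorbing the $Z$-dependence into the extra correction term identified above. Apart from this accounting, every other step (Cauchy--Schwarz, the $\ell^\infty$-to-$\ell^2$ conversion, and the final integration over $\tau$) is mechanically identical to the real-time proof of Proposition~\ref{prop:adiabatic-min} and can simply be quoted.
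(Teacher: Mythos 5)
Your proposal is correct and follows essentially the same route as the paper's proof: differentiate the stationarity condition in $\tau$, project onto $\Dot{\valpha}_A$, convert the $2$-norm to $\sqrt{M}$ times the largest component, bound the mixed derivative via the parameter-shift rule together with the normalized imaginary-time equation of motion \eqref{eq:im_time_ev} (whose extra $2\Tr(H\rho_\tau)\rho_\tau$ term is indeed the source of the factor $4$ in place of $2$), and integrate. The only cosmetic remark is that your first term should read $2\,\Re\Tr[\rho_{\valpha}\rho_{(\vtheta^*,\tau)}H]$ coming from the anticommutator, which does not affect the $4\lambda_{\rm max}$ bound.
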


\begin{proof}
    The proof here is very similar to the equivalent version for Real Time Evolution. Through out this proof, it is more convenient to use $\tau$ as an imaginary time-step (instead of $\delta \tau$). We start by recalling that by Definition \ref{def:adaibatic-shift}, the adiabatic shift follows
\begin{align}
    \nabla_{\valpha} \LC(\valpha,\tau) \big|_{\valpha = \valpha_A(\tau)} = \vec{0} \;,
\end{align}
which holds for any $\tau$. Similarly to the previous case, we use this notation $\nabla_{\valpha_A} \LC := \nabla_{\valpha} \LC(\valpha,\tau) \big|_{\valpha = \valpha_A(\tau)}$. 

We can derivative with respect to $\tau$ to find
\begin{align}\label{eq:moving-min-2-proof1-im}
    \frac{d}{d\tau}\left(\nabla_{\valpha_A} \LC \right) = \partial_\tau \nabla_{\valpha_A} \LC + \left(\nabla_{\valpha_A}^2 \LC  \right)\Dot{\valpha}_A = \vec{0} \;,
\end{align}
where we denote $\partial_\tau = \partial/\partial\tau$ and $\Dot{\valpha}_A(\tau) = d {\valpha}_A(\tau) / d\tau$. 

Recall that $\nabla_{\valpha_A}^2 \LC$ is a matrix vector multiplication with the $\nabla_{\valpha_A}^2 \mf$ is the Hessian Matrix evaluated at the adiabatic minima $\valpha_A(\tau)$. We multiply from the left with $\Dot{\valpha}^T_A(\tau)/\|\Dot{\valpha}_A(\tau) \|_2$ to find
\begin{align}
    \frac{\Dot{\valpha}_A^T}{\|\Dot{\valpha}_A \|_2} \partial_\tau \nabla_{\valpha_A} \LC  + \frac{\Dot{\valpha}_A^T \left(\nabla_{\valpha_A}^2 \LC  \right)\Dot{\valpha}_A}{\|\Dot{\valpha}_A \|_2^2} \|\Dot{\valpha}_A \|_2 = 0 \;.
\end{align}

For convenience, we denote $\beta_A := \Dot{\valpha}_A^T \left(\nabla_{\valpha_A}^2 \LC  \right)\Dot{\valpha}_A / \| \Dot{\valpha}_A \|_2^2 $. We then rearrange the terms to find a bound on the norm of $\Dot{\valpha}_A(\tau)$:
\begin{align}
    \| \Dot{\valpha}_A(\tau) \|_2 & =  \left\| -  \frac{\Dot{\valpha}_A^T \partial_\tau \nabla_{\valpha_A} \LC}{\|\Dot{\valpha}_A \|_2 \beta_A} \right\|_2 \\
    & \leq \frac{\| \partial_\tau \nabla_{\valpha_A} \LC\|_2}{\beta_A} \\
    & \leq \frac{\sqrt{M} \left|\partial_\tau \partial_{\alpha^{(i)}_{A}} \LC \right|_{\rm max}}{\beta_{A}} \\
    & \leq \frac{4\sqrt{M} \lambda_{\rm max}}{\beta_A} \;,
\end{align}
where in the first inequality we use Cauchy-Schwartz. In the second we expand the 2-norm explicitly and take the largest value of the sum (i.e. $ \| \vec{a} \|_2 = \sqrt{\sum_{i=1}^M a_i^2} \leq \sqrt{M} |a_i|_{\rm max}$ with $\alpha_A^{(i)}$ being the $i^{\rm th}$ component of $\valpha_{A}$).

To reach the last inequality we use the explicit form of the loss function $\LC(\valpha,\tau) = 1 - \Tr\left( e^{-H\tau} \rho_{\thv^*} e^{H\tau} \rho_{\vtheta}\right)$ where $H$ is the dynamical Hamiltonian, $\rho_{\thv^*}$ is the state corresponding to the solution of the previous iteration and $\rho_{\vtheta}$ is the parameterised state that depends on $\valpha$ and respect a parameter shift rule. With this, we can bound the term $\left|\partial_\tau \partial_{\alpha^{(i)}_{A}} \LC \right|_{\rm max}$ as follows
\begin{align}
    \left|\partial_\tau \partial_{\alpha^{(i)}_{A}} \LC \right|_{\rm max}  & =\left| \frac{\partial}{\partial \tau}\left( \frac{\partial}{\partial \alpha^{(i)}}  \LC(\valpha, \tau)\right)\bigg|_{\valpha = \valpha_A(\tau)}\right|_{\rm max}  \\
    & = \left|\frac{1}{2} \left(\frac{\partial}{\partial \tau} \LC \left(\valpha_A + \frac{\pi}{2} \hat{\alpha}_i\right) - \frac{\partial}{\partial \tau} \LC \left(\valpha_A - \frac{\pi}{2} \hat{\alpha}_i\right)\right)\right|_{\rm max} \\
    & \leq \max_{i}  \left|\frac{\partial}{\partial \tau} \LC \left(\valpha_A \pm \frac{\pi}{2} \hat{\alpha}_i\right)\right|\\
    &=\left\| e^{-H\tau}\left(-\{\rho_{\thv^* },H\} +2\Tr[\rho_{\thv^* }H]\rho_{\thv^* } \right) e^{-\tau H} \right\|_{\infty} \\
    &\leq 4 \lambda_{\rm max} \;,
\end{align}
where we use the parameter shift rule in the second equality. In the first inequality we we maximise on all the possible terms of this parameter shift rule, and in the third equality we apply the derivative of imaginary time shown in \ref{eq:im_time_ev}, apply H\"{o}lder's inequality and use that $\|\rho\|_1 = 1$ for any pure quantum state $\rho$. In the last inequality we simply use the triangle inequality to bound the $\|A + B\|_p\leq \|A\|_p + \|B\|_p$. 

Lastly, we can bound shift in the adiabatic minima as follows
\begin{align}
    \left\| \valpha_A(\tau) \right\|_2 & = \left\| \int_{0}^\tau \Dot{\valpha}_A(\tau') d\tau'\right\|_2 \\
    & \leq \int_{0}^\tau \| \Dot{\valpha}_A(\tau') \|_2 d\tau' \\
    & \leq \frac{4\sqrt{M} \lambda_{\rm max} \tau}{\beta_A} \;,
\end{align}
which completes the proof.

\end{proof}

\medskip

We are now ready to prove Theorem~\ref{thm:ITE-adiabaticminimum} which is detailed in the following.

\begin{theorem}[Adiabatic minimum is within provably `nice' region for imaginary time evolution, Formal]
If the imaginary time-step $\delta \tau$ is chosen such that
\begin{align}
    \delta \tau \leq\frac{\eta_0 \beta_A}{4 M \lambda_{\rm max}} \;,
\end{align}
for some small constant $\eth_0$. then the adiabatic minimum $\thv_A(\delta \tau)$ is guaranteed to be within the non-vanishing gradient region (as per Theorem~\ref{thm:ITE-variance-lower-bound}), and additionally, if $\delta \tau$ is chosen such that
\begin{equation}
\delta\tau \leq \frac{ \beta_A(\mu_{\min} + 2|\epsilon|)}{64M^{5/2}\lambda_{\max}\left(1+\frac{3\beta_A}{4M^{3/2}}\right)} \;,
\end{equation}
then the adiabatic minimum $\thv_A(\delta \tau)$ is guaranteed to be within the $|\epsilon|$-convex region (as per Theorem~\ref{thm:ITE-convex}) where 
\begin{equation}
    \beta_A := \frac{\Dot{\vtheta}_A^T(\delta  \tau) \left( \nabla^2_{\vtheta} \LC_{\rm ITE}(\vtheta) \big|_{\vtheta = \vtheta_A(\delta  \tau)}\right) \Dot{\vtheta}_A(\delta \tau) }{\| \Dot{\thv}_{A}(\delta \tau) \|_2^2} 
\end{equation}
corresponds to the second derivative of the loss in the direction in which the adiabatic minimum moves.
\end{theorem}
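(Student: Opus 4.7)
The plan is to mirror the proof of the real-time analogue, Theorem~\ref{th:min-in-convexregion}, using the imaginary-time shift bound already established in Proposition~\ref{prop:adiabatic-min-im}. That proposition already does the analytic heavy lifting: it tells us that $\|\valpha_A(\delta\tau)\|_2 \leq 4\sqrt{M}\lambda_{\max}\delta\tau/\beta_A$, and since $\|\cdot\|_\infty \leq \|\cdot\|_2$ this immediately controls how far, component-wise, the adiabatic minimum can drift from $\thv^*$ after one step. The remainder is then purely geometric: enforce that this drift is small enough to stay inside (i) the hypercube of non-vanishing gradients given by Theorem~\ref{simplifiedI} and (ii) the approximately-convex hypercube of Theorem~\ref{thm:ITE-convex}.

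For part (i), I would recall from Theorem~\ref{simplifiedI} that substantial loss variance is guaranteed whenever $r \in \Theta(1/\sqrt{M})$, so I would fix $r = \eta_0/\sqrt{M}$ for some constant $\eta_0$ (matching the real-time version). Imposing $\|\valpha_A(\delta\tau)\|_\infty \leq r$ and substituting Proposition~\ref{prop:adiabatic-min-im} gives
\begin{equation}
\frac{4\sqrt{M}\lambda_{\max}\delta\tau}{\beta_A} \leq \frac{\eta_0}{\sqrt{M}},
\end{equation}
which rearranges directly to $\delta\tau \leq \eta_0\beta_A/(4M\lambda_{\max})$, as claimed. This part is straightforward once the proposition is available; the factor of $4$ (versus $2$ in the real-time proof) is the only quantitative change and traces back to the extra terms produced by differentiating the normalisation $Z$ in the imaginary-time dynamics.

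For part (ii), I would instead demand that the drift lies inside the convex region, i.e., $\|\valpha_A(\delta\tau)\|_\infty \leq r_c$, where Theorem~\ref{thm:ITE-convex} supplies $r_c \geq M^{-1}\bigl((\mu_{\min}+2|\epsilon|)/(16M) - 3\lambda_{\max}\delta\tau\bigr)$. Combining with the shift bound yields
\begin{equation}
\frac{4\sqrt{M}\lambda_{\max}\delta\tau}{\beta_A} \leq \frac{1}{M}\left(\frac{\mu_{\min}+2|\epsilon|}{16M} - 3\lambda_{\max}\delta\tau\right).
\end{equation}
The only mildly non-trivial step is that $\delta\tau$ appears on both sides, so I would collect the two $\delta\tau$ contributions on the left, factor out $4M^{3/2}\lambda_{\max}/\beta_A$, and identify the correction factor $1+3\beta_A/(4M^{3/2})$. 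Solving for $\delta\tau$ then gives the stated bound $\delta\tau \leq \beta_A(\mu_{\min}+2|\epsilon|)/\bigl(64M^{5/2}\lambda_{\max}(1+3\beta_A/(4M^{3/2}))\bigr)$.

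I do not foresee a genuine obstacle here: all the analytical content has been front-loaded into Proposition~\ref{prop:adiabatic-min-im}, Theorem~\ref{simplifiedI}, and Theorem~\ref{thm:ITE-convex}, so the theorem reduces to two short algebraic manipulations. The only subtlety worth flagging in the write-up is the implicit assumption that the adiabatic minimum function $\thv_A(\delta\tau)$ exists and is continuous on the interval of interest (cf.\ the caveat in Appendix~\ref{app:moving-min} about minima turning into slopes); I would state this assumption explicitly at the start of the proof and otherwise follow the real-time template verbatim.
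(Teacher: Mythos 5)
Your proposal matches the paper's proof essentially verbatim: both parts reduce to combining the shift bound $\|\valpha_A(\delta\tau)\|_\infty \leq 4\sqrt{M}\lambda_{\max}\delta\tau/\beta_A$ from Proposition~\ref{prop:adiabatic-min-im} with the region radii from Theorem~\ref{simplifiedI} and Theorem~\ref{thm:ITE-convex}, and the algebra you outline (collecting the two $\delta\tau$ terms and factoring out $1+3\beta_A/(4M^{3/2})$) is exactly what the paper does. Your remark about the existence/continuity caveat for $\thv_A(\delta\tau)$ is also consistent with the paper's treatment.
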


\begin{proof}
    From Proposition~\ref{prop:adiabatic-min-im} and the norm inequality, the adiabatic minimum follows
\begin{align}
    \| \valpha_A(\delta \tau) \|_\infty \leq \| \valpha_A(\delta \tau) \|_2 \leq \frac{4\sqrt{M} \lambda_{\rm max} \delta \tau}{\beta_A} \;.
\end{align}

Now we want to incorporate the conditions of the region that we are interested. Indeed, by tuning $\delta\tau$, we want a guarantee that the adiabatic minimum is within (i) the region with substantial gradients and/or (ii) the convex region. 

\medskip

\underline{For (i) the region with substantial gradients,} we recall from Theorem~\ref{thm:ITE-variance-lower-bound} that for the imaginary time scaling as $\delta \tau \leq 1/12 \lambda_{\rm max}$, the hypercube of width $2r$ has polynomial large variance within the region where $r$ scales as
\begin{align}
    r = \frac{\eta_0}{\sqrt{M}} \;,
\end{align}
for some constant $\eta_0$. Hence, the sufficient condition to have the adiabatic minimum to be within this substantial gradient region is that
\begin{align}
     \| \valpha_A(\delta \tau) \|_\infty \leq \frac{4 \sqrt{M} \lambda_{\rm max} \delta \tau}{\beta_A} \leq \frac{\eta_0}{\sqrt{M}} \;,
\end{align}
which, upon rearranging leads to
\begin{align}
    \delta \tau \leq \frac{\eta_0 \beta_A}{4 M \lambda_{\rm max}} \;.
\end{align}

\medskip

\underline{For (ii) the convex region,} from Theorem~\ref{thm:ITE-convex}, if we have the dynamic time bounded as $\delta\tau\leq \frac{\mu_{\min}+2|\epsilon|}{48 M \lambda_{\max}}$, we have $\epsilon$-convexity in the hypercube of with $2r_c$ for 
\begin{equation}
    r_c\geq\frac{1}{M}\left(\frac{\mu_{\min}+2|\epsilon|}{16 M}-3\lambda_{\max} \delta\tau \right)
\end{equation}

Therefore it is sufficient to have the guarantee that the adiabatic minima is inside this convex region by imposing
\begin{equation}
     \| \valpha_A(t) \|_\infty \leq \frac{4\sqrt{M} \lambda_{\rm max} \delta\tau}{\beta_A}\leq\frac{1}{M}\left(\frac{\mu_{\min}+2|\epsilon|}{16 M}-3\lambda_{\max} \delta\tau \right)\leq r_c
\end{equation}
which after rearranging terms the imaginary time-step is bounded by
\begin{equation}\label{eq:dt-convex-min-im}
\delta\tau\leq \frac{ \beta_A(\mu_{\min} + 2|\epsilon|)}{64M^{5/2}\lambda_{\max}\left(1+\frac{3\beta_A}{4M^{3/2}}\right)} \;.
\end{equation}
This completes the proof. 
\end{proof}

\section{Outlook on other iterative approaches}\label{app:extension}

So far, we have presented our results in the context of iteratively learning a Hamiltonian dynamics from a fixed initial state. However, we can reinterpret the result for the substantial gradient region in an abstractified form to apply them to other iterative methods. 
In Section~\ref{sec:extension-to-other-iterative} of the main text, we have discussed about physical intuition together with an informal version of the theoretical result.  
In this section, we provide further information with the formal version of the theorem and the associated proof in the general fidelity-type loss setting, as well as some illustrative tasks of learning an unknown unitary via the variational and quantum machine learning approaches. Importantly, we remark that the imaginary time evolution result discussed in Appendix~\ref{app:imaginary} constitutes another prime example.

\medskip

The section is structured as follows:
\begin{itemize}
    \item In Appendix~\ref{app:other-iterative-summary}, 
    we provide the formal version of the theorem which gives a theoretical guarantee of the substantial gradient region in the general setting of the iterative approach with the fidelity-type loss.
    \item In Appendix~\ref{app:extension-variational-u}, we discuss a specific task of learning an unknown unitary with the variational approach and show how the extension explicitly works in this scenario.
    \item In Appendix~\ref{app:extension-qml-u}, we consider an alternative approach of iteratively learning the unknown unitary via quantum machine learning and provide discussion on how to extend our results.
    \item In Appendix~\ref{app:proof-extension}, we provide proofs of the theoretical results.
\end{itemize}

\subsection{Summary of key analytical results}
\subsubsection{General iterative methods with the fidelity-type loss}\label{app:other-iterative-summary}

In what follows, we state the formal version of Theorem \ref{th:substantial_gradient} in the main text. For any fidelity-type loss iterative methods, the theorem guarantees the existence of the substantial gradient region around the intialization of the solution from the previous iteration.

\setcounter{theorem}{3}

\begin{theorem}[A substantial gradient region, Formal]\label{th:substantial_gradient_app}
Consider any iterative method where the loss function of any iteration can be expressed in the fidelity form as
\begin{align}
    \LC(\thv) = 1 - |\langle \psi_0 |U^\dagger(\thv) |\psi_{\rm target} \rangle|^2 \;,
\end{align}
with $|\psi_{\rm target}\rangle$ being a target state for the current iteration and $\thv^*$ being a set of parameters obtained from the previous iteration. Further consider the general ansatz in Eq.~\eqref{eq:circuit} and assume that in the first iteration the system is prepared in a product initial state $\rho_0$ and let us choose $\vsigma_1$ such that $\Tr[\rho_0 \vsigma_1 \rho_0 \vsigma_1] = 0$. 

Given that the overlap between the target and the state around intialization with $\thv^*$ follows
\begin{align}\label{eq:min_overlap}
     F_{\rm target}(\thv^*) := \left| \langle \psi(\thv^*) | \psi_{\rm target}\rangle \right|^2 \geq \frac{1}{2} \;,
\end{align}
and we consider uniformly sampling parameters in a hypercube of width $2r$ around the solution from the previous iteration $\thv^*$, i.e. $\vol(\thv^*,r)$, such that 
\begin{equation}\label{eq:region-for-th4}\frac{1+2F_{\rm target}(\thv^*)}{2F_{\rm target}(\thv^*)}\frac{3r_0^2}{M-1} \geq r^2 \, .
\end{equation}
Then the variance at any iteration of the algorithm is lower bounded as 
\begin{align} \label{eq:prop1-exact-var-a}
		\Var_{\thv\sim\uni(\thv^*, r) }[\mf(\thv)]\geq  \frac{4 r^4}{45} \left(1 - \frac{4r^2}{7} \right)\left[\left( 1 - r_0^2\right) \left(2F_{\rm target}(\thv^*)-1\right)\right]^2 \, .
\end{align}

In addition, by choosing $r$ such that $r\in \Theta\left( \frac{1}{\sqrt{M}}\right)$, we have
\begin{align}
     \Var_{\vtheta \sim\uni(\vtheta^*, r)} \left[ \LC (\vtheta)\right] \in \Omega\left( \frac{1}{M^2}\right) \;.
\end{align}
\end{theorem}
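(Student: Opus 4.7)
The plan is to mirror the two-stage proof used for Theorem \ref{oldsimplifiedR} in Appendix \ref{app:variance}, since the only difference between Theorem \ref{th:substantial_gradient_app} and that theorem is what plays the role of the ``reference'' state that $U(\thv)$ is being pushed towards: there it was the time-evolved state $\rho_{(\thv^*, \delta t)}$, here it is $|\psi_{\rm target}\rangle\langle\psi_{\rm target}|$. Crucially, the recursive parameter-by-parameter integration carried out in Proposition \ref{prop:variance-lower-bound} never uses the specific form of $\rho_{(\thv^*, \delta t)}$; it treats it as a generic pure state. So the first step is simply to re-run that proof verbatim with $\rho_{(\thv^*, \delta t)} \to |\psi_{\rm target}\rangle\langle\psi_{\rm target}|$, yielding the exact lower bound
\begin{equation}
\Var_{\thv \sim \uni(\thv^*, r)}[\mf(\thv)] \geq (c_+ - k_+^2)\min_{\tilde\xi \in [-1,1]}\left(k_+^{M-1}\Delta_{\thv^*} + (1 - k_+^{M-1})\tilde\xi\right)^2,
\end{equation}
with $\Delta_{\thv^*} = \Tr[(\rho_0 - \vsigma_1 \rho_0 \vsigma_1)\, U^\dagger(\thv^*)\,|\psi_{\rm target}\rangle\langle\psi_{\rm target}|\,U(\thv^*)]$ and $c_+,k_+$ as defined in the appendix.

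Next I would lower bound $\Delta_{\thv^*}$ using the assumption $\Tr[\rho_0 \vsigma_1 \rho_0 \vsigma_1] = 0$, which makes the pure states $|\phi_1\rangle\langle\phi_1| := U(\thv^*)\rho_0 U^\dagger(\thv^*)$ and $|\phi_2\rangle\langle\phi_2| := U(\thv^*)\vsigma_1 \rho_0 \vsigma_1 U^\dagger(\thv^*)$ orthogonal. Extending them to an orthonormal basis and using completeness, exactly as in Eqs.~\eqref{eq:extend1}--\eqref{eq:proof-coro1-2}, gives $\Tr[|\phi_2\rangle\langle\phi_2|\cdot|\psi_{\rm target}\rangle\langle\psi_{\rm target}|] \leq 1 - F_{\rm target}(\thv^*)$, and hence $\Delta_{\thv^*} \geq 2F_{\rm target}(\thv^*) - 1$. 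This is exactly the step where the assumption $F_{\rm target}(\thv^*)\geq 1/2$ enters: it is the abstract analogue of the small-$\delta t$ condition $\delta t \leq 1/(2\lambda_{\rm max})$ that made the Lemma \ref{lem:bound_fidelity} bound informative in the time-evolution setting.

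The final stage is the combinatorial bookkeeping on $r$. The choice $\tilde\xi = -1$ is optimal in the inner minimization precisely when $k_+^{M-1} \geq 1/(1 + \Delta_{\thv^*})$; using $k_+^{M-1} \geq 1 - (M-1)r^2/3$ (Bernoulli, as in Eq.~\eqref{eq:proof-bound-kplus}) together with the loosening $1 - (M-1)r^2/3 > 1 - (M-1)r^2/(3r_0^2)$ valid for $r_0 \in (0,1)$, and then the bound $\Delta_{\thv^*} \geq 2F_{\rm target}(\thv^*) - 1$, rearranges into the hypothesis on $r$ in Eq.~\eqref{eq:region-for-th4}. Under that hypothesis one obtains $k_+^{M-1}(1 + \Delta_{\thv^*}) - 1 \geq (1 - r_0^2)(2F_{\rm target}(\thv^*) - 1)$, and combining with the standard Taylor-series estimate $c_+ - k_+^2 \geq 4r^4/45 - 16r^6/315$ produces the claimed variance bound. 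The asymptotic statement then follows by picking $r \in \Theta(1/\sqrt{M})$ and observing that $r_0$ and the prefactor $(2F_{\rm target}(\thv^*) - 1)^2$ are $\Omega(1)$ by assumption.

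The only place where genuine new thought (as opposed to cut-and-paste from Appendix \ref{app:variance}) is needed is the replacement of Lemma \ref{lem:bound_fidelity} by the orthonormal-basis argument driven by $F_{\rm target}(\thv^*)$, so this abstractification step is the main — albeit modest — obstacle. A subtlety worth flagging in the write-up is that, just as in Theorem \ref{oldsimplifiedR}, the product-state assumption on $\rho_0$ is used only to guarantee the orthogonality of $|\phi_1\rangle$ and $|\phi_2\rangle$; it can be weakened to the condition that the first gate interacts non-trivially with the loss, at the price of a slightly more involved bookkeeping as already discussed for the time-evolution case.
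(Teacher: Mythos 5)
Your proposal is correct and follows essentially the same route as the paper's proof: the paper likewise restates Proposition~\ref{prop:variance-lower-bound} with $\rho_{(\thv^*,\delta t)}$ replaced by $|\psi_{\rm target}\rangle\langle\psi_{\rm target}|$, substitutes the orthonormal-basis argument (driven by $F_{\rm target}(\thv^*)$) for Lemma~\ref{lem:bound_fidelity} to obtain $\Delta_{\thv^*}\geq 2F_{\rm target}(\thv^*)-1$, and then repeats the $k_+^{M-1}$ bookkeeping verbatim. The only point worth flagging is that the honest rearrangement of $1-(M-1)r^2/(3r_0^2)\geq 1/(2F_{\rm target}(\thv^*))$ yields $r^2\leq \frac{2F_{\rm target}(\thv^*)-1}{2F_{\rm target}(\thv^*)}\frac{3r_0^2}{M-1}$, so the prefactor $\frac{1+2F_{\rm target}(\thv^*)}{2F_{\rm target}(\thv^*)}$ in Eq.~\eqref{eq:region-for-th4}, which you reproduce, is a sign slip inherited from the theorem statement rather than a flaw in your argument.
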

We note that the proof of Theorem~\ref{th:substantial_gradient_app} largely follow the proof steps of Theorem~\ref{thm:variance-lower-bound}. The key difference is here we explicitly assume the large fidelity condition rather than enforcing it with the small time-step condition.
Lastly, we note that while the convexity and adiabatic minimum results are also expected to carry over to other iterative methods, proving this rigorously may depend on the specific details of an iterative method we are interested in. Nevertheless, we argue below that there exist iterative tasks namely unitary learning that all of our three theoretical results can be rigorously shown to apply at the same time. 

\subsubsection{Learning an unknown target unitary via a variational approach}\label{app:extension-variational-u} 
To illustrate how the extension may work specifically in a different context, we consider a task of learning the unknown target unitary $e^{- i H t}$ with some parametrized circuit $U(\thv)$~\cite{cirstoiu2020variational, mizuta2022local,khatri2019quantum}. Here one learns the unitary itself rather than its effect on a fixed initial state. Similar to what have been done so far, we could aim to achieve $U(\thv^*_{\rm opt}) \approx e^{- i H t}$ with some optimal parameters $\thv^*_{\rm opt}$ by breaking the target unitary into smaller time steps $e^{- i H t} = \prod_{k=1}^N e^{- i H \delta t}$ and gradually learning it in an iterative manner. 

Now, we show that our results obtained for learning the dynamics with some fixed input are also directly applied to this iterative approach. To see this, consider one form of the loss function for a given iteration
\begin{equation}
    \LC_{\rm uni}(\thv) = 1-\frac{1}{4^n}\left|\Tr[U^\dagger(\thv) U_{\delta t}(\thv^*)]\right|^2  \;,
\end{equation}
where $U_{\delta t}(\thv^*) = e^{-iH\delta t} U(\thv^*)$ with $\thv^*$ being the optimal parameters from the previous iteration. 

Crucially, as shown in Appendix \ref{app:learning_unitary}, this loss function can be re-written in terms of the loss function in Eq.~\eqref{eq:loss} on the $2n$-qubit composite system. That is, we have
\begin{align}
     \LC_{\rm uni}(\thv) = 1-|\bra{\Phi_+} \widetilde{U}(\thv) e^{ i H\otimes \1 \delta t}  \widetilde{U}(\thv^*)  \ket{\Phi_+}|^2
\end{align}
where $ \widetilde{U}(\thv) = U(\thv) \otimes \1$, and $ \ket{\Phi_+} = \bigotimes_{j=1}^n |\phi_+\rangle_{j}$ is an entangled state on the composite system with $|\phi_+\rangle_j = \frac{1}{\sqrt{2}}(|00\rangle_j + |11\rangle_j)$ being the Bell state on the $j^{\rm th}$ qubits of two subsystems. Hence, by choosing $U(\thv)$ as in Eq.~\eqref{eq:circuit} with $\sigma_1$ such that $\Tr[|\Phi_+\rangle \langle \Phi_+ | \vsigma_1 |\Phi_+ \rangle\langle \Phi_+| \vsigma_1] = 0$ (e.g., a Pauli-X operator on the first qubit $\sigma_1 = X_1$), Theorem~\ref{thm:variance-lower-bound} applies, ensuring the region with substantial gradients around the initialization. Crucially, note that since here our approach is to map the loss into exactly the same fidelity form studied in this work in Eq~\ref{eq:loss} (rather than abstractifying the proof steps), the other theoretical results namely guarantees on convexity and adiabatic minimum also follow in a similar way.

\subsubsection{Learning an unknown target unitary via quantum machine learning}\label{app:extension-qml-u}

Alternative to the approach described in the previous sub-section, learning the unknown unitary can be done through the quantum machine learning (QML). In this setting we do not have a direct access to the unknown unitary, but rather are given a training dataset of input and output states (after the dynamics). 
We then aim to learn the unknown unitary $e^{-iHt}$ with these training states. 
Particularly, in an iterative version, we assume for each iteration a $N_s$-sized training dataset of the form 
\begin{align}
    \SC = \{ U(\thv^*)|\psi_j\rangle, e^{-iH\delta t}U(\thv^*) |\psi_j\rangle \}_{j=1}^{N_s} \;,
\end{align}
where $\{ U(\thv^*)|\psi_j\rangle \}_{j=1}^{N_s}$ is a set of input states for that iteration and $\{ e^{-iH\delta t}U(\thv^*)|\psi_j\rangle \}_{j=1}^{N_s}$ is a set of corresponding output states for that iteration. In addition, we consider that each $|\psi \rangle$ is a product of random single-qubit stabilizer states i.e., $|\psi_j \rangle$ independently and uniformly drawn from $\{ |0\rangle, |1\rangle, |+\rangle, |-\rangle, |y+\rangle, |y-\rangle \}^{\otimes n}$. By choosing $\{ |\psi_j \rangle\}_{j=1}^{N_s}$ this way, it enables the out-of-distribution generalization result~\cite{caro2022outofdistribution}.

The loss function for a given iteration can be expressed as
\begin{align}\label{eq:loss_ml}
    \LC_{\rm QML} (\thv) &= \frac{1}{4N_s} \sum_{j=1}^{N_s} \| U(\thv)|\psi_j\rangle\langle \psi_j| U^\dagger(\thv)\nonumber \\
    & \,\,\,\;\;\;\;\,\,\,\;\; - e^{-iH\delta t} U(\thv^* )|\psi_j\rangle\langle \psi_j| U^\dagger(\thv^*) e^{iH\delta t}\|_1^2  \\
   & = 1 - \frac{1}{N_s} \sum_{j=1}^{N_s} |\langle \psi_j| U^\dagger(\thv) e^{-iH\delta t} U(\thv^*) |\psi_j \rangle |^2  \;,
\end{align}
where $\| \cdot \|_1$ is the Schatten one-norm, and we rewrite the loss in terms of the average fidelity to reach the final line. 

We can transform this loss function into the fidelity-type loss in which the previous results can be applied. By denoting,
\begin{equation}
    \rho_0  = \frac{1}{N_s}\sum_{j=1}^{N_s}\ketbra{\psi_j}\otimes\ketbra{\psi_j}
\end{equation}
the loss can be rewritten as
\begin{align}
    \LC_{\rm QML} (\thv) = 1-N_s \Tr[\widetilde{U}(\thv)\rho_0 \widetilde{U}(\thv)^\dagger e^{-i\delta t H\otimes \1 }\widetilde{U}(\thv^*)\rho_0 \widetilde{U}^\dagger(\thv^*)e^{i\delta tH\otimes\1}]
\end{align}
where $\widetilde{U}(\thv) = U(\thv)\otimes\1$, 
which may appear to be in the identical form of the loss in Eq.~\eqref{eq:loss} (up to a factor $N_s$). However, $\rho_0$ is a mixed state and additional technical steps, which are shown in details in Appendix~\ref{app:learning_unitary_qml}, are required to ensure that Theorem~\ref{thm:variance-lower-bound} can be applied. 
In addition, we show that by choosing $\vsigma_1$ to a Pauli operator the condition $\Tr[\rho_0\sigma_1\rho_0\sigma_1] = 0$ of Theorem~\ref{thm:variance-lower-bound} is fulfilled with high probability exponentially close to $1$. This is indeed a consequence of having $\{ |\psi_j\}_{j=1}^{N_s}$ as products of random single-qubit stabilizer states. Together, we have the guarantee of the substantial gradient region via Theorem~\ref{thm:variance-lower-bound} 
in this setting. Other theoretical results are also applicable by following similar arguments.

\subsection{Proofs of analytical results}\label{app:proof-extension}

\subsubsection{Proof of Theorem \ref{th:substantial_gradient_app}: A substantial gradient region}
\begin{proof}
We start this proof by stating the fact that Proposition \ref{prop:variance-lower-bound} can be slightly modified to work in this general setting. 

\begin{propositionb}[Rephrasing of Proposition~\ref{prop:variance-lower-bound}]\label{prop:variance-lower-bound-2}
Consider the loss function $\mf(\thv)$ as defined in Eq.~\eqref{eq:loss} and with an ansatz of the general form defined in Eq.~\eqref{eq:circuit} with $M$ parameters. The variance of $\mf(\thv)$ over the hypercube parameter region $\vol(\thv^*, r)$ around an optimal solution of the previous iteration $\thv^*$ can be bounded as 
	\begin{equation} \label{eq:prop1-exact-var-b}
		\Var_{\thv\sim\uni(\thv^*, r) }[\mf(\thv)]\geq  \ (c_+ - k_+^2) \min_{\Tilde{\xi}\in [-1,1]} \left( k_+^{M-1} \Delta_{\thv^*}  + (1-k_+^{M-1})\Tilde{\xi} \right)^2 ,
	\end{equation}
	where we have
	\begin{align}
		&c_+     :=  \mathbb{E}_{\alpha \sim\uni(0,r)} [ \cos^4{\alpha}]\;, \\           
		&k_+   := \mathbb{E}_{\alpha \sim\uni(0,r)} [ \cos^2{\alpha}] \; ,                                                               \\
		&\Delta_{\thv^*} := \Tr[(\rho_0 -\vsigma_1 \rho_0 \vsigma_1) U^{\dagger}\left(\vec{\theta^*}\right)\rho_{\rm target} U\left(\vec{\theta^*}\right)] \; .\label{eq:var-delta-key-2}
	\end{align}
	Here $\vsigma_1$ is the Pauli string associated with the first gate in the circuit $U(\thv)$ as defined in Eq.~\eqref{eq:circuit}, $\rho_0 = |\psi_0 \rangle\langle\psi_0|$ is an initial state before the time evolution and $\rho_{\rm target}= \ketbra{\psi_{\rm target}}$ with $H$ being the underlying Hamiltonian of the quantum dynamics. 
\end{propositionb}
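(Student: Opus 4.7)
The plan is to reuse the rephrased variance lower bound in Proposition~\ref{prop:variance-lower-bound-2}, which replaces $\rho_{(\vtheta^*,\delta t)}$ with the generic target state $\rho_{\rm target}$, and then establish the needed bound on $\Delta_{\thv^*} = \Tr[(\rho_0 - \vsigma_1 \rho_0 \vsigma_1) U^\dagger(\thv^*) \rho_{\rm target} U(\thv^*)]$ directly from the overlap hypothesis $F_{\rm target}(\thv^*) \geq 1/2$, without any appeal to a small time-step. First I would invoke the rephrased proposition to obtain $\Var_{\thv\sim\uni(\thv^*, r)}[\mf(\thv)] \geq (c_+ - k_+^2) \min_{\tilde{\xi}\in[-1,1]} (k_+^{M-1} \Delta_{\thv^*} + (1 - k_+^{M-1})\tilde{\xi})^2$, so that the task reduces to (i) lower-bounding $\Delta_{\thv^*}$ in terms of $F_{\rm target}(\thv^*)$, and (ii) choosing $r$ so that the minimization over $\tilde{\xi}$ is controlled.

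For (i), I would reproduce verbatim the orthonormal-basis construction from the proof of Theorem~\ref{oldsimplifiedR} but with $\rho_{\rm target}$ in place of $\rho_{(\vtheta^*,\delta t)}$. Because $\vsigma_1$ is chosen so that $\Tr[\rho_0 \vsigma_1 \rho_0 \vsigma_1] = 0$, the pure states $\ketbra{\phi_1} = U(\thv^*)\rho_0 U^\dagger(\thv^*)$ and $\ketbra{\phi_2} = U(\thv^*) \vsigma_1 \rho_0 \vsigma_1 U^\dagger(\thv^*)$ are mutually orthonormal and extend to a full basis $\{\ketbra{\phi_i}\}_{i=1}^{2^n}$. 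Writing $\Delta_{\thv^*} = F_{\rm target}(\thv^*) - \Tr[\ketbra{\phi_2}\rho_{\rm target}]$ and then using $\Tr[\ketbra{\phi_2}\rho_{\rm target}] \leq \sum_{i\geq 2}\Tr[\ketbra{\phi_i}\rho_{\rm target}] = 1 - F_{\rm target}(\thv^*)$ (by completeness and positivity), I obtain the clean lower bound $\Delta_{\thv^*} \geq 2 F_{\rm target}(\thv^*) - 1$. The assumption $F_{\rm target}(\thv^*) \geq 1/2$ is precisely what makes this non-negative and hence what replaces the time-step constraint of Theorem~\ref{oldsimplifiedR}.

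For (ii), I would show $\tilde{\xi} = -1$ realizes the minimum, reducing the bound to $(c_+ - k_+^2)(k_+^{M-1}(\Delta_{\thv^*}+1) - 1)^2$. This requires $k_+^{M-1}\Delta_{\thv^*} \geq 1 - k_+^{M-1}$; using the standard trigonometric bound $k_+^{M-1} \geq 1 - (M-1)r^2/3 > 1 - (M-1)r^2/(3r_0^2)$ for $0 < r_0 < 1$, together with $\Delta_{\thv^*} \geq 2F_{\rm target}(\thv^*) - 1$, a rearrangement gives the sufficient condition on $r$ appearing in Eq.~\eqref{eq:region-for-th4}. Substituting back, bounding $c_+ - k_+^2 \geq 4r^4/45 - 16r^6/315 = (4r^4/45)(1 - 4r^2/7)$ from the Taylor expansion of the integrals defining $c_+$ and $k_+$, and using the condition on $r$ to replace the remaining $(M-1)r^2/(3r_0^2)$ terms by $(1-r_0^2)$ factors, yields exactly the stated bound in Eq.~\eqref{eq:prop1-exact-var-a}.

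Finally, choosing $r\in\Theta(1/\sqrt{M})$ so that the hypothesis is satisfied with constant $r_0$, the bracketed factor $(1 - r_0^2)(2F_{\rm target}(\thv^*) - 1)$ and $(1 - 4r^2/7)$ are both $\Theta(1)$, while $r^4 \in \Theta(1/M^2)$, giving $\Var \in \Omega(1/M^2)$. The main obstacle is essentially bookkeeping: I need to verify that the orthonormal-basis argument goes through for an arbitrary pure target state (which it does, since it never used any property of $e^{-iH\delta t}U(\vtheta^*)\rho_0 U^\dagger(\vtheta^*)e^{iH\delta t}$ beyond its being a normalized pure state) and then carefully track the algebra so that the rearrangement produces exactly the condition on $r$ and the prefactor $(1-r_0^2)(2F_{\rm target}(\thv^*)-1)$ claimed in the statement. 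Because the entire argument parallels the proof of Theorem~\ref{oldsimplifiedR}, no new analytical tool is needed; the novelty is simply in repackaging the smallness of $\delta t$ as the overlap hypothesis, which makes the theorem applicable to any iterative setting with a fidelity-type loss.
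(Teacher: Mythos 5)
Your proposal does not actually prove the stated proposition: its first step is to ``invoke the rephrased proposition,'' which is precisely the statement you are asked to establish, and essentially everything that follows --- the bound $\Delta_{\thv^*} \geq 2F_{\rm target}(\thv^*)-1$ via the orthonormal-basis construction, the verification that $\tilde{\xi}=-1$ attains the minimum, the sufficient condition on $r$, and the final $\Omega(1/M^2)$ scaling --- is the proof of the surrounding Theorem~\ref{th:substantial_gradient_app}, not of the proposition. The proposition consists only of the exact inequality \eqref{eq:prop1-exact-var-b}: no overlap hypothesis, no constraint on $r$, and no asymptotic scaling appear in its statement, so those parts of your argument, while correct, are aimed at a different target.

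What the proposition itself requires is the observation that the proof of Proposition~\ref{prop:variance-lower-bound} --- the recursive integration over $\alpha_M, \alpha_{M-1}, \dots, \alpha_1$ that produces $(c_+ - k_+^2)\min_{\tilde{\xi}}\bigl(k_+^{M-1}\Delta_{\thv^*} + (1-k_+^{M-1})\tilde{\xi}\bigr)^2$ --- nowhere uses the specific form $e^{-iH\delta t}U(\thv^*)\rho_0 U^\dagger(\thv^*)e^{iH\delta t}$ of the state being matched; it only uses that it is a quantum state. Hence the identical derivation goes through with $\rho_{\rm target}$ substituted and $\Delta_{\thv^*}$ redefined accordingly, which is exactly how the paper disposes of it (in one sentence). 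You do make an observation of this flavor in your closing paragraph, but you attach it to the orthonormal-basis argument, which belongs to the proof of the theorem's lower bound on $\Delta_{\thv^*}$ and plays no role in the proposition. So the gap is one of mistargeting rather than of mathematical substance --- your reconstruction of Theorem~\ref{th:substantial_gradient_app} matches the paper's --- but read as a proof of the proposition as stated, it is circular, and the one substantive step it needs is only gestured at and placed in the wrong part of the argument.
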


The rephrasing of the proposition stresses the fact that this lower-bound works for an arbitrary target state and hence we could replace the time-evolved state $\rho_{\thv^*, \delta t}$ with $\rho_{\rm target}$.
This lower-bound in itself, though, is not meaningful, as it does not elucidate any scaling. Therefore, the remaining part of this proof will be devoted to prove the scaling of the bound. To do so, we will follow the same structure as the proof of Theorem \ref{thm:variance-lower-bound}.

The proof of this theorem is completely equivalent to the proof of Theorem \ref{thm:variance-lower-bound} from Eq.~\eqref{eq:proof-new1} up to Eq.~\eqref{eq:basis-phi2} (simply substituting $\rho_{(\thv^*, \delta t)}\to \rho_{\rm target}$) and hence we will not rewrite it here. 
Instead we continue from that point onward. This means that we have to start by showing an equivalent condition to Eq.~\eqref{eq:proof-coro1-2} but for a general target state. We can do this as follows. 

Under the same assumption as in Theorem \ref{thm:variance-lower-bound} that $\Tr[\rho_0\sigma_1\rho_0\sigma_1]=0$, we can see that
\begin{align}\label{eq:proof-coro1-2-th4}
 \Tr\left[\left( \rho_0 - \vsigma_1 \rho_0 \vsigma_1 \right) U^{\dagger}(\vtheta^*)   \rho_{\rm target}  U(\vtheta^*) \ \right] 
    & = 
   F_{\rm target}(\thv^*) - \Tr\left[|\phi_2 \rangle\langle \phi_2| \rho_{\rm target} \right] \\
    &\geq F_{\rm target}(\thv^*) - \sum_{i=2}^{2^n}  \Tr\left[  |\phi_i \rangle \langle \phi_i |\rho_{\rm target}\right] \\
    & = 2F_{\rm target}(\thv^*) - 1\, , 
\end{align}
where we denote $ F_{\rm target}(\thv^*) := \left| \langle \psi(\thv^*) | \psi_{\rm target}\rangle \right|^2$. The first equality is obtained by writing the first term in the fidelity form and writing the second term in $|\phi_2 \rangle\langle \phi_2|$ in Eq.~\eqref{eq:basis-phi2}, in the first inequality we include terms corresponding to other basis (which holds since $\Tr[\rho |\phi_i \rangle\langle \phi_i|] \geq 0$ for any $\rho$ and $|\phi_i \rangle\langle \phi_i|$). The second equality is from the completeness of the basis $\sum_{i=1}^{2^n} |\phi_i \rangle\langle \phi_i | = \1$.

Similarly to Eq.~\eqref{eq:proof-coro1-3}, we can obtain the condition 
\begin{equation}
    \frac{1}{2F_{\rm target}(\thv^*)}\geq \frac{1}{1 + \Delta_{\thv^*}}
\end{equation}
and therefore we can satisfy the equivalent condition of Eq.~\eqref{eq:condition-perturbation2}. Indeed, similarly to the proof of Theorem \ref{thm:variance-lower-bound} if we enforce that $k_+^{M-1}\geq \frac{1}{2F_{\rm target}(\thv^*)}$ then the equivalent condition to the one in Eq.~\eqref{eq:condition-perturbation2} is achieved. It is important to stress that if we have $F_{\rm target}(\thv^*)<1$ this implies that $\frac{1}{2F_{\rm target}(\thv^*)}>1$. Therefore in that case the condition $k_+^{M-1}\geq \frac{1}{2F_{\rm target}(\thv^*)}$ cannot be fulfilled. This is equivalent to the condition in Eq.~\eqref{eq:min_overlap} required in Theorem \ref{th:substantial_gradient_app}.

Now we can simplify the aforementioned condition, by finding a lower-bound in $k_+^{M-1}$ as done in Eq.~\eqref{eq:proof-coro1-perturb-rearrange} (the bound on $k_{+}^{(M-1)}$ is found in Eq.~\eqref{eq:proof-coro1-0}). That is for $r_0\in(0,1)$
\begin{equation}\label{eq:proof-th4-rearange}
    1 - \frac{(M-1)r^2}{3r^2_0} \geq  \frac{1}{2F_{\rm target}(\thv^*)} \Rightarrow k_+^{M-1}\geq \frac{1}{2F_{\rm target}(\thv^*)}
\end{equation}
By rearranging the inequality in Eq.~\eqref{eq:proof-th4-rearange}, we have the perturbation regime of $r$ to be similar to Eq.~\eqref{eq:proof-coro1-perturb} for Theorem \ref{thm:variance-lower-bound}. That is, in this case we find
\begin{equation}\label{eq:proof-coro1-perturb-th4}
    r^2\leq \frac{1+2F_{\rm target}(\thv^*)}{2F_{\rm target}(\thv^*)}\frac{3r_0^2}{M-1}
\end{equation}
as stated in Eq.~\eqref{eq:region-for-th4}.

Finally, as done in the proof of Theorem \ref{thm:variance-lower-bound}, we can just put this results back into the lower-bound obtained in Proposition \ref{prop:variance-lower-bound-2}. Following from Eq.~\eqref{eq:proof-coro1-1} 
\begin{align} \label{eq:prop1-exact-var-c}
		\Var_{\thv\sim\uni(\thv^*, r) }[\mf(\thv)]\geq & (c_+ - k_+^2)\left( k_+^{M-1} (\Delta_{\thv^*} +1)-1\right)^2 \\
        \geq & (c_+ - k_+^2)\left( k_+^{M-1} 2F_{\rm target}(\thv^*)-1\right)^2 \\
        \geq & (c_+ - k_+^2)\left[\left( 1 - \frac{(M-1)r^2}{3}\right) 2F_{\rm target}(\thv^*)-1\right]^2 \\
        \geq & (c_+ - k_+^2)\left[\left( 1 - r_0^2\right) \left[2F_{\rm target}(\thv^*)-1\right]\right]^2 \\
         \geq &  \frac{4 r^4}{45} \left(1 - \frac{4r^2}{7} \right)\left[\left( 1 - r_0^2\right) \left[2F_{\rm target}(\thv^*)-1\right]\right]^2 \, 
\end{align}
where the second inequality is due to Eq.~\eqref{eq:proof-coro1-2-th4}, the third inequality is by bounding $k_+^{M-1}$ with Eq.~\eqref{eq:proof-bound-kplus} and in the next inequality we explicitly use the perturbation regime of $r$ in Eq.~\eqref{eq:proof-coro1-perturb-th4}. To reach the last inequality, we directly bound $c_+ - k_+^2  = \frac{1}{2r} \int_{-r}^{r} d\alpha \cos^4(\alpha) - \left(\frac{1}{2r} \int_{-r}^{r} d\alpha \cos^2(\alpha) \right)^2 \geq \frac{4 r^4}{45} - \frac{16r^6}{315}$ by expanding it in the series and keeping the terms which result in the lower bound.
\end{proof}

\subsubsection{Proof of rewriting the loss in the learning an unknown unitary via the variational approach.}\label{app:learning_unitary}
In this subsection, we analytically show that the loss in the learning an unknown unitary task can be recast into the form such that the theoretical results developed in this work can be applied. Recall that for a given iteration one valid loss function for this task is of the form
\begin{equation}\label{eq:loss-unitary-learning}
    \LC_{\rm uni}(\thv) = 1-\frac{1}{4^n}\left|\Tr[{U}^\dagger(\thv)U_{\delta t}(\thv^*) ]\right|^2
\end{equation}
where $U_{\delta t}(\thv^*) = e^{-iH\delta t} U(\thv^*)$ is the target unitary for the given iteration. 
Our goal in this sub-section is to rewrite this loss to be in the same form as the loss in Eq.~\eqref{eq:loss} which is mainly studied in this work.

As shown in Ref~\cite{mizuta2022local}, the loss in Eq.~\eqref{eq:loss-unitary-learning} can be written as the fidelity-type loss on the $2n$-qubit composite system $\HC_A \otimes \HC_B$
\begin{equation}
    \LC_{\rm uni}(\thv) = 1- \Tr[\Phi_+^{(A,B)} \left(U(\thv)\otimes \left[e^{-i\delta t H}U^\dagger(\thv^*)\right]^*\right)\Phi_+^{(A,B)}\left(U(\thv^*)\otimes \left[e^{i\delta t H}U^\dagger(\thv)\right]^* \1\right)] \;, \label{eq:loss-unitary-learning-0}
\end{equation}
where both the ansatz and the target unitary are in space $\mathcal{H}_A$, i.e. $U(\thv)\in\mathcal{H}_A$ and $e^{-itH}\in\mathcal{H}_B$. Furthermore $\Phi_+^{(A,B)}\in \mathcal{H}_A\otimes\mathcal{H}_B$ is a state which is of the form
\begin{equation}
    \Phi_+^{(A,B)}  = \bigotimes_{j=1}^n\ketbra{\phi_+^{(A_jB_j)}}
\end{equation}
with $ \ket{\phi_+}$ being Bell state 's entangling the $j$-th qubit of the two subsystems $\mathcal{H}_A,\mathcal{H}_B$
\begin{equation}
    \ket{\phi_+^{(A_jB_j)}} = \frac{1}{\sqrt{2}}\left(\ket{0^{(A_j)}0^{(B_j)}} + \ket{1^{(A_j)}1^{(B_j)}}\right)\, .
\end{equation}

In other words, we can write $\Phi^{(A,B)}_+ = \ketbra{\1}$ where $\ket{\1}$ is the vectorized and normalized version of the $\1$ operator (see Ref. \cite{mele2023introduction}), we can apply the property
\begin{equation}
    A\otimes B^* \ket{\Phi_+} = AB^\dagger\otimes\1\ket{\Phi_+}
\end{equation}
for two general matrices $A\in\mathcal{H}_A,B\in\mathcal{H}_B$ as long as $\dim \mathcal{H}_A = \dim \mathcal{H}_B $. Therefore, we can apply this to Eq.~\eqref{eq:loss-unitary-learning-0} to rewrite the loss function as 
\begin{equation}
    \LC_{\rm uni}(\thv) = 1- \Tr[\Phi_+^{(A,B)} \left(\left[U(\thv)e^{-i\delta t H}U^\dagger(\thv^*)\right]\otimes \1\right)\Phi_+^{(A,B)}\left(\left[U(\thv^*)e^{i\delta t H}U^\dagger(\thv)\right]\otimes \1\right)] \;, \label{eq:loss-unitary-learning-1}
\end{equation}
By further rearranging the trace term in Eq.~\eqref{eq:loss-unitary-learning-1}, the loss can be rewritten as.
\begin{align}
     \mf(\thv) &= 1 - \left|\expval{\left[U(\thv)e^{-i\delta t H}U(\thv^*)\right]\otimes\1} {\Phi_+^{(A,B)}}\right|^2 \\
     &= 1-\left|\expval{\widetilde{U}(\thv)e^{-i\delta t H\otimes\1}\widetilde{U}^\dagger(\thv^*)} {\Phi_+^{(A,B)}}\right|^2
\end{align}
where we have defined $\Tilde{U}(\thv) = U(\thv)\otimes\1\in\mathcal{H}_A\otimes\mathcal{H}_B$. Hence, the loss is now in the same form as in Eq.~\eqref{eq:loss}. Furthermore, the condition for $\Tr[\Phi_+ \vsigma_1 \Phi_+ \vsigma_1] = 0$ is satisfied by choosing $\sigma_1 = \sigma_x$.

\subsubsection{Proof for the task of learning an unknown target unitary via quantum machine learning}\label{app:learning_unitary_qml}
We start by rewriting the loss function in Eq.~\eqref{eq:loss_ml} in a more convenient way on the $2n$-qubit composite system. By using the fact that $\langle \psi_j | \psi_{j'} \rangle = \delta_{j,j'}$ and denoting $\rho_0 = \frac{1}{N_s}\sum_{j=1}^{N_s}\ketbra{\psi_j}\otimes\ketbra{\psi_j}$, we have
\begin{align}
    \LC_{\rm QML} (\thv) =&  1 - \frac{1}{N_s} \sum_{j=1}^{N_s} |\langle \psi_j| U^\dagger(\thv) e^{-iH\delta t} U(\thv^*) |\psi_j \rangle |^2  \\
    =& 1-N_s \Tr[\left(U(\thv)\otimes\1\right)\rho_0 \left(U(\thv)^\dagger\otimes\1\right) \left(\left[e^{-iH\delta t} U(\thv^*)\right]\otimes\1\right) \rho_0 \left(\left[U^\dagger(\thv^*)e^{iH\delta t}\right] \otimes\1 \right) ]\\
    =& 1 - N_s \Tr[\widetilde{U}(\thv) \rho_0 \widetilde{U}^\dagger(\thv) e^{- i \delta t (H \otimes \1)} \widetilde{U}(\thv^*) \rho_0 \widetilde{U}^\dagger(\thv^*) e^{ i \delta t (H \otimes \1)}] \label{eq:loss-qml-fidel}\;,
\end{align}
with $\widetilde{U}(\thv) = U(\thv) \otimes \1$. The loss in Eq.~\eqref{eq:loss-qml-fidel} is now in the same form as the loss in Eq.~\eqref{eq:loss} (up to a factor of $N_s$). However, since $\rho_0$ of the loss here is a mixed state, we have to redo a few proof steps to ensure that Theorem~\ref{thm:variance-lower-bound} is applied. 

First, we note that since the proof of Proposition \ref{prop:variance-lower-bound} does not make use of any pure state properties, it is directly applicable here. To show that Theorem \ref{thm:variance-lower-bound} applies, we have to revisit the steps in Eq.~\eqref{eq:extend1} to Eq.~\eqref{eq:proof-coro1-2}) and show that these steps still hold even with $\rho_0$ being mixed. Indeed, by assuming that $\Tr[\rho_0\sigma_1\rho_0\sigma_1] = 0$, we have the same exact steps
\begin{align}
 \Tr\left[\left( \rho_0 - \vsigma_1 \rho_0 \vsigma_1 \right) U^{\dagger}(\vtheta^*)   \rho_{(\vtheta^*,\delta t)}  U(\vtheta^*) \ \right] 
    & = 
    \Tr[\rho_{(\vtheta^*,0)}, \rho_{(\vtheta^*, \delta t)}] - \Tr\left[\sigma_1\rho_0\sigma_1 U^{\dagger}(\vtheta^*)   \rho_{(\vtheta^*,\delta t)}  U(\vtheta^*) \right] \\
    &\geq \Tr[\rho_{(\vtheta^*,0)}, \rho_{(\vtheta^*, \delta t)}] - \Tr\left[  (\1-\rho_{(\thv^*,0)})\rho_{(\vtheta^*,\delta t)}\right] \\
    & = 2\Tr[\rho_{(\vtheta^*,0)}, \rho_{(\vtheta^*, \delta t)}] - 1 \\
    & \geq 1 - 4 \lambda_{\rm max}^2 \delta t ^2 \;, 
\end{align}
where we note that the first inequality is from the fact that $\1-\rho_{0}-\sigma_1\rho_0\sigma_1$ is a positive semi-definite matrix (which holds because both $\rho_0, \sigma_1\rho_0\sigma_1$ are matrices with eigenvalues between $0$ and $1$ and are orthogonal between them). The rest of the proof follow identically to that of Theorem~\ref{thm:variance-lower-bound}.

\medskip

We now show that by choosing $\sigma_1$ to be any global Pauli operator the assumption that $\Tr[\rho_0\sigma_1\rho_0\sigma_1]=0$ is satisfied with the probability exponentially close to $1$ for products of random single-qubit stabilizer states i.e. $|\psi_j \rangle \sim {\rm Uniform}\left( \{ |0\rangle, |1 \rangle, |+\rangle, |-\rangle, |y+\rangle, |y-\rangle \}^{\otimes n}\right)$. 

Let us denote $|\psi_j \rangle = \bigotimes_{i=1}^n |\psi_j^{(i)}\rangle$ such that $|\psi_j^{(i)}\rangle$ is a random single-qubit stabilizer state on the $i^{\rm th}$ qubit and $\sigma_1 = \bigotimes_{i=1}^n \sigma_1^{(i)}$ such that $\sigma_1^{(i)}$ is a non-trivial single-qubit Pauli operator on the $i^{\rm th}$ qubit. Then, consider the following expression
\begin{align}
    \Tr[\rho_0 \sigma_1 \rho_0 \sigma_1] & = \frac{1}{N_s^2} \sum_{j=1}^{N_s} |\expval{\sigma_1}{\psi_j}|^2  \\
    & = \frac{1}{N_s} \sum_{j=1}^{N_s} \prod_{i=1}^n \left| \langle \psi_j^{(i)} |\sigma_1^{(i)} | \psi_j^{(i)}\rangle \right|^2 \;.
\end{align}
Each $\left| \langle \psi_j^{(i)} |\sigma_1^{(i)} | \psi_j^{(i)}\rangle \right|^2$ is non-zero only if $| \psi_j^{(i)}\rangle$ is an eigenstate of $\sigma_1^{(i)}$. Since each $ \psi_j^{(i)}\rangle$ is drawn uniformly from $\{ |0\rangle, |1 \rangle, |+\rangle, |-\rangle, |y+\rangle, |y-\rangle \}$, we have the probability of  $\left| \langle \psi_j^{(i)} |\sigma_1^{(i)} | \psi_j^{(i)}\rangle \right|^2 \neq 0$ is $1/3$. Hence, the probability of each term in the sum being non-zero vanishes exponentially with the number of qubits i.e.,
\begin{align}
    {\rm Pr}\left(\prod_{i=1}^n \left| \langle \psi_j^{(i)} |\sigma_1^{(i)} | \psi_j^{(i)}\rangle \right|^2 \neq 0 \right) = \frac{1}{3^n} \;.
\end{align}
By using the union bound, we obtain that the probability of the whole sum being zero is
\begin{align}
    {\rm Pr}\left( \Tr[\rho_0 \sigma_1 \rho_0 \sigma_1] = 0 \right) \geq 1 - \frac{N_s}{3^n} \;,
\end{align}
which vanishes exponentially for $N_s \in \OC(\poly(n))$.

\end{document}